\documentclass[11pt]{article}
\usepackage[a4paper]{geometry}
\geometry{verbose,tmargin=3cm,bmargin=2.6cm,lmargin=2cm,rmargin=2cm,headheight=1cm,headsep=1cm,footskip=1cm}

\usepackage{subcaption}

\usepackage{amssymb}
\usepackage{fix-cm}
\usepackage{authblk}
\usepackage{booktabs} 
\usepackage{indentfirst}
\usepackage{times}
\usepackage{graphicx}
\graphicspath{{Figs/}}
\usepackage{comment}
\usepackage{caption}
\usepackage[utf8]{inputenc} 
\usepackage[T1]{fontenc}  
\usepackage{url}   
\usepackage{booktabs}       
\usepackage{nicefrac}     
\usepackage{microtype}     
\usepackage{makecell}
\usepackage{xcolor}
\usepackage[framemethod=tikz]{mdframed}
\usepackage{mathtools}
\usepackage{amsmath,amsfonts,amssymb,amscd,xspace}
\usepackage{mathrsfs}
\usepackage{amsthm}
\usepackage{hyperref}   
\usepackage{cleveref}
\usepackage{extarrows}
\usepackage{appendix}
\usepackage{tocbasic}

\interdisplaylinepenalty=2500

\usepackage[shortlabels]{enumitem}
\usepackage{bbm}
\usepackage{bm}
\usepackage{booktabs}
\usepackage{authblk}

%% \BibTeX command to typeset BibTeX logo in the docs
\AtBeginDocument{%
  }

\usepackage{algorithm,algpseudocode}
\usepackage{tocloft} 

\usepackage{xcolor}

\usepackage{amssymb}
\usepackage{algorithmicx}
\usepackage{wrapfig}
\usepackage{graphicx}
\usepackage{multirow}
\graphicspath{{Figs/}}
\usepackage{comment}
\usepackage{caption}
\usepackage[utf8]{inputenc} 
\usepackage[T1]{fontenc}  
\usepackage{url}   
\usepackage{booktabs}   
\usepackage{nicefrac}   
\usepackage{microtype}     
\usepackage[ruled,vlined,onelanguage,algo2e]{algorithm2e}
\usepackage{makecell}
\usepackage[framemethod=tikz]{mdframed}
\usepackage{mathtools}
\usepackage{dsfont}
\usepackage{amsmath,amsfonts,amssymb,amscd,xspace}
\usepackage{subcaption}
\usepackage{amsthm}
\usepackage[numbers,sort&compress]{natbib}
\hypersetup{colorlinks={true},linkcolor={blue},citecolor=green}
\usepackage{dsfont}
\usepackage{extarrows}

\usepackage{mathrsfs}

\interdisplaylinepenalty=2500
\usepackage{bbm}
\usepackage{bm}
\usepackage{booktabs}
\usepackage{enumitem,kantlipsum}
\hypersetup{
    colorlinks=true,
    linkcolor=blue,
    citecolor=teal,
    filecolor=teal,
    urlcolor=cyan,
}
\usepackage{empheq}

\newtheorem{theorem}{Theorem}
\numberwithin{theorem}{section}
\newtheorem{lemma}{Lemma}

\numberwithin{corollary}{section}

\newtheorem{definition}{Definition}
\newtheorem{proposition}{Proposition}

\newtheorem{remark}{Remark}

\newcommand{\mI}{\mathcal{I}}
\newcommand{\mX}{\mathcal{F}}
\newcommand{\mY}{\mathcal{F}}
\newcommand{\alg}{\mathrm{ALG}}
\newcommand{\opt}{\mathrm{OPT}}
\newcommand{\con}{\beta}
\newcommand{\rob}{\gamma}
\newcommand{\lam}{\lambda}
\newcommand{\weak}{weakly\xspace}
\newcommand{\strong}{strongly\xspace}
\newcommand{\A}{A_{\omega}}
\newcommand{\weakly}{\text{weakly}-optimal\xspace}
\newcommand{\strongly}{\text{strongly}-optimal\xspace}
\newcommand{\B}{B}
\newcommand{\x}{x}
\newcommand{\y}{y}
\newcommand{\I}{I}
\newcommand{\CR}{\alpha}
\newcommand{\KD}{\textsc{KD}\xspace}
\newcommand{\KR}{\textsc{KR}\xspace}

\newcommand{\RSR}{\textsc{PRSR}\xspace}
\newcommand{\ratio}{\mathcal{R}}
\newcommand{\prob}{\pi}
\newcommand{\eq}{\pi^{eq}}
\newcommand{\opA}{\textsc{Operation A}\xspace}
\newcommand{\opB}{\textsc{Operation B}\xspace}

\newcommand{\str}{\textsc{Strong}\xspace}
\newcommand{\Del}{\Delta}
\newcommand{\eps}{\epsilon}
\newcommand{\pri}{p^*}
\newcommand{\du}{d^*}
\newcommand{\OMS}{\textsc{PST}\xspace}
\newcommand{\mS}{\mathcal{S}}
\newcommand{\mL}{\mathcal{L}}
\newcommand{\ep}{\epsilon}

\newcommand{\mP}{\mathscr{P}}
\newcommand{\mA}{\mathcal{A}}
\newcommand{\mU}{\mathcal{U}}
\newcommand{\tP}{\textsc{P}}
\newcommand{\tQ}{\textsc{Q}}

\newcommand{\PDSR}{\textsc{PDSR}}
\newcommand{\PRSR}{\textsc{PRSR}}

\newcommand{\PST}{\textsc{PST}}

% Define new counter
\newcounter{mycounter}

% Create a command to increment and label the counter
\newcommand{\myproblem}[2]{%
  \refstepcounter{mycounter}% Increment counter and make it referenceable
  \textbf{Problem \themycounter #1:} #2
}

% \makeatletter
% \newenvironment{myprocedure}[1][htb]{%
%     \renewcommand{\ALG@name}{Operation}% Update algorithm name
%     \begin{algorithm}[#1]%
%     }{\end{algorithm}}
% \makeatother

% \renewcommand{\thealgorithm}{\Alph{algorithm}}

\title{Prediction-Specific Design of Learning-Augmented Algorithms}

\author[1]{Sizhe Li\thanks{\textit{Email:}~\href{mailto:lisizhe@link.cuhk.edu.cn}{lisizhe@link.cuhk.edu.cn}}}
\author[2]{Nicolas Christianson\thanks{\textit{Email:}~\href{mailto:christianson@stanford.edu}{christianson@stanford.edu}}}
\author[1]{Tongxin Li\thanks{\textit{Email:}~\href{mailto:litongxin@cuhk.edu.cn}{litongxin@cuhk.edu.cn}}}

\affil[1]{School of Data Science, The Chinese University of Hong Kong, Shenzhen}
\affil[2]{Department of Management Science and Engineering, Stanford University}

%\affil[1]{\text{The Chinese University of Hong Kong, Shenzhen}}
%\affil[2]{\text{Stanford University}}

\date{\vspace*{-1cm}}
\begin{document}
\maketitle

\linespread{1.1}

\begin{abstract}
\emph{Algorithms with predictions} has emerged as a powerful framework to combine the robustness of traditional online algorithms with the data-driven performance benefits of machine-learned (ML) predictions. However, most existing approaches in this paradigm are overly conservative, {as they do not leverage problem structure to optimize performance in a prediction-specific manner}. In this paper, we show that such prediction-specific performance criteria can enable significant performance improvements over the coarser notions of consistency and robustness considered in prior work. Specifically, we propose a notion of \emph{strongly-optimal} algorithms with predictions, which obtain Pareto optimality not just in the worst-case tradeoff between robustness and consistency, but also in the prediction-specific tradeoff between these metrics. We develop a general bi-level optimization framework that enables systematically designing strongly-optimal algorithms in a wide variety of problem settings, and we propose explicit strongly-optimal algorithms for several classic online problems: deterministic and randomized ski rental, and one-max search. Our analysis reveals new structural insights into how predictions can be optimally integrated into online algorithms by leveraging a prediction-specific design. To validate the benefits of our proposed framework, we empirically evaluate our algorithms in case studies on problems including dynamic power management and volatility-based index trading. Our results demonstrate that prediction-specific, strongly-optimal algorithms can significantly improve performance across a variety of online decision-making settings.

\end{abstract}

\section{Introduction}

Online algorithms operate in environments where decisions must be made sequentially without full knowledge of future inputs. Traditionally, these algorithms are designed to guarantee robust performance on adversarial problem instances, providing competitive ratio bounds that hold under worst-case inputs~\cite{Borodin2005}. While theoretically robust, this adversarial perspective often yields overly pessimistic strategies that can underperform in the real world, where worst-case instances are uncommon. Recent work on \emph{algorithms with predictions} addresses this limitation by integrating machine-learned (ML) predictions into classical online decision-making frameworks~\cite{Kumar2018,Lykouris2021}. This learning-augmented algorithm design paradigm has been applied to a variety of online problems including ski rental, caching, and metrical task systems \cite{Kumar2018,Lykouris2021,Antoniadis2020} and applications including GPU power management~\cite{antoniadis2021dpm}, battery energy storage system control~\cite{li2024lado}, carbon-aware workload management~\cite{lechowicz2023opr,lechowiczLearningAugmentedCompetitiveAlgorithms2025a}, and electric vehicle (EV) charging~\cite{lechowicz2024ocs,sun2021omkp}---with algorithms that achieve near-optimal performance when predictions are accurate, while maintaining robust worst-case guarantees when they are not. 

Existing approaches to learning-augmented algorithm design typically evaluate performance through a so-called \emph{consistency-robustness tradeoff}. In particular, consistency measures worst-case algorithm performance when the prediction is perfectly accurate, while robustness measures the worst-case competitive ratio over all possible predictions and instances. 

Importantly, both metrics are inherently worst-case in nature: consistency reflects the algorithm’s performance under the least favorable instance with the least favorable, yet accurate prediction, and robustness measures performance under the least favorable (inaccurate) prediction and instance. Because they are worst-case, neither of these metrics measures whether algorithms can achieve better performance for specific predictions. 

As such, while much prior work has sought to design algorithms that obtain the optimal tradeoff between consistency and robustness (e.g., \cite{Kumar2018,Sun2021}), and some existing algorithms have sought to improve performance by optimizing decisions in a prediction-specific manner (e.g., \cite{lechowiczChasingConvexFunctions2024,lechowiczLearningAugmentedCompetitiveAlgorithms2025a}), no prior work has considered the question of how to design online algorithms with \textit{prediction-specific} guarantees on optimality. 

Motivated by this gap and the potential to improve the performance of algorithms with predictions, our work explores the following question:

\begin{center}
\textit{How can we design algorithms that achieve Pareto-optimal tradeoffs between consistency and robustness that are tailored to specific prediction values?}\end{center}

To this end, we introduce a prediction-specific framework for algorithm design, enabling the development of explicit and tractable online algorithms that adapt to each prediction's characteristics to ensure Pareto-optimal robustness and consistency for each individual prediction value. Instantiating this framework, we design explicit prediction-specific algorithms for the problems of \emph{deterministic and randomized ski rental} and \emph{one-max search}, two classic online problems with connections to real-world applications including TCP acknowledgment~\cite{karlin2001tcpack}, cloud cost management~\cite{ai2014multishop}, dynamic power management~\cite{antoniadis2021dpm}, and energy market operations~\cite{lee2021peakaware, lee2024energymarkets}.

\defcitealias{Kumar2018}{Kumar et al.}
\defcitealias{Sun2021}{Sun et al. }
\begin{table*}[t]
\centering
\caption{Comparison of our technical results on algorithm (weak and strong) optimality.}
\label{tab:optimality_results} 
\small 
\renewcommand{\arraystretch}{1.2} 
\begin{tabular}{l|l|l|l|l|l}
\specialrule{.09em}{.1em}{.1em} 
& \textbf{General} & \textbf{DSR} (\textsc{large} $b$) & \textbf{RSR} (\textsc{large} $b$)  & \textbf{OMS} & $\epsilon$-\textbf{OMS} \\
\hline
\textsc{Weak} & Algorithm~\ref{alg:meta} &  \KD \citetalias{Kumar2018} & \KR \citetalias{Kumar2018} & \citetalias{Sun2021} & $\qquad \backslash$ \\
\hline
\multirow{2}{*}{\textsc{Strong}}  & Algorithm~\ref{alg:meta}  &  \hyperref[alg:dsr]{\PDSR} & \hyperref[alg:rsr]{\PRSR} & 
\hyperref[alg:one-max search]{\PST}& 
\hyperref[alg:ep-one-max search]{$\ep$-Tolerant \PST}\\
\cline{2-6}
 & {\Cref{prop:oba}} & \Cref{thm:dsr} & {\Cref{thm:rsr}} & {\Cref{thm:oms}} & {\Cref{thm:tol_3}}\\
\hline 
\multicolumn{6}{l}{\textbf{Note:}\textit{ Algorithm~\ref{alg:meta}, \PDSR, \PRSR, \PST, and~$\ep$-Tolerant $\PST$ are results from this work. }} \\ 
\multicolumn{6}{l}{\textbf{DSR}/\textbf{RSR}\textit{=Deterministic/Randomized Ski Rental}, \textbf{OMS}\textit{=One-Max Search}, $\epsilon$-\textbf{OMS}\textit{=Error Tolerant} \textbf{OMS}} \\ 
\specialrule{.09em}{.1em}{.1em} 
\end{tabular}
\end{table*}

\begin{figure}[t]
  \centering
  \begin{minipage}[t]{0.32\textwidth}
    \centering
\includegraphics[width=\linewidth]{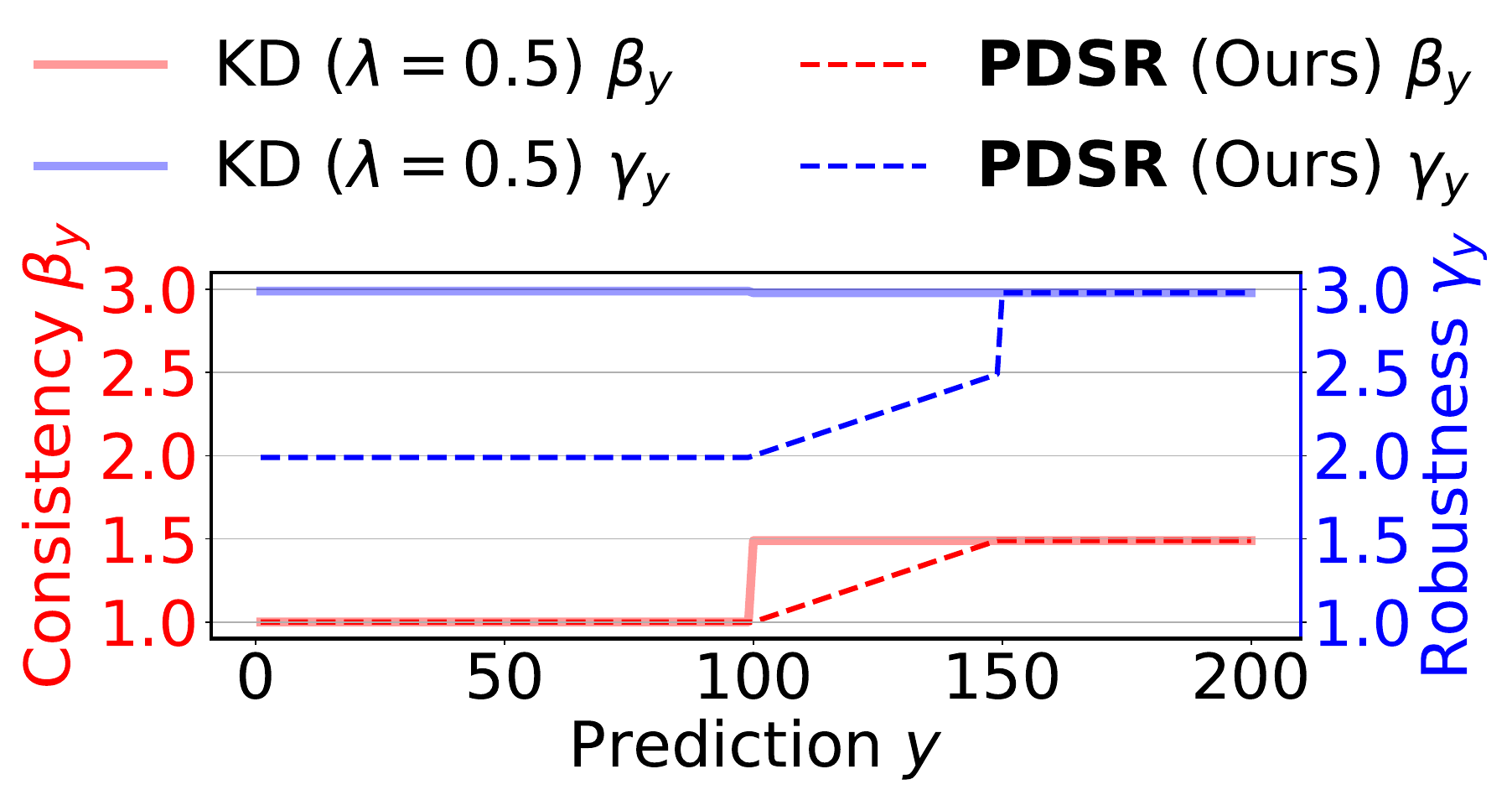}
    % \caption{\centering DSR \\($b = 100$, $\lam_{\KD} =0.5$)}
    \label{fig:dsr}
  \end{minipage}
  \hfill
  \begin{minipage}[t]{0.32\textwidth}
    \centering
    \includegraphics[width=\linewidth]{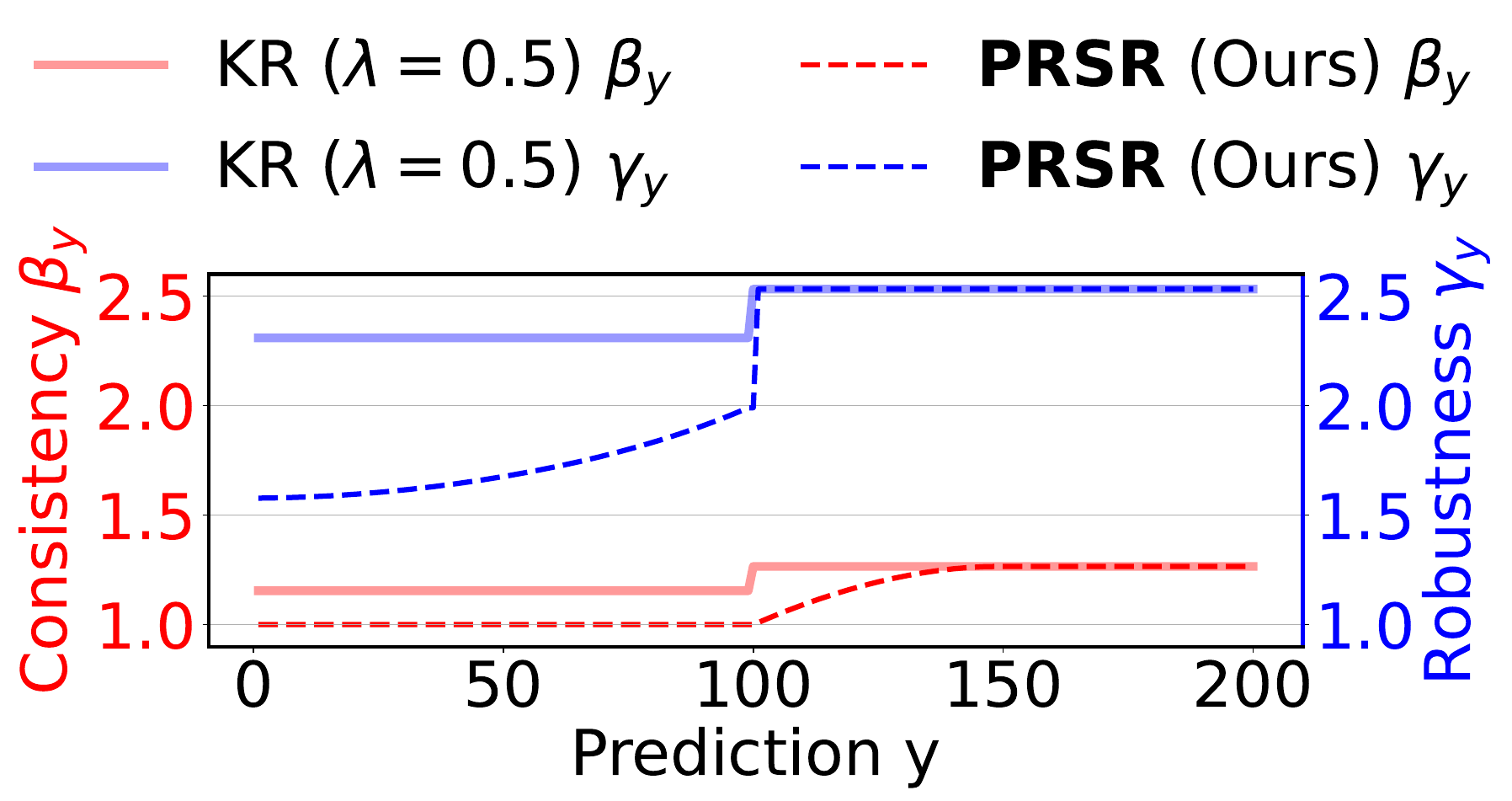}
    % \caption{\centering RSR \\ ($b=100$, $\lam_{\KR} = 0.5$)}
    \label{fig:rsr}
  \end{minipage}
  \hfill
  \begin{minipage}[t]{0.33\textwidth}
    \centering
    \includegraphics[width=\linewidth]{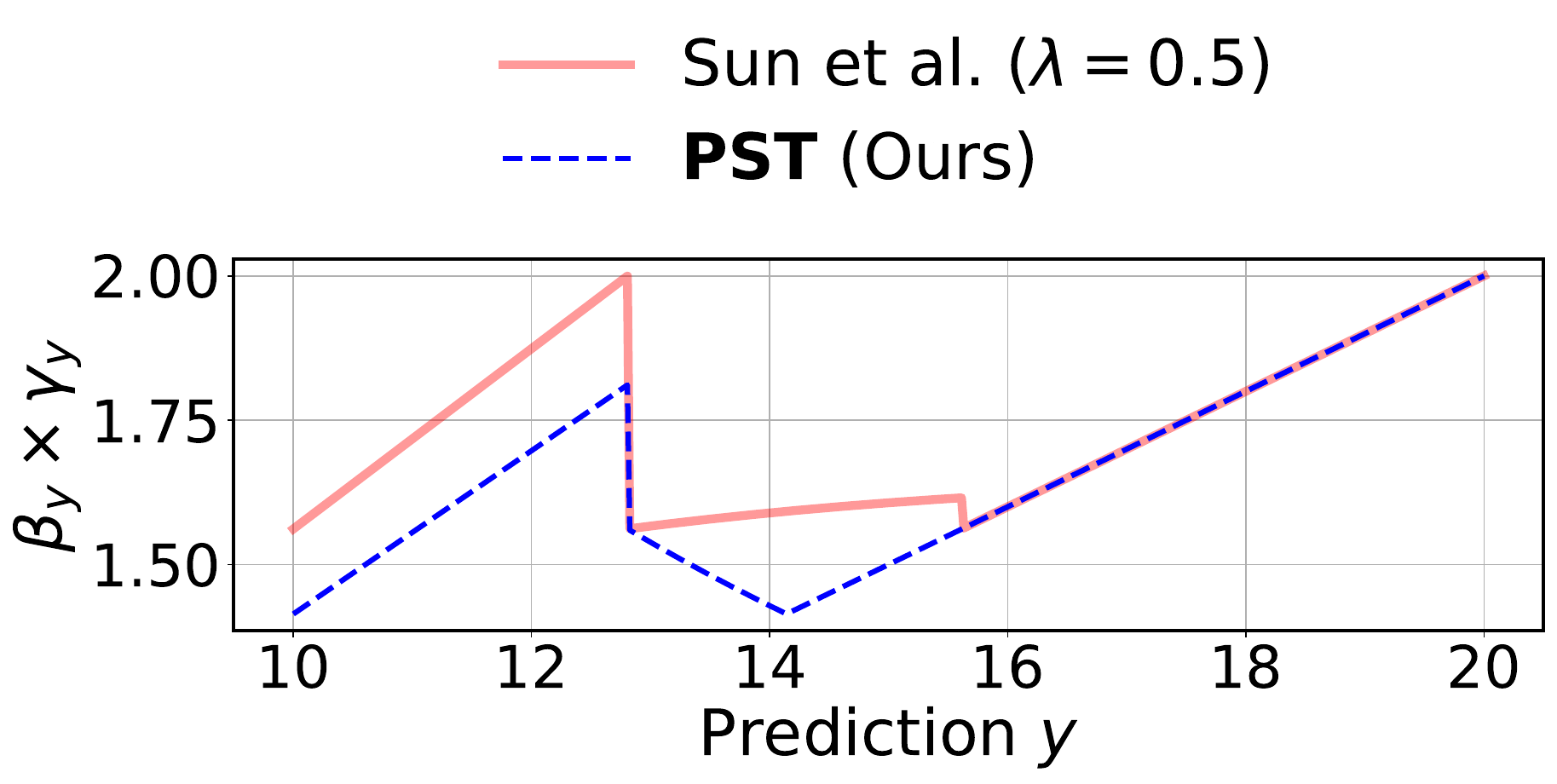}
    % \caption{\centering OMS \\ ($L = 10, U = 20, \lam_{\text{Sun}} = 0.5$)}
    \label{fig:oms}
  \end{minipage}
    
  \caption{\textbf{Prediction-specific} consistency $\con_y$ and robustness $\rob_y$ under different predictions $y$ for DSR with $b=100$ (\textsc{Left}), RSR with $b=100$ (\textsc{Middle}), and OMS with $L=10, U=20$ (\textsc{Right}).}
  \vspace{-10pt}
  \label{fig:comparison}
\end{figure}

\subsection{Contributions} 

The main contributions of this paper are as follows. 

\textbf{Framework and Theoretical Results.} We introduce a \emph{novel} prediction-specific framework for the design and analysis of learning-augmented algorithms. Specifically, \Cref{def:ps_con_rob} extends the classic notions of consistency and robustness by defining the prediction-specific consistency $\con_y$ and robustness $\rob_y$ for each possible prediction value $y$. Here, $\con_y$ evaluates the algorithm’s performance on the worst-case instance which is consistent with the prediction $y$, while $\rob_y$ measures its worst-case performance under prediction $y$. Furthermore, \Cref{def_strong} introduces the notion of \emph{strong optimality}: an algorithm is strongly optimal if it is Pareto-optimal not only in the classic sense (referred to as \emph{weak optimality}, see \Cref{def_weak}), but also in the $(\con_y, \rob_y)$ plane for \emph{every} prediction value $y$ in the prediction space $\mY$. This reframes the evaluation of algorithm performance from a single worst-case trade-off 
to a richer, per-prediction perspective, thereby enabling a more fine-grained analysis of algorithm behavior across different prediction values.

Under this new framework, we show that the existing weakly-optimal algorithms for several canonical online problems---\emph{deterministic ski rental}, \emph{randomized ski rental}, and \emph{one-max search} (see Table~\ref{tab:online_problems})---are not strongly-optimal (see Theorems~\ref{thm:kd}, \ref{thm:KR}, and \ref{thm:sun}). As such, we propose new  algorithms for these problems (Algorithms~\ref{alg:dsr}, \ref{alg:rsr}, and \ref{alg:one-max search}) and prove their strong optimality (see Theorems~\ref{thm:dsr}, \ref{thm:rsr}, and \ref{thm:oms}). Notably, our strongly-optimal algorithms can obtain significant performance improvements over prior weakly-optimal algorithms for certain prediction values; Table~\ref{tab:con_rob_ratio} summarizes the best improvement ratio of the consistency–robustness score ($\con_y \cdot \rob_y$) across all $y$ for each problem, reflecting the relative benefit of our algorithms over existing ones under the most favorable $y$. The parameter $\lambda$ represents some \textit{confidence} governing the tradeoff between robustness and consistency used in~\cite{Kumar2018, Sun2021}. 
Note that for both deterministic and randomized ski rental, the maximum improvement ratio is obtained as $\lambda \downarrow 0$, where it diverges to infinity with asymptotic order $\mathcal{O}(\lambda^{-1})$. In contrast, for one-max search, the maximum improvement ratio is $\sqrt{\theta}$, which is also attained as $\lambda \downarrow 0$. Consequently, our methods can yield substantial performance improvements for certain predictions and choices of the parameter $\lambda$.

We also show that there exist problems for which existing weakly-optimal algorithms in the literature are already strongly-optimal: in particular, we establish prediction-specific bounds in \Cref{thm:scheduling} for the \weakly algorithm of Wei and Zhang~\cite{Wei2020} for non-clairvoyant job scheduling when the job size $n=2$, and prove its strong optimality in this case.

\begin{table*}[t!] % Added a caption
\centering
\small 
\renewcommand{\arraystretch}{1.2} % Adjust row spacing if needed
\begin{tabular}{c|c|c|c}
\specialrule{.09em}{.1em}{.1em} 
& \textbf{DSR} & \textbf{RSR}  & \textbf{OMS} \\
\hline
\textsf{Existing Results} & $\KD$~(\cite{Kumar2018}, Algorithm 2) & $\KR$~(\cite{Kumar2018}, Algorithm 3) & Sun et al.~\cite{Sun2021} \\ 
\hline
\textsf{This Work}
& $\PDSR$ (Algorithm~\ref{alg:dsr}) & $\PRSR$ (Algorithm~\ref{alg:rsr}) & $\PST$ (Algorithm~\ref{alg:one-max search})\\ 
\hline
\textsf{Improvement Ratio}
& $ \left(1 +\frac{1}{\lambda}\right)/2  $ & $(\frac{e-1}{e})  \cdot(1 - e^{-\lambda})^{-1}   $ & $\frac{\sqrt{(1-\lambda)^2 + 4 \lambda \theta} - (1 - \lambda) }{ 2\lambda \sqrt{\theta}}  $\\ 
\specialrule{.09em}{.1em}{.1em} 
\end{tabular}
\caption{
Comparison of proposed algorithms in this work with existing results. The \textsf{Improvement Ratio} quantifies the maximum ratio between the \textbf{consistency-robustness score} $(\con_y \cdot \rob_y)$ of our algorithm over the best existing score for all $y$.}
\label{tab:con_rob_ratio} 
\end{table*}

\textbf{Novel Techniques and Methodology.} We propose a bi-level optimization methodology (see Problems \ref{optimization_A} and \ref{optimization_B} in~\Cref{sec:bi-level_opt_method}) for systematically deriving strongly-optimal algorithms. Given an initial robustness target $\overline{\rob}$, Problem~\ref{optimization_A} finds the best prediction-specific consistency $\con_\y$, and Problem~\ref{optimization_B} then finds a decision with the best prediction-specific robustness $\rob_\y$ given the fixed $\con_\y$. The bi-level optimization pipeline naturally forms a meta-algorithm (Algorithm~\ref{alg:meta}), which we prove in \Cref{prop:oba} yields a strategy on the prediction-specific Pareto front for each $\y$, guaranteeing strong optimality. This approach offers two key benefits. (1) \textbf{Generality} – the bi-level optimization framework is broadly applicable: it can be used to design algorithms for online problems beyond those described in Table~\ref{tab:online_problems}. (2) \textbf{Flexibility} – this approach exhibits two kinds of flexibility. First, the consistency–robustness trade-off in the meta-algorithm (Algorithm~\ref{alg:meta}) can be tuned by adjusting the robustness target $\overline{\rob}$. Second, the bi-level optimization problem can be adapted to different kinds of predictions and performance objectives. For example, the objectives and constraints in Problems~\ref{optimization_A} ($\mathcal{P}_1$) and~\ref{optimization_B} ($\mathcal{P}_2$) can be extended to incorporate error tolerance to enable the design of algorithms which perform well even if predictions are erroneous, thus alleviating, to some extent, the issues of \emph{brittleness} and \emph{smooth performance degradation} which have been considered in a number of recent works~\cite{elenter2024overcoming,Benomar2025,benomar2025tradeoffs}. More specifically, in~\Cref{sec:brittle}, we introduce the notion of \emph{$\epsilon$-consistency} to formalize the goal of preserving consistency (and avoiding algorithm brittleness) when faced with small prediction error $\eps$, and we show that it is possible to obtain both classic Pareto optimality between $\epsilon$-consistency and robustness, as well as a corresponding version of strong optimality in this case (see \Cref{thm:tol_3}).

In addition, our analysis of strong optimality in the randomized ski rental problem (see Appendix~\ref{appendix:RSR}) employs a novel prediction-specific primal–dual technique. Specifically, we first use a perturbation-based analysis to derive structural properties of the optimal randomized distribution, and then formulate and analyze a primal–dual optimization problem for each prediction. This provides a structured and potentially generalizable approach to establish Pareto optimality with respect to a given prediction.

\textbf{Experimental Results and Real-World Implications.}
We evaluate our algorithms through both synthetic simulations and real-world case studies, spanning a range of online decision-making problems. Overall, our methods consistently outperform state-of-the-art learning-augmented and classical baseline algorithms, confirming their theoretical soundness and practical value.

More specifically, we apply the deterministic and randomized ski rental algorithms to Dynamic Power Management (DPM) traces~\cite{antoniadis2021dpm}, an important benchmark for energy-efficient computing, and we apply the one-max search algorithms to VIX trading, a representative financial market task marked by high volatility and uncertainty. In both domains, our methods deliver significant improvements across most prediction regimes, illustrating how prediction-specific design can translate improved theoretical guarantees into tangible real-world gains.

\subsection{Related Work}
\label{app:related_works}
\textbf{Algorithms with Predictions.} 
Although the study of online algorithms with ML predictions is still relatively new~\cite{Kumar2018,Lykouris2021}, significant research progress has been made in recent years on various problems such as online optimization~\cite{li2024lado}, control~\cite{tongxinli2022control, li2024disentangling}, and reinforcement learning~\cite{golowich2022can,li2024beyond}. Similar frameworks have also been adopted to design and analyze algorithms for a number of other online problems,
such as caching~\cite{Lykouris2021,Rohatgi2019,im2022parsimonious, wei2020better}, online knapsack \cite{Im2021,Daneshvaramoli2024, lechowicz2024time}, secretary and bipartite matching \cite{antoniadis2020secretary}, metrical task systems~\cite{Antoniadis2020,bubeck2021online,Christianson2023}, and convex body/function chasing~\cite{bubeck2021online, christianson22chasing}. Most of these works make no assumptions about prediction quality, seeking to balance \textit{consistency} (competitive ratio when predictions are accurate) and \textit{robustness} (worst-case competitive ratio regardless of prediction quality), though the precise, formal definitions of these metrics vary slightly across works.

Beyond their theoretical appeal, these frameworks have begun to influence practical systems domains, enabling advances in areas such as data-center scheduling~\cite{mao2019learning}, energy-aware computing~\cite{lechowicz2023opr,lee2021peakaware}, and networked control systems~\cite{li2024disentangling,li2024lado}, where ML-driven forecasts are increasingly available but inherently imperfect.

Note also that some recent works depart from the standard paradigm of robustness and consistency and consider alternative prediction models and performance measures. Sun et al. \cite{Sun2024} proposed online algorithms with uncertainty-quantified (UQ) predictions, which leverage UQ to assess prediction quality. They introduced the \textit{distributionally robust competitive ratio} (DRCR), which weighs both the setting where the UQ prediction accurately describes the instance, and the worst-case adversarial setting, and applied this metric to the problems of ski rental and online search. Mahdian et al.~\cite{mahdian2012online} proposed a general framework for online optimization under uncertain inputs, where the algorithm has access to an optimistic strategy that performs well when the future unfolds favorably. They developed a meta-algorithm that balances between this optimistic policy and a robust fallback, achieving a trade-off between worst-case guarantees and performance under accurate predictions without relying on a formal error model.

\textbf{Ski Rental and Scheduling with ML Predictions.} Regarding online problems that are closely related to those specific examples considered in this work, 
Kumar et al.~\cite{Kumar2018} studied \textit{ski rental} and \textit{non-clairvoyant scheduling} with ML predictions. Their framework introduces a tunable trade-off between consistency and robustness through a user-specified hyperparameter. Wei and Zhang~\cite{Wei2020} subsequently gave general lower bounds on the trade-off obtainable in these problems, thus proving that the deterministic and randomized algorithms of Kumar et al.~\cite{Kumar2018} for ski rental achieve the Pareto-optimal trade-off between consistency and robustness. Furthermore, they demonstrated that the meta-algorithm proposed by Kumar et al. \cite{Kumar2018} for non-clairvoyant scheduling does not achieve the tight trade-off, and introduced a novel \textit{two-stage scheduling} strategy that is provably tight for the case of $n=2$. 

\textbf{One-Max Search with ML Predictions.} 
In the learning-augmented setting of one-max search, where algorithms receive a prediction of the maximum element value, Sun et al.~\cite{Sun2021} established a fundamental lower bound on the trade-off between consistency and robustness. They also proposed a threshold-based algorithm and showed that it achieves this lower bound, making it Pareto-optimal in terms of these two performance measures.

\textbf{Algorithm Smoothness.}
A recent line of work has shown that some existing learning-augmented algorithms are brittle, suffering sharp performance drops under small prediction errors. Elenter et al.~\cite{elenter2024overcoming} addressed this by designing algorithms that follow user-specified error-performance profiles, ensuring controlled degradation in performance for the one-way trading problem. Benomar et al.~\cite{Benomar2025} further proposed smooth one-max search algorithms that are Pareto-optimal and exhibit gradual performance degradation with increasing prediction error, achieving a “triple Pareto optimality” among consistency, robustness, and smoothness. Unlike prior work that focuses on structural smoothness of the Pareto frontier, our formulation provides a principled relaxation of consistency itself, leading to algorithms that are both robust and tolerant to small predictive errors.

\section{Problem Formulation \label{problem_formulation}}

We consider online cost minimization problems over a set of instances $\mI$. For each instance $I \in \mI$, let $\alg (A, I)$ and $\opt(I)$ denote the cost incurred by an online algorithm $A$ and the cost of the offline optimum, respectively. We assume that the algorithms have no prior knowledge of the instance $I$. Under the competitive analysis framework \cite{Borodin2005}, the goal is to find an online algorithm $A$ that minimizes the worst-case competitive ratio $\CR (A)$, which is defined as\footnote{In online profit maximization problems, the competitive ratio is defined as the worst-case ratio between the optimal offline profit and that obtained by the online algorithm, which is the inverse of the minimization setting.}
\[
\CR (A) = \max_{I \in \mI} \frac{\alg(A, I)}{\opt(I)}.
\]

This worst-case focus, however, can be overly pessimistic in practical settings. To move beyond the limitations of worst-case algorithms, machine-learned predictions can be integrated into algorithm design (see~\cite{Lykouris2021, Kumar2018}). 
In this setting, an online algorithm $\A$, potentially parameterized by $\omega \in \Omega$, receives not only the instance $\I \in \mI$ online but also an ML prediction $\y \in \mY$ concerning some relevant but unknown feature $\x(\I) \in \mX$ of $\I$, where $\mX$ denotes a prediction space. The feature $\x(\I)$ encapsulates useful information about $\I$ (e.g., the maximum element
value for online one-max search) or may even fully specify $\I$ in some problems (e.g., the total number of skiing
days for discrete-time ski rental). Let $\alg(\A, \I, \y)$ denote the (expected) cost incurred by algorithm $\A$ on instance $\I$ given prediction $\y$.

\textbf{Classic Consistency and Robustness.} Since the ML prediction $\y\in\mY$ may be erroneous, a number of algorithms have been designed (see those summarized in Sections~\ref{app:related_works} and \ref{sec:online_problems}) to (1) achieve near-optimal performance when the prediction $\y$ is accurate (\emph{consistency}), while (2) simultaneously maintaining a bounded performance guarantee even when the prediction is arbitrarily wrong (\emph{robustness}).
To formalize this, let $\mI_\y \subseteq \mI$ represent the set of instances for which prediction $\y$ is considered ``accurate'' or ``consistent''; the precise definition depends on the specific problem, the form of prediction, and the prediction quality measure. The classic consistency and robustness metrics are defined as follows.    
\begin{definition}[Classic Metrics]
\label{def:cl_con_rob}
Given an online algorithm $\A$ that takes predictions, the \textbf{consistency} $\con(\A)$ and \textbf{robustness} $\rob(\A)$ are defined as:
\begin{align}
    \con(\A) \coloneqq \sup_{\y \in \mY} \sup_{ \I \in \mI_\y } \frac{\alg(\A, \I, \y)}{\opt(\I)}, \quad 
    \rob(\A) &\coloneqq \sup_{\y \in \mY} \sup_{\I \in \mI } \frac{\alg(\A, \I, \y)}{\opt(\I)}. \label{eq:std_cons_rob_pred}
\end{align}
If an algorithm $\A$ achieves $\con(\A)$ consistency and $\rob(\A)$ robustness, it is called $\con(\A)$-consistent and $\rob(\A)$-robust, respectively.
\end{definition}
A typical choice of $\mI_\y$ is  $\mI_\y\coloneqq\{I\in\mI:\x(\I)=\y\}$ (following \cite{ Kumar2018,Lykouris2021}). The equality constraint in $\mI_\y$ can be generalized in a number of ways---for instance, to a ball centered at $\x(\I)$, which will emphasize small error tolerance of the algorithm (see~\Cref{sec:brittle} for more details). Here, the \textit{consistency} $\con(\A)$ captures the algorithm's performance guarantee assuming the prediction is correct, taking the worst case over all possible correct predictions, and the \textit{robustness} $\rob(\A)$ represents the worst-case guarantee under all possible instances and any prediction $\y$, regardless of accuracy. An ideal algorithm minimizes both $\con(\A)$ (aiming for near 1) and $\rob(\A)$. 

\subsection{Prediction-Specific Consistency and Robustness}

The standard metrics in~\eqref{eq:std_cons_rob_pred} evaluate performance by taking the worst case over both instances $\I$ and predictions $\mY$. While adopting a worst-case perspective for instances $\I$ is a standard and reasonable approach in competitive analysis due to future uncertainty, applying this same perspective to the entire prediction space $\mY$ is unnecessarily conservative, as better prediction-specific performance may be possible. This motivates a more nuanced framework to evaluate performance conditioned on the \emph{specific} prediction value $\y \in \mY$ that the algorithm receives. We thus introduce the notions of \emph{prediction-specific consistency} and \emph{prediction-specific robustness}.

\begin{definition}[Prediction-Specific Metrics]
\label{def:ps_con_rob}
Given an online algorithm $\A$ and a specific prediction value $\y \in \mY$, the \textbf{prediction-specific consistency} $\con_\y(\A)$ and \textbf{prediction-specific robustness} $\rob_\y(\A)$ relative to $\y$ are defined as:
\begin{align}
    \con_\y(\A) &\coloneqq \sup_{\I \in \mI_\y} \frac{\alg(\A, \I, \y)}{\opt(\I)}, \quad
    \rob_\y(\A) \coloneqq \sup_{\I \in \mI} \frac{\alg(\A, \I, \y)}{\opt(\I)}. \label{eq:ps_cons_rob}
\end{align}
If an algorithm $\A$ achieves $\con_\y(\A)$ and $\rob_\y(\A)$ for a given prediction value $\y$, we say $\A$ is $\con_\y(\A)$-consistent and $\rob_\y(\A)$-robust under $\y$.
\end{definition}

The prediction-specific metrics in Definition~\ref{def:ps_con_rob} enable a finer-grained analysis for algorithms with predictions. Instead of considering only the worst-case consistency or robustness over all predictions, we may instead tailor an algorithm's strategy based on the characteristics of \emph{each} observed prediction value $\y$. This adaptiveness can enable better performance compared to standard algorithms, which only optimize for worst-case consistency and robustness \eqref{eq:std_cons_rob_pred}.

\subsection{\textsc{Weak} and \textsc{Strong} Optimality}

While these prediction-specific metrics provide valuable insight into algorithm performance conditioned on the received prediction, the standard metrics of consistency and robustness still remain valuable to characterize an algorithm's overall worst-case guarantees across all instances and predictions. 

To formally capture the notion of algorithms that perform well both overall and for specific predictions, we introduce two notions of optimality based on these different settings. Our goal is to distinguish algorithms which both achieve the optimal tradeoff between robustness and consistency for the worst-case prediction, as well as on a per-prediction basis. This leads to the following definitions of 
\textit{\weakly} and \textit{\strongly} algorithms.

We begin by defining weak optimality, which characterizes algorithms with the optimal tradeoff between the standard notions of consistency and robustness.

\begin{definition}[\textsc{Weak} Optimality]
\label{def_weak}
For a fixed online problem, consider an online algorithm $A$ that achieves $\con$ consistency and $\rob$ robustness. $A$ is \textbf{\textsc{weakly}-optimal} if there does not exist another online algorithm with consistency $\con'$ and robustness $\rob'$ such that $\con' \leq \con$ and $\rob' \leq \rob$, with at least one inequality being strict.
\end{definition}

Building upon this, we define the stricter notion of strong optimality incorporating prediction-specific performance.
\begin{definition}[\textsc{Strong} Optimality]
\label{def_strong}
For a fixed online problem, consider an online algorithm $A$ that achieves $\con_y$ prediction-specific consistency and $\rob_y$ prediction-specific robustness under prediction $y\in\mY$. $A$ is \textbf{\textsc{strongly}-optimal} if 
\begin{enumerate}
    \item  $A$ is \weak-optimal;
    \item  for any prediction $y \in \mY$, there does not exist another online algorithm with prediction-specific consistency $\con'_y$ and robustness $\rob'_y$ such that $\con'_y \leq \con_y$ and $\rob'_y \leq \rob_y$, with at least one inequality being strict.
\end{enumerate}
\end{definition}

The notion of strong optimality in~\Cref{def_strong} extends classic Pareto optimality to the prediction-specific setting. This definition requires Pareto optimality in the two-dimensional consistency-robustness plane ($\con_y, \rob_y$) \textit{for each specific prediction $y \in \mY$}, in addition to the baseline weak optimality. This stricter criterion captures whether an algorithm is unimprovable in its fundamental performance trade-off across the entire prediction space $\mY$.

\subsection{Online Problems Studied}
\label{sec:online_problems}

We now briefly introduce each of the problems studied in this work; these problems are summarized in~\Cref{tab:online_problems}.

\begin{table*}[ht]
\caption{Instantiation of the general problem components $\I,\x$, and $\y$ for problems analyzed. } % Added a caption
\label{tab:online_problems} % Added a label
\centering
\small 
\renewcommand{\arraystretch}{1.2} % Adjust row spacing if needed
\begin{tabular}{l|l|l|l}
\specialrule{.09em}{.1em}{.1em} 
& \textbf{Discrete-Time Ski Rental} & \textbf{One-Max Search}  & \textbf{Scheduling} \\
\hline
Instance $\I$ & Number of days & $n$ values in $[L, U]$ & Set of $n$ jobs \\
\hline
Feature $\x$   & Number of days & Maximum value & Processing times $(x_1, \dots, x_n)$ \\ 
\hline
Prediction $\y$  & Predicted number of days & Predicted maximum value & Predicted times $(y_1, \dots, y_n)$ \\ 
\specialrule{.09em}{.1em}{.1em} 
\end{tabular}
\vspace{-1em}
\end{table*}

\textbf{Discrete-Time Ski Rental.} 
The discrete-time ski rental problem is a classic online decision-making problem, wherein a decision-maker decides each day whether to rent skis for $1$ unit of cost per day or purchase them outright for a fixed cost $b \in \mathbb{N}^+$, without prior knowledge of the total number of skiing days $\x$.  The optimal cost is given by $\min \{b, \x\}$.
In the standard competitive analysis framework, where no predictions are available, the optimal deterministic algorithm employs a simple \textit{rent-or-buy} strategy, purchasing the skis on day $M=b$; this strategy achieves a competitive ratio of $2-1/b$. The optimal randomized algorithm is known as Karlin’s algorithm~\cite{Karlin1988}, which strategically balances renting and buying through a carefully-designed probability distribution, achieving a competitive ratio of approximately $e/(e-1)\approx 1.582$.

For the ski rental problem, Kumar et al. \cite{Kumar2018} proposed algorithms that trade off consistency (performance under accurate predictions) and robustness (worst-case performance).  They presented a deterministic algorithm achieving $(1+\lambda)$ consistency and $(1 + 1/\lambda)$ robustness (for $\lambda \in (0,1)$), and a randomized algorithm with $(\frac{\lambda}{1 - e^{-\lambda}})$ consistency and $(\frac{{1 + 1/b}}{1 - e^{-(\lambda - 1/b)}})$ robustness (for $\lambda \in (1/b ,1)$). Subsequently, Wei and Zhang \cite{Wei2020} established fundamental lower bounds on this trade-off, showing that as $b \to \infty$, for deterministic algorithms, any $(1+\lambda)$-consistent algorithm must have a robustness of at least $(1+ 1/\lambda)$ for $\lambda \in (0,1)$, while for randomized algorithms, any $\gamma$-robust algorithm must have a consistency of at least $\gamma\log(1+{1}/{(\gamma-1)})$. These results show that the algorithms in \cite{Kumar2018} achieve weak optimality in the limit $b \to \infty$ (see \Cref{def_weak}).

In this work, we provide a deeper analysis of the algorithms proposed by Kumar et al.~\cite{Kumar2018}, examining their prediction-specific consistency and robustness.
% (see~\Cref{eq:ps_con_rob_sr} in \Cref{sec:dsr} for these metrics' definitions). 
To maintain consistency and comparability with \cite{Kumar2018, Wei2020}, we also consider the asymptotic regime $b \to \infty$. Our findings indicate that neither the deterministic nor randomized algorithms in~\cite{Kumar2018} are \strongly within this framework (see Theorems~\ref{thm:kd} and \ref{thm:KR}). Consequently, we propose novel algorithms (see Algorithms~\ref{alg:dsr} and~\ref{alg:rsr}) that achieve strong optimality (see Theorems~\ref{thm:dsr} and~\ref{thm:rsr}), thereby improving upon existing approaches for this problem setting with ML predictions.

\textbf{One-Max Search.} 
 The one-max search problem considers a sequence of $n$ elements with values in the range $[L, U]$, where $L$ and $U$ are known positive constants. At each step, an element is observed, and the algorithm must decide whether to accept it immediately or irrevocably discard it. The objective is to select the element with the maximum value. The instance's difficulty is characterized by the ratio $\theta = U / L$. In classical competitive analysis, the optimal competitive ratio is $\sqrt{\theta}$, achieved by an algorithm using the fixed threshold $\sqrt{LU}$ \cite{ElYaniv2001}. Note that this is a reward maximization problem, rather than a loss minimization; thus, for this problem, we will consider definitions of classic \eqref{eq:std_cons_rob_pred} and prediction-specific \eqref{eq:ps_cons_rob} consistency and robustness with the ratio between $\alg$ and $\opt$ inverted. 

 For learning-augmented one-max search, where algorithms receive a prediction of the maximum element value, Sun et al.~\cite{Sun2021} established a fundamental lower bound on the consistency-robustness trade-off: any $\gamma$-robust algorithm must have a consistency of at least $\theta / \gamma$. They further proposed a threshold-based algorithm that achieves this lower bound, implying its weak optimality.

In this work, we provide a deeper analysis of the algorithm proposed by Sun et al.~\cite{Sun2021} within this prediction-specific framework. Our analysis reveals that their algorithm is not \strongly (see Theorem~\ref{thm:sun}). Consequently, we propose a novel algorithm (see Algorithm~\ref{alg:one-max search}) that is \strongly (see Theorem~\ref{thm:oms}) and offers improved performance. 

\textbf{Non-Clairvoyant Scheduling.}
The non-clairvoyant scheduling problem concerns scheduling $n$ jobs on a single machine without prior knowledge of their processing times. We focus on the \textit{preemptive} setting, where jobs can be interrupted and resumed without cost. In the standard competitive analysis framework~\cite{Borodin2005}, {Round-Robin (RR)} achieves an optimal competitive ratio of $\frac{2n}{n+1}$~\cite{Motwani1994}. 

Extending this, Wei and Zhang~\cite{Wei2020} established a fundamental lower bound on the consistency-robustness trade-off, showing that any algorithm must have robustness of at least {\small{$\frac{n + n(n-1)\lambda}{1 + \lambda(n+1)(n+2)/2}$}} if it is $(1+\lambda)$-consistent. 
They also propose a \textit{two-stage schedule} algorithm (see Algorithm~\ref{alg:two-stage schedule} in \Cref{appendix:two-stage schedule}) and show it achieves the tight tradeoff for $n=2$ jobs in this learning-augmented setting. 
In the next section, as a warm-up, we provide a deeper analysis of this two-stage algorithm's prediction-specific consistency and robustness, establishing that it is \strongly.

\subsection{Warm-Up: An Existing Algorithm that is Strongly-Optimal \label{sec:scheduling}}

While our proposed notion of strong optimality is much stricter than weak optimality, some existing algorithms known only to be \weakly can be shown to satisfy strong optimality. In this section, we consider the problem of non-clairvoyant scheduling with $n = 2$ jobs (with length predictions $\y=(\y_1, \y_2)$ with $\y_1 \leq \y_2$). It is well known that the two-stage scheduling algorithm proposed by Wei and Zhang~\cite{Wei2020} (see Algorithm~\ref{alg:two-stage schedule} in \Cref{appendix:two-stage schedule}) achieves the optimal trade-off between classic consistency and robustness and is thus weakly-optimal. Notably, we can show that this algorithm also satisfies prediction-specific Pareto optimality:

\begin{theorem}
\label{thm:scheduling}
 The two-stage algorithm in~\cite{Wei2020} with $n=2$ is $1 + \min\{{y_1}/({2y_1+y_2}), \lambda\} $-consistent and $1 + \max\{{1}/{3}, {y_1}/{(y_1 + 2\lambda (2y_1 + y_2)})\} $-robust under $y = (y_1, y_2)$. Moreover, it is \strongly.
\end{theorem}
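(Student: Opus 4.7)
The approach has three ingredients: (i) compute $\con_y$ and $\rob_y$ explicitly via direct case analysis of the two-stage schedule; (ii) invoke the weak-optimality result of Wei and Zhang~\cite{Wei2020} to cover classic Pareto optimality; and (iii) establish prediction-specific Pareto optimality by reducing any competing algorithm to a one-parameter family of policies for each fixed $y$.

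\textbf{Computing the prediction-specific bounds.} Fix $y = (y_1, y_2)$ with $y_1 \leq y_2$. For $\con_y$, I restrict attention to the unique $I \in \mI_y$ where the actual processing times coincide with the prediction; then $\opt(I) = 2y_1 + y_2$, and the algorithm's sum of completion times can be written as $\opt(I) + \delta(\lambda, y)$ for an explicit overhead $\delta$ whose form depends on whether the first-stage budget binds. For small $\lambda$, the confidence-controlled first stage dominates and yields the $1+\lambda$ branch; for large $\lambda$, the predicted shorter job is completed first and the residual overhead is $y_1$, giving the $1 + y_1/(2y_1+y_2)$ branch. The minimum of the two branches is the true consistency, matching the claimed expression. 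For $\rob_y$, I maximize over all actual job lengths while the prediction is pinned at $y$. The worst adversary swaps the true roles of the two jobs, so the algorithm wastes its first stage on the longer actual job; decomposing cost into the first-stage contribution and the Round-Robin second stage gives two branches: the Round-Robin ratio $4/3$ (yielding $1/3$) and the confidence-weighted term $y_1/(y_1 + 2\lambda(2y_1+y_2))$. Their maximum is the claimed robustness.

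\textbf{Weak and strong optimality.} Weak optimality follows immediately, since $\sup_y \con_y$ and $\sup_y \rob_y$ of the two-stage algorithm already coincide with the Wei--Zhang $n=2$ lower bound. For strong optimality, I would parametrize any admissible two-stage schedule by a scalar switching time $t$, write $\con_y(t)$ and $\rob_y(t)$ as monotone-opposite functions of $t$ for each fixed $y$, and verify that the algorithm's choice $t^\star(\lambda, y)$ sits on the lower-left Pareto frontier of the induced curve. To extend the argument from this scalar family to arbitrary (possibly adaptive) algorithms, I would recycle the pair of symmetric adversarial instances used in the classic Wei--Zhang lower bound; the key observation is that these instances depend on the prediction only through the canonical quantities $(2y_1+y_2, y_1)$, so they simultaneously tighten both $\con_y$ and $\rob_y$ at every fixed $y$. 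Any hypothetical algorithm strictly improving either coordinate under some $y$ would then contradict the matching per-$y$ lower bound.

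\textbf{Main obstacle.} The principal difficulty is the last reduction step: Wei--Zhang prove their lower bound by selecting a worst-case prediction, so lifting it to a per-prediction statement requires constructing a family of adversarial instance pairs that is jointly tight for every $y$, rather than just for the worst $y$. My intended route is to check that the two adversarial instance constructions remain tight after rescaling by $(y_1, y_2)$, and that the algorithm's tie-breaking inside the first stage (which the classical analysis does not distinguish) does not move the $(\con_y, \rob_y)$ pair off the per-$y$ Pareto frontier. Once this per-$y$ tightness is verified, monotonicity of $\con_y(t)$ and $\rob_y(t)$ in the switching time closes the argument.
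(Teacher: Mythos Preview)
Your computation of $\con_y$ and $\rob_y$ via case analysis, and your invocation of Wei--Zhang for weak optimality, are both in line with the paper. The divergence is in how you establish \emph{prediction-specific} Pareto optimality, and your own ``main obstacle'' paragraph correctly diagnoses where your plan is incomplete.

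You propose to (a) parametrize competing algorithms by a scalar switching time $t$, study the monotone curve $(\con_y(t),\rob_y(t))$, and then (b) extend to arbitrary algorithms by ``recycling'' the Wei--Zhang adversarial pair after rescaling by $(y_1,y_2)$. Step (a) only covers the two-stage family itself, and step (b) is the real content---but as you note, the Wei--Zhang lower bound is proved at a worst-case prediction, and you have not shown that the rescaled instances remain \emph{jointly tight} for every $y$. Verifying this would amount to re-deriving a per-$y$ lower bound from scratch, so the recycling shortcut does not actually save work.

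The paper avoids this obstacle entirely by never trying to lift the global lower bound. Instead, for each fixed $y$ it argues directly about an \emph{arbitrary} algorithm via a single scalar that is well-defined without any structural assumption: let $r$ be the amount of work the algorithm has done on job~2 at the moment it completes job~1 on the consistent instance $x=y$. The consistency constraint $\con_y \le 1+\lambda$ forces $r \le \lambda(2y_1+y_2)$; the adversary then sets $x_1=y_1$ and $x_2=r+\delta$, which yields $\rob_y \ge 1 + y_1/(y_1+2r) \ge 1 + y_1/(y_1+2\lambda(2y_1+y_2))$. This is the per-$y$ lower bound in one step, with no reduction to a parametric family and no appeal to the global Wei--Zhang construction. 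For the remaining regime $\lambda > y_1/(2y_1+y_2)$, the algorithm coincides with Round-Robin, and the paper closes that case by noting that RR is the \emph{unique} algorithm achieving robustness $4/3$, so no competitor can match $\rob_y$ while improving $\con_y$.

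The upshot: replace your switching-time parameter $t$ (which presupposes a two-stage structure) by the quantity $r$ above (which is defined for any algorithm), and construct the adversary directly from $r$ rather than trying to import the worst-case-prediction instances. This dissolves your stated obstacle rather than confronting it.
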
 

We prove this theorem in~\Cref{appendix:two-stage-schedule-optimality}. Despite the fact that this algorithm is strongly optimal for non-clairvoyant scheduling, for many other canonical online problems, including the ski rental and one-max search problems, we shall soon see that existing weakly-optimal algorithms (such as those in~\cite{Kumar2018, Sun2021}) fail to achieve strong optimality (see~\Cref{thm:kd,thm:KR,thm:sun}). As such, the rest of this paper will consider the development of new algorithms that can achieve strong optimality in these settings.

\section{Optimization-Based Meta-Algorithm \label{sec:opt-meta}}

In this section, we introduce a general optimization-based approach to systematically identify \strongly algorithms for online problems.
Our goal is to obtain a general meta-algorithm that, given a prediction $\y\in\mY$ and a target robustness upper bound $\overline{\gamma}$, returns an algorithm $A_\omega$ which is \strongly.

\subsection{Bi-Level Optimization Formulation}
\label{sec:bi-level_opt_method}

We begin by considering the following question: given some target upper bound $\overline{\rob}$ on robustness and a prediction $y \in \mY$, how can we obtain an algorithm that both satisfies the target robustness bound, and obtains a prediction-specific optimal tradeoff between consistency and robustness? Here, $\overline{\rob} \in \Lambda_\rob \coloneqq [\CR^*, +\infty)$ can be any possible robustness level, where $\CR^*$ denotes the optimal competitive ratio achievable for the problem.

To achieve this goal, we specify a bi-level optimization framework comprising Problems~\ref{optimization_A} and \ref{optimization_B}. Recall that a prediction $y$ determines a consistent subset of instances $\mI_y \subseteq \mI$, as described in~\Cref{problem_formulation}, and that $\alg(A_\omega, I, y)$ and $\opt(I)$ denote, respectively, the costs of the algorithm $A_\omega$ augmented by the prediction $y$ and optimal offline algorithm under instance $I \in \mI$. Our first optimization problem (Problem~\ref{optimization_A}), referred to as $\mathcal{P}_1(\overline{\rob}, y)$,
% , whose optimal value is denoted as $\mP_1(\overline{\rob}, y)$, 
determines the best achievable prediction-specific consistency of any $\overline{\rob}$-robust algorithm under the prediction $y\in\mY$. Let $\{\con_y^*, \overline{\omega}\}$ denote its optimal solution. We then solve a second optimization problem (Problem~\ref{optimization_B}), referred to as $\mathcal{P}_2(\con_y^*, y)$, 
% whose optimal value is denoted as $\mP_2 (\con_y^*, y)$, 
to find the most robust algorithm among those achieving $\con_y^*$ consistency under the prediction $y$. Let $\{\rob_y^*, \omega^*\}$ denote its optimal solution. The resulting algorithm $A_{\omega^*}$ with the optimal solution $\omega^*$ is then implemented for the prediction $y$.

% \medskip % Add some vertical space before the minipages

\noindent % Prevent indentation for the first minipage
\begin{minipage}[t]{0.48\textwidth} % First column

\vspace{0pt} % Ensures top alignment
\myproblem{ ($\mathcal{P}_1$)}{
\textbf{Minimize Consistency} 
\begin{subequations}
% \label{optimization_A}
\begin{align}
  & \mP_1(\overline{\rob}, y) \coloneqq  \min_{\con_y, \omega \in \Omega}\quad \con_y \quad \text{s.t.}\\
& \alg(A_\omega, I, y) \leq \overline{\rob} \, \opt(I), \ \forall I \in \mI, \label{cons_1a}\\
     &\alg(A_\omega, I, y) \leq \con_y  \opt(I), \ \forall I \in \mI_y. \label{cons_1b}
\end{align}   
\end{subequations}
Constraint~(\ref{cons_1a}) enforces the initial robustness level $\overline{\rob}$. \label{optimization_A}}
% The objective minimizes the worst-case ratio over consistent instances $\mI_y$.
\end{minipage}% % Prevents extra space
\hfill % Pushes columns apart
\begin{minipage}[t]{0.48\textwidth} % Second column
\vspace{0pt} % Ensures top alignment
\myproblem{ ($\mathcal{P}_2$)}{\textbf{Minimize Robustness}
\begin{subequations}
\begin{align}
  & \mP_2 (\con_y^*, y)\coloneqq \min_{\rob_y, \omega \in \Omega}\quad \rob_y \quad \text{s.t.}\\
 & \alg(A_\omega, I, y) \leq \rob_y \opt(I), \ \forall I \in \mI, \label{cons_2a}\\
     &\alg(A_\omega, I, y) \leq \con_y^*  \opt(I), \forall I \in \mI_y.\label{cons_2b}
\end{align}   
\end{subequations}
Constraint~(\ref{cons_2b}) enforces the optimal prediction-specific consistency $\con_y^*$ found in Problem~\ref{optimization_A}. \label{optimization_B}}
% The objective minimizes the worst-case ratio over all instances $\mI$.
\end{minipage}

\subsection{Meta-Algorithm Formulation}

The preceding approach (Problems~\ref{optimization_A} and~\ref{optimization_B}) focuses on deriving an instance-specific solution given a fixed prediction $y$. To operationalize this idea in the online setting, where predictions may vary,
% over time, 
we introduce a meta-algorithm (Algorithm~\ref{alg:meta}) that invokes this two-stage optimization procedure for any realized prediction. This enables prediction-specific robust and consistent decision-making across varying inputs, with a user-specified robustness bound $\overline{\rob}$.

\begin{algorithm}[t!]
\caption{\textsc{Bi-Level Optimization-Based Meta-Algorithm}}
\label{alg:meta}
\begin{algorithmic}[1]
\State \textbf{Input:} Desired robustness upper bound $\overline{\rob} \in \Lambda_\rob \coloneqq [\CR^*, +\infty)$
\State Receive a prediction $y$;
\State Compute $\{\con_y^*, \overline{\omega}\}$ by solving $\mathcal{P}_1 (\rob, y)$ (Problem~\ref{optimization_A});
\State Obtain $\{\rob_y^*, \omega^*\}$ by solving $\mathcal{P}_2 (\con_y^*, y)$ (Problem~\ref{optimization_B}) \label{alg:bilevel_opt_2};
\State Deploy the online decision rule induced by $A_{\omega^*}$;
\end{algorithmic}
\end{algorithm}

In the following proposition, which we prove in~\Cref{appendix:oba}, we show that this meta-algorithm yields decisions on the prediction-specific Pareto frontier and satisfying the desired robustness bound. Moreover, the condition that $\overline{\gamma}$ is on the (non-prediction-specific) Pareto frontier guarantees the weak optimality of Algorithm~\ref{alg:meta}; this holds if tight Pareto fronts are available off-the-shelf for the problem at hand, such as the tight tradeoffs between classical consistency and robustness available for ski rental and one-max search~\cite{Kumar2018,Wei2020,Sun2021}.

\begin{proposition}
    \label{prop:oba}
Suppose there exists a weakly-optimal algorithm with robustness $\overline{\rob}$.
With $\omega^*$ being an optimal solution of Problem~\ref{optimization_B} in line~\ref{alg:bilevel_opt_2} of Algorithm~\ref{alg:meta}, $A_{\omega^*}$ is $\con_y^*$-consistent and $\rob_y^*$-robust with respect to the prediction $y$, with $\rob_y^* \leq \overline{\rob}$. Furthermore, Algorithm~\ref{alg:meta} is \strongly.
\end{proposition}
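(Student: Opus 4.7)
The plan is to first verify the three direct claims about $A_{\omega^*}$ (prediction-specific consistency $\con_y^*$, prediction-specific robustness $\rob_y^*$, and the bound $\rob_y^* \leq \overline{\rob}$) via feasibility arguments internal to the bi-level program, and then establish strong optimality by separately checking the two conditions in Definition~\ref{def_strong}.

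For the direct claims, I would observe that by construction the optimizer $\omega^*$ of Problem~\ref{optimization_B} satisfies constraints~(\ref{cons_2a}) and~(\ref{cons_2b}), which together with the definitions in~(\ref{eq:ps_cons_rob}) yield $\rob_y(A_{\omega^*}) \leq \rob_y^*$ and $\con_y(A_{\omega^*}) \leq \con_y^*$; hence $A_{\omega^*}$ is $\con_y^*$-consistent and $\rob_y^*$-robust under $y$. To show $\rob_y^* \leq \overline{\rob}$, I would use the optimizer $\overline{\omega}$ returned by $\mathcal{P}_1$: it satisfies constraint~(\ref{cons_1a}) with robustness bound $\overline{\rob}$ and achieves consistency $\con_y^*$, so the pair $(\overline{\omega}, \overline{\rob})$ is feasible for $\mathcal{P}_2(\con_y^*, y)$. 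Minimality of $\rob_y^*$ in $\mathcal{P}_2$ then forces $\rob_y^* \leq \overline{\rob}$.

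For strong optimality, I would first note that by~\eqref{eq:std_cons_rob_pred} and~\eqref{eq:ps_cons_rob} one has $\con(A) = \sup_{y \in \mY} \con_y(A)$ and $\rob(A) = \sup_{y \in \mY} \rob_y(A)$. Let $A^\dagger$ be the assumed weakly-optimal algorithm with $\rob(A^\dagger) = \overline{\rob}$ and $\con(A^\dagger) = \overline{\con}$. For each prediction $y$, $A^\dagger$ is feasible for $\mathcal{P}_1(\overline{\rob}, y)$, so $\con_y^* \leq \con_y(A^\dagger) \leq \overline{\con}$; taking suprema in $y$ gives $\con(A_{\omega^*}) \leq \overline{\con}$ and, combined with $\rob(A_{\omega^*}) \leq \overline{\rob}$, shows that the pair $(\con(A_{\omega^*}), \rob(A_{\omega^*}))$ is not dominated, which gives weak optimality. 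For prediction-specific Pareto optimality, I would fix $y$ and argue by contradiction: suppose some algorithm $A'$ satisfies $\con_y(A') \leq \con_y^*$ and $\rob_y(A') \leq \rob_y^*$ with at least one inequality strict. If $\con_y(A') < \con_y^*$, then since $\rob_y(A') \leq \rob_y^* \leq \overline{\rob}$, the algorithm $A'$ is feasible for $\mathcal{P}_1(\overline{\rob}, y)$ with strictly smaller objective than $\con_y^*$, contradicting its optimality. Otherwise $\con_y(A') = \con_y^*$ and $\rob_y(A') < \rob_y^*$, making $A'$ feasible for $\mathcal{P}_2(\con_y^*, y)$ with strictly smaller objective than $\rob_y^*$, again a contradiction.

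The main subtlety is ensuring the parameter space $\Omega$ is interpreted broadly enough to index every competing online algorithm, so that any hypothetical dominator $A'$ genuinely furnishes a feasible point of $\mathcal{P}_1$ or $\mathcal{P}_2$; the contradiction arguments hinge on this identification. A secondary technical point is handling the case where the optima of $\mathcal{P}_1$ or $\mathcal{P}_2$ are only approached and not attained, which I would resolve by passing to infima and invoking $\epsilon$-optimal solutions before letting $\epsilon \downarrow 0$.
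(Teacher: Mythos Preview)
Your proposal is correct and follows essentially the same approach as the paper's proof: both arguments establish $\rob_y^* \leq \overline{\rob}$ by exhibiting $(\overline{\omega}, \overline{\rob})$ as a feasible point of $\mathcal{P}_2$, derive the prediction-specific bounds from the constraints of $\mathcal{P}_2$, obtain weak optimality by showing the assumed weakly-optimal algorithm is feasible in $\mathcal{P}_1(\overline{\rob}, y)$ for every $y$ (hence $\con_y^* \leq \overline{\con}$), and verify prediction-specific Pareto optimality via feasibility in $\mathcal{P}_1$ and $\mathcal{P}_2$. The only cosmetic difference is that the paper organizes the Pareto argument as an exhaustive case analysis (treating $\con_y' > \con_y^*$ and $\con_y' \leq \con_y^*$ separately and then splitting on $\rob_y'$), whereas you phrase it as a direct contradiction argument; the logical content is identical.
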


Notably, the meta-algorithm (Algorithm~\ref{alg:meta}) offers a systematic pipeline to achieve Pareto optimality for each prediction $y \in \mY$ while also guaranteeing weak optimality. While the tractability of solving the constituent bi-level optimization problems $\mathcal{P}_1 (\rob, \y)$ and $\mathcal{P}_2 (\con_y^*, \y)$ depends on the structure of the specific online problem, this framework provides a foundation for deriving explicit and tractable \strongly algorithms for the problems we discuss in the remainder of this paper.

% \subsection{Variants and Additional Remarks}
\subsection{Generalizations of this framework}

We briefly note that the bi-level optimization framework we have proposed is quite flexible, and could be generalized in a number of ways to enable its application to different problem settings or objectives. For example, Problems~\ref{optimization_A} and \ref{optimization_B} 
could be augmented to include practical considerations such as risk- or uncertainty-sensitive constraints and objectives~\cite{dinitz2024controlling,christianson2024risk,Sun2024} or 
tolerance to erroneous predictions. In particular, in this work, we develop an error-tolerant variant of this framework, motivated by the continuous nature of certain problems (e.g., prices in the one-max search problem) and the fact that prediction errors are often unavoidable in practice. Instead of analyzing the tradeoff between consistency and robustness under perfectly accurate predictions, we can instead study the tradeoff between $\epsilon$-consistency and robustness, where $\epsilon$-consistency denotes the worst-case guarantee when the prediction error is bounded by a chosen constant $\epsilon$. We further describe this generalized framework, and how to design algorithms in this setting, in \Cref{sec:brittle}.

While our approach could, in principle, be extended to more general and complex dynamic multi-round problems, we anticipate that such an extension would likely require nontrivial technical developments. In particular, such an extension would require a number of additional structural assumptions, e.g., that the overall prediction space $\mY$ remains fixed and independent of the algorithm’s actions, and that the algorithm’s actions exert negligible influence on future prediction values. Moreover, depending on the problem, the resulting optimization problems may be considerably harder to analyze. For complex problems like metrical task systems and non-clairvoyant scheduling with $n$ jobs, achieving even weak optimality remains an open question \cite{Christianson2023,Wei2020}; as such, designing strongly-optimal algorithms is inherently challenging. Thus, while our methodology in its current form may not be directly applicable to these settings, we posit that it may still serve as a useful conceptual framework for designing prediction-specific algorithms for these problems. In the rest of this paper, we will consider problem settings where this methodology can be tractably instantiated.

\section{Deterministic Ski Rental}
\label{sec:dsr}

We begin our investigation of strongly-optimal algorithm design in specific problem settings by considering deterministic algorithms for the discrete-time ski rental problem described in~\Cref{sec:online_problems}.
If a deterministic decision-maker buys skis at the start
of day $M\in\mathbb{N}$ (that may depend on the predicted day $\y$), then the induced cost is 
\[\alg_{\mathrm{DSR}}(M,\x,\y)\coloneqq x\cdot\mathds{1}_{M(\y)>\x}+(b+M(\y)-1)\cdot\mathds{1}_{M(\y)\leq \x},\]
where $b \in \mathbb{N}$ denotes the price of the skis. Then, for this problem, the general definitions of the prediction-specific consistency and robustness in~\Cref{eq:ps_cons_rob} are instantiated as:
\begin{align}
    \con_\y^{\mathrm{DSR}}(A_{M(\y)}) &\coloneqq \frac{\alg_{\mathrm{DSR}}(M,\y,\y)}{\min \{b, \y\}}, \quad 
    \rob_\y^{\mathrm{DSR}}(A_{M(\y)})\coloneqq \sup_{\x \in \mathbb{N}} \frac{\alg_{\mathrm{DSR}}(M,\x,\y)}{\min \{b, \x\}}. \label{eq:ps_con_rob_sr}
\end{align}

\subsection{The Deterministic Algorithm of Kumar et al. is Not Strongly-Optimal}

We first provide a deeper analysis of the deterministic algorithm proposed by Kumar et al.~\cite{Kumar2018}, examining its prediction-specific consistency and robustness.

The deterministic algorithm of Kumar et al.~\cite[Algorithm~2]{Kumar2018}, which we denote by $\KD$, purchases at the beginning of day $\lceil \lambda b \rceil$ if $y \geq b$, and at the beginning of day $\lceil b/\lambda \rceil$ otherwise. It achieves $(1+\lambda)$-consistency and $(1+1/\lambda)$-robustness, where $\lambda \in (0,1)$ is a tunable hyper-parameter. By the lower bound of Wei and Zhang~\cite{Wei2020}, $\KD$ is \weakly as the price $b \to +\infty$. However, as we show in the following theorem (which is proved in~\Cref{appendix:dsr_1}), the $\KD$ algorithm is not \strongly.

\begin{theorem}
    \label{thm:kd}
    $\KD$ is $1$-consistent and $(1 + 1/\lam)$-robust when $\y < b$, and $(1 + \lam)$-consistent and $(1+1/\lam)$-robust when $\y \geq b$. Furthermore, $\KD$ is not \strongly even for $b \to +\infty$.
\end{theorem}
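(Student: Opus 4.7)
The plan is to first compute the prediction-specific consistency $\con_y(\KD)$ and robustness $\rob_y(\KD)$ in the two regimes $y < b$ and $y \geq b$ by straightforward case analysis of the cost $\alg_{\mathrm{DSR}}(M, x, y)$, and then to exhibit a simple alternative deterministic rule that Pareto-dominates $\KD$ in the $(\con_y, \rob_y)$ plane at some prediction value, thereby violating the second clause of \Cref{def_strong}.

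For the upper-bound calculation, I would plug $\KD$'s rule $M(y) = \lceil b/\lambda\rceil$ (when $y < b$) and $M(y) = \lceil \lambda b\rceil$ (when $y \geq b$) into the cost formula and the definitions in \eqref{eq:ps_con_rob_sr}. When $y < b$, the assumption $\lambda \in (0,1)$ gives $M > b > y$, so the consistency instance $x = y$ satisfies $x < M$; the algorithm rents throughout and pays $y$, matching $\opt = y$, so $\con_y(\KD) = 1$. For robustness, the sup over $x$ is attained on any $x \geq M$, where the algorithm buys ($\alg = b + M - 1$) against $\opt = b$, giving $\rob_y(\KD) = 1 + (M-1)/b \to 1 + 1/\lambda$ as $b \to \infty$. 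When $y \geq b$, $M \leq b \leq y$, so on $x = y$ the algorithm purchases, paying $b + M - 1$ against $\opt = b$, yielding $\con_y(\KD) = 1 + (M-1)/b \to 1 + \lambda$; and the worst $x$ for robustness is $x = M$, where $\opt = M$ and the ratio $(b + M - 1)/M \to 1 + 1/\lambda$ dominates the competing ratio $(b + M - 1)/b$ for $x > b$ because $M \leq b$.

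To falsify strong optimality, I would focus on predictions $y$ with $y < b$ and compare $\KD$ to the alternative algorithm $A'$ that simply purchases on day $M' = b$. Since $M' = b > y$, the consistency instance $x = y$ still has $x < M'$, so $\alg(A', y, y) = y = \opt$ and $\con_y(A') = 1 = \con_y(\KD)$. For the robustness calculation I would split the sup over $x$: for $x < b$ the algorithm rents and pays $x = \opt$ (ratio $1$), while for $x \geq b$ the algorithm pays $b + M' - 1 = 2b - 1$ against $\opt = b$ (ratio $2 - 1/b$). Hence $\rob_y(A') = 2 - 1/b$. Since $1 + 1/\lambda > 2 > 2 - 1/b$ for all $\lambda \in (0,1)$ and $b \geq 1$---and the strict gap $(1 + 1/\lambda) - 2$ persists in the limit $b \to \infty$---the pair $(\con_y(A'), \rob_y(A'))$ strictly Pareto-dominates $(\con_y(\KD), \rob_y(\KD)) = (1, 1 + 1/\lambda)$ at every $y < b$, violating \Cref{def_strong}.

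The analysis is essentially a case-check, so I do not anticipate substantial obstacles. The only non-mechanical ingredient is the observation that $\KD$'s purchase day $\lceil b/\lambda\rceil$ is excessively late when $y < b$: any $M \in (y, b]$ improves robustness while preserving consistency $1$, and $M = b$ is the natural choice since it is the robustness minimizer within that range (under $M > y$ needed for $\con_y = 1$, the robustness $1 + (b-1)/M$ decreases in $M$ up to $M = b$). The one subtlety is the passage to the asymptotic regime $b \to \infty$, where $\lceil b/\lambda\rceil/b \to 1/\lambda$ and $\lceil \lambda b\rceil/b \to \lambda$; however, since the dominating bound $\rob_y(A') \to 2$ remains strictly below $1 + 1/\lambda$ for every $\lambda < 1$, the strict Pareto domination survives in the limit.
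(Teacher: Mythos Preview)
Your proposal is correct and follows essentially the same approach as the paper: both compute the prediction-specific consistency and robustness of $\KD$ by direct case analysis on the purchase day $M(y)$, and both disprove strong optimality by exhibiting the ``buy on day $b$'' alternative, which matches $\KD$'s consistency of $1$ under $y<b$ while achieving strictly better robustness $2-1/b < 1+1/\lambda$. Your additional remark that any $M\in(y,b]$ would work, with $M=b$ being the robustness-optimal choice in that range, is a nice bit of intuition not spelled out in the paper's proof.
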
  

\subsection{A Strongly-Optimal Algorithm for Deterministic Ski Rental}

We now turn to the design of an algorithm that is \strongly in the asymptotic regime $b \to \infty$. Suppose the decision maker buys at the beginning of day $M$. The objective is to determine $M$ for each $y \in \mY$ such that the resulting prediction-specific consistency-robustness pair $(\con_y(M), \rob_y(M))$ is not dominated by that of any alternative decision $M' \neq M$, while remaining within the universal consistency–robustness bound $(1+\lambda,\, 1+1/\lambda)$. We consider the following two cases.

\noindent \textbf{Case I:} $y < b$.

Given the decision $M$, the prediction-specific consistency and robustness are
\[
\con_y (M) = \begin{cases}
        \frac{M-1+b}{y} & \text{if } M \leq y,\\
        1 & \text{if } M > y.
    \end{cases}
\qquad 
\rob_y (M) = \begin{cases}
        \frac{M-1+b}{M} & \text{if } M \leq b,\\
        \frac{M-1+b}{b} & \text{if } M > b.
    \end{cases}
\]

It's clear that the decision $M = b$ simultaneously achieves the optimal consistency 1 and the optimal robustness $2 - \frac{1}{b}$ with respect to $y$, thus dominating any other decisions.

\noindent \textbf{Case II:} $y \geq b$.

Given the decision $M$, the prediction-specific consistency and robustness are
\[
\con_y (M) = \begin{cases}
    \frac{M-1+b}{b} & \text{if } M \leq y,\\
    \frac{y}{b} & \text{if } M > y.
\end{cases}
\qquad 
\rob_y (M) = \begin{cases}
        \frac{M-1+b}{M} & \text{if } M \leq b,\\
        \frac{M-1+b}{b} & \text{if } M > b.
    \end{cases}
\]

As shown in \Cref{fig:kd_dsr_1} and \ref{fig:kd_dsr_2}, we observe that choosing $M = y + 1$ dominates all options with $M > y + 1$, as it offers better robustness without compromising consistency, making $M = y + 1$ a strong choice. Note that ${\con_y(y+1) = {y}/{b}}$ and ${\rob_y (y+1) = {(y+b)}/{b}}$. Therefore, two critical decision boundaries within $(0, b)$ are $M_1 = y+1-b$ and ${M_2 = {(b^2-b)}/{y}}$.\footnote{Note that $\frac{b^2 - b}{y}$ may not be an integer.} Specifically, when $M_1 < M_2$, $M=y+1$ dominates any choice within $(M_1, M_2)$.
\begin{figure}[htbp]
  \centering
  \begin{minipage}[t]{0.48\textwidth}
    \centering
    \includegraphics[width=\linewidth]{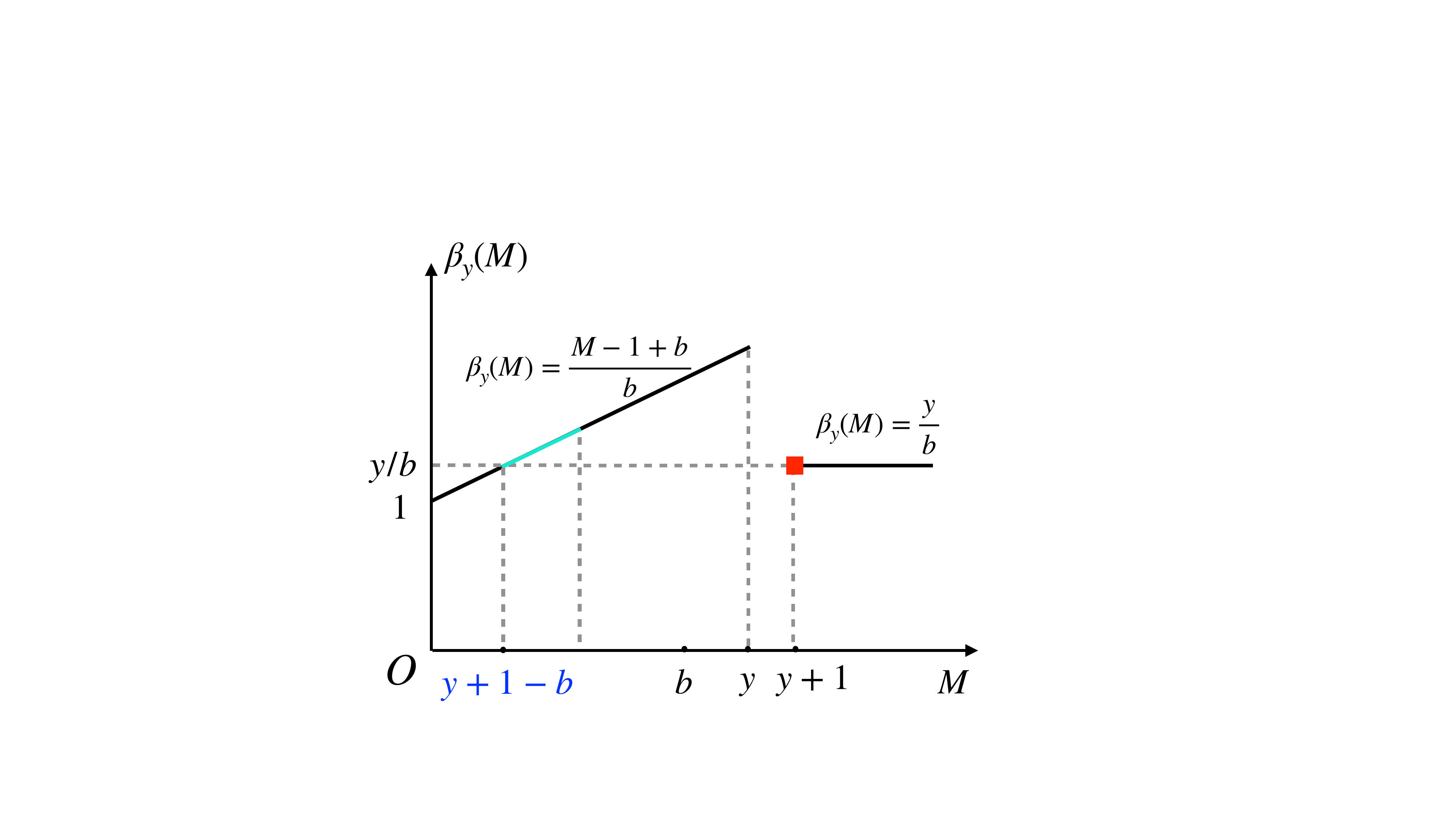}
    \caption{Prediction-specific consistency $\con_y(M)$ versus decision $M$ given $y \geq b$.}
    \label{fig:kd_dsr_1}
  \end{minipage}
  \hfill
  \begin{minipage}[t]{0.48\textwidth}
    \centering
    \includegraphics[width=\linewidth]{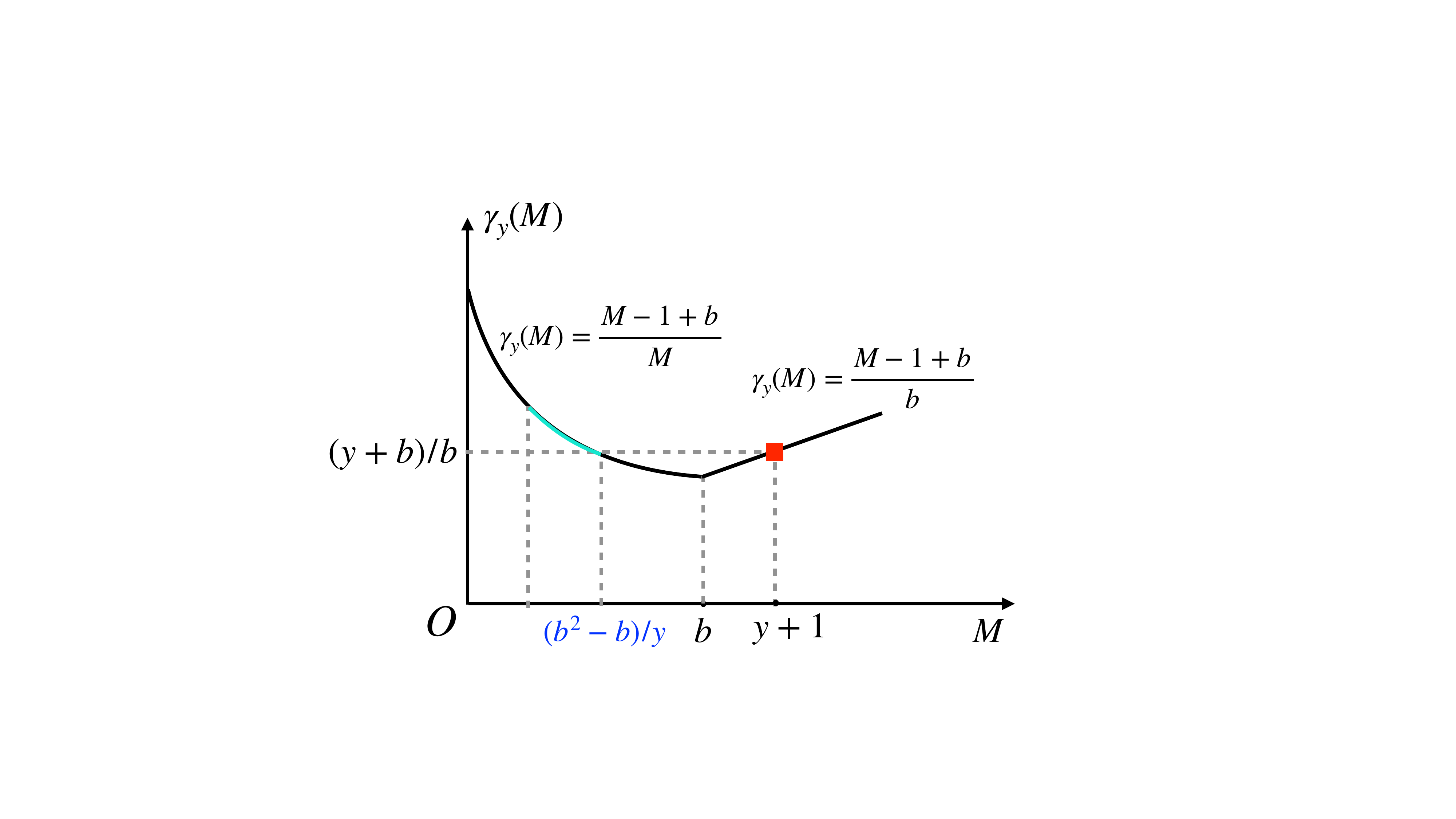}
    \caption{Prediction-specific robustness $\rob_y(M)$ versus decision $M$ given $y \geq b$.}
    \label{fig:kd_dsr_2}
  \end{minipage}
\end{figure}
Our strategy is to set $M = \lceil \lambda b \rceil$ as the primary decision, and to take $M = y+1$ whenever the primary choice is dominated. Given $\lambda \in (0,1)$, the condition that $y + 1$ dominates $\lceil \lambda b \rceil$ in the asymptotic regime $b\to \infty$ requires that $y$ simultaneously satisfies the following constraints: (i) $\lam b \geq y+1 - b$, (ii) $\lam b \leq {(b^2 - b)}/{y}$, which together imply ${y \in [b,  \min \left\{b (\lam+1) - 1, {(b-1)}/{\lam}\right\}]}$.

Therefore, when
$y \in \left[ b,\ \min \left\{ b(\lambda + 1) - 1,\ ({b - 1})/{\lambda} \right\} \right]$, $\lceil \lambda b \rceil$ is dominated by $y + 1$, thereby making $y + 1$ a better choice. On the other hand, when
$y > \min \left\{ b(\lambda + 1) - 1,\ ({b - 1})/{\lambda} \right\}$, $
\lceil \lambda b \rceil$ remains on the Pareto front. 

All together, these cases motivate the design of our deterministic algorithm, $\PDSR$ (Algorithm~\ref{alg:dsr}). In the following theorem, we characterize the prediction-specific consistency and robustness of $\PDSR$, establishing its strong optimality in the asymptotic regime $b \to +\infty$; the full proof is deferred to \Cref{appendix:dsr_2}.

\begin{algorithm}[t!]
\caption{\textsc{\textbf{PDSR}: \textbf{P}rediction-Specific \textbf{D}eterministic \textbf{S}ki \textbf{R}ental}}
\label{alg:dsr}
\begin{algorithmic}[1]
\State \textbf{Input:} $\lam \in (0,1)$
\State \textbf{If} $\y < b$ \textbf{then} determine $M = b $; 
\State \textbf{Else if} {$\y \in[b, \min\{b(\lam+1)-1, \frac{b-1}{\lam}\}]$}  \textbf{then} determine $M = y+1$; 
\State \textbf{Else if} {$ y >\min\{b(\lam+1)-1, \frac{b-1}{\lam}\}$} \textbf{then} determine $M = \lceil \lam b\rceil$; 
\State Buy skis on day $M$.
\end{algorithmic}
\end{algorithm}

\begin{theorem}
    \label{thm:dsr}
    $\PDSR$ presented in Algorithm~\ref{alg:dsr} is \strongly when $b \to +\infty$, and is
    \[
    \begin{cases}
        1\text{-consistent and } (2 - \tfrac{1}{b})\text{-robust} & \text{if } \y < b; \\
        \tfrac{\y}{b}\text{-consistent and } \left(1 + \tfrac{\y}{b}\right)\text{-robust} & \text{if } \y \in \left[ b,\ \min\big\{ b(\lam + 1) - 1,\ \tfrac{b - 1}{\lam} \big\} \right];\\
        (1 + \lam)\text{-consistent and } (1 + \tfrac{1}{\lam})\text{-robust} & \text{if } \y > \min\big\{ b(\lam + 1) - 1,\ \tfrac{b - 1}{\lam} \big\}.
    \end{cases}
    \]
\end{theorem}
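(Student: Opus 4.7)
The plan is to prove Theorem~\ref{thm:dsr} in three stages---verify the claimed $(\con_y, \rob_y)$ values for each case of $\PDSR$, confirm weak optimality, and establish prediction-specific Pareto optimality---with all analysis carried out in the asymptotic regime $b\to\infty$. For each case, I would substitute the algorithm's choice of $M$ into the piecewise expressions for $\con_y(M)$ and $\rob_y(M)$ derived in the preceding discussion. When $y<b$ and $M=b$: since $M>y$ we get $\con_y(b)=1$, and since $M=b$ lies on the boundary of the $\rob_y$ formula, $\rob_y(b)=(b-1+b)/b=2-1/b$. When $y\in[b,\min\{b(\lam+1)-1,(b-1)/\lam\}]$ and $M=y+1$: both $M>y$ and $M>b$ hold, giving $\con_y=y/b$ and $\rob_y=(y+b)/b=1+y/b$. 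When $y$ exceeds the middle range and $M=\lceil\lam b\rceil$: the bounds $\lceil\lam b\rceil\leq b$ (from $\lam<1$) and $\lceil\lam b\rceil\leq y$ (from the lower bound on $y$) yield $\con_y=(\lceil\lam b\rceil+b-1)/b$ and $\rob_y=(\lceil\lam b\rceil+b-1)/\lceil\lam b\rceil$, with limits $1+\lam$ and $1+1/\lam$ respectively as $b\to\infty$.

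For weak optimality, I would compute $\sup_y \con_y$ and $\sup_y \rob_y$ across the three regimes of $y$ with a case split on whether $\lam<1/\phi$ or $\lam>1/\phi$ (where $\phi$ denotes the golden ratio), which determines which branch in $\min\{b(\lam+1)-1,(b-1)/\lam\}$ is active. The resulting suprema combine to $(1+\lam,1+1/\lam)$ as $b\to\infty$, matching the weakly-optimal tradeoff of Wei and Zhang~\cite{Wei2020}. Prediction-specific Pareto optimality is then immediate in two of the three cases: for $y<b$, $M=b$ achieves the globally minimal consistency $\con_y=1$ and simultaneously minimizes $\rob_y$ over all $M$ by the piecewise monotonicity of $\rob_y(\cdot)$ (decreasing for $M\leq b$, increasing thereafter); for $y$ in the high regime, $M=\lceil\lam b\rceil$ realizes the pair $(1+\lam,1+1/\lam)$ lying on the classical ski-rental Pareto curve, so no $M'$ can simultaneously improve both metrics.

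The main obstacle is the middle case, where I must show $M=y+1$ is not Pareto-dominated by any $M'\in\mathbb{N}$. I would rule out candidates by partitioning on the position of $M'$. Any $M'\geq y+2$ shares the consistency $y/b$ but has strictly larger robustness $(M'+b-1)/b>(y+b)/b$. Any $M'\in\{y-b+2,\ldots,y\}$ has strictly larger consistency $(M'+b-1)/b>y/b$. The delicate sub-case is $M'\leq y-b+1$, for which $\con_y(M')\leq y/b$; here achieving $\rob_y(M')=1+(b-1)/M'\leq 1+y/b$ forces $M'\geq b(b-1)/y$, and combining $b(b-1)/y\leq M'\leq y-b+1$ yields the quadratic constraint $y^2-y(b-1)-b(b-1)\geq 0$ whose positive root $y_+=[(b-1)+\sqrt{(b-1)(5b-1)}]/2$ satisfies $y_+/b\to\phi$ as $b\to\infty$. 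Since $\min\{\lam+1,1/\lam\}\leq\phi$ for every $\lam\in(0,1)$ (with equality only at $\lam=1/\phi$), the algorithm's middle range forces $y/b<\phi$ asymptotically, so this sub-case is vacuous and $M=y+1$ is Pareto-optimal. Combining weak optimality with prediction-specific Pareto optimality for every $y\in\mY$ then yields strong optimality per Definition~\ref{def_strong}.
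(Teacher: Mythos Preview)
Your three-stage plan and the verification of the $(\con_y,\rob_y)$ values are correct and match the paper. Your Case~II analysis is also essentially equivalent to the paper's: the paper observes that the two defining inequalities of the middle range, $y\le b(\lam+1)-1$ and $y\le (b-1)/\lam$, directly give the chain $y+1-b\le \lam b\le (b^2-b)/y$, hence $(b^2-b)/y\ge y+1-b$, which is precisely your quadratic condition $y^2-(b-1)y-b(b-1)\le 0$. Your detour through $y_+/b\to\phi$ and $\min\{1+\lam,1/\lam\}\le\phi$ reaches the same conclusion asymptotically, but the paper's direct chain is cleaner and works for every finite $b$ without invoking limits (it also transparently handles the boundary point $M'=(b^2-b)/y$ when it is an integer).

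There is, however, a genuine gap in your treatment of Case~III. You write that $(1+\lam,1+1/\lam)$ ``lies on the classical ski-rental Pareto curve, so no $M'$ can simultaneously improve both metrics.'' This inference is invalid: the classical (Wei--Zhang) lower bound constrains the \emph{global} consistency--robustness pair $(\sup_y\con_y,\sup_y\rob_y)$, not the prediction-specific pair for a fixed $y$. Indeed, the entire point of the paper is that prediction-specific pairs can lie strictly below the global Pareto curve, as your own Case~II demonstrates (where $(y/b,1+y/b)$ beats $(1+\lam,1+1/\lam)$ in both coordinates for $y$ in the middle range). So the classical curve tells you nothing about whether some $M'$---in particular $M'=y+1$---dominates $M=\lceil\lam b\rceil$ for a specific $y$ in the high regime. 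The paper handles this explicitly: it checks $q<\lceil\lam b\rceil$ (robustness worsens), $\lceil\lam b\rceil<q\le y$ (consistency worsens), and $q>y$ (uses $y>\min\{b(\lam+1)-1,(b-1)/\lam\}$ to show that either $\con_y(y+1)=y/b>1+\lam$ or $\rob_y(y+1)=1+y/b>1+1/\lam$ asymptotically). You already have the ingredients for this last step from your Case~II analysis, which characterizes exactly when $M=y+1$ dominates $\lceil\lam b\rceil$; you just need to invoke the contrapositive for $y$ outside the middle range rather than appeal to the global curve.
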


\section{Randomized Ski Rental \label{sec:rsr}}

We now turn to considering the design of randomized algorithms for the discrete-time ski rental problem.
% described in~\Cref{sec:online_problems}.
Consider a randomized algorithm $\pi=(\pi_i)_{i\in\mathbb{N}_+}$ (that may depend on the predicted day $\y$) that chooses to buy skis at the start of day $i\in\mathbb{N}_+$ with some probability $\pi_i$. The algorithm's average cost is 
\[{\alg_{\mathrm{RSR}}(\pi,\x,\y)\coloneqq \sum_{i=1}^{+\infty} \pi_i(y) \cdot[x\cdot\mathds{1}_{i>\x}+(b+i-1)\cdot\mathds{1}_{i\leq \x}]},\] 
where $b \in \mathbb{N}$ denotes the price of the skis. For this problem, the prediction-specific consistency and robustness in~\Cref{eq:ps_cons_rob} are instantiated as:
\begin{align}
    \con_\y^{\mathrm{RSR}}(A_{\pi(\y)}) &\coloneqq \frac{\alg_{\mathrm{RSR}}(\pi,\y,\y)}{\min \{b, \y\}}, \quad 
    \rob_\y^{\mathrm{RSR}}(A_{\pi(\y)})\coloneqq \sup_{\x \in \mathbb{N}} \frac{\alg_{\mathrm{RSR}}(\pi,\x,\y)}{\min \{b, \x\}}. 
    \label{ps_con_rob_rsr}
\end{align}

\subsection{The Randomized Algorithm of Kumar et al. is Not Strongly-Optimal}

Kumar et al.~\cite{Kumar2018} propose a randomized ski rental algorithm, which we denote by $\KR$, which is a variant of Karlin’s classical randomized strategy~\cite{Karlin1988}. Given a prediction $y$ and a hyper-parameter $\lambda \in (1/b,1)$, $\KR$ chooses the purchase day $i$ according to the distribution
\begin{align}
\pi_i =
\begin{cases}
\left( \tfrac{b-1}{b} \right)^{m-i} \cdot \tfrac{1}{\,b \left( 1 - \left( 1 - \tfrac{1}{b} \right)^m \right)}, & \text{if } i \le m, \\
0, & \text{if } i > m,
\end{cases}
\label{eq:KR-distribution}
\end{align}
where
\[
m \coloneqq 
\begin{cases}
\lceil b/\lambda \rceil, & \text{if }y < b, \\
\lfloor \lambda b \rfloor, & \text{if } y \ge b .
\end{cases}
\]
Under this strategy, $\KR$ achieves consistency $\smash{{\lam}/({1 - e^{-\lam}})}$ and robustness $\smash{{1}/({1 - e^{-(\lam - 1/b)}})}$~\cite{Kumar2018}.
It is known that $\KR$ is \weakly as $b \to +\infty$~\cite{Wei2020}; however, as we show in the following theorem, which is proved in \Cref{appendix:kr}, $\KR$ is not 
\strongly.

\begin{theorem}
    \label{thm:KR}
    $\KR$ is not \strongly, even for $b \to +\infty$.
\end{theorem}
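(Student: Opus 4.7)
Plan: I prove the statement by exhibiting, for each $\lam \in (1/b, 1)$, an alternative randomized distribution $\pi'$ whose prediction-specific pair $(\con_1(\pi'), \rob_1(\pi'))$ strictly Pareto-dominates $(\con_1(\KR), \rob_1(\KR))$ at the prediction $\y = 1$. This mirrors the template of Theorem~\ref{thm:kd}: when $\y < b$, $\KR$ uses support $\{1, \dots, m\}$ with $m = \lceil b/\lam \rceil \geq b+1$, wastefully placing mass on day~$1$ (incurring buying cost $b$ when $\x = \y = 1$) and on days beyond~$b$ (inflating rental cost under adversarial instances).

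First, I would compute $\KR$'s prediction-specific metrics at $\y = 1$. On the consistent instance $\x = 1$, a direct calculation gives $\con_1(\KR) = 1 + (b-1)\pi^{\KR}_1 \to 1/(1-e^{-1/\lam}) > 1$ as $b \to \infty$, with strict inequality at every finite $b$ because $\pi^{\KR}_1 > 0$. For robustness, the adversarial instance $\x = m$ forces every atom of $\pi^{\KR}$ to contribute buying cost while $\min(b, m) = b$, giving $\rob_1(\KR) \geq m/[b(1-(1-1/b)^m)] \to 1/[\lam(1-e^{-1/\lam})]$ as $b \to \infty$. Next, I define $\pi'$ as the Karlin-shape distribution on $\{2, \dots, b\}$ with atoms $\pi'_i \propto (1-1/b)^{b-i}$ and normalizing constant $C' = 1/[b(1-q^{b-1})]$, where $q = 1-1/b$. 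Since $\pi'_1 = 0$, the algorithm never buys on the one-day consistent instance, so $\con_1(\pi') = 1$ exactly. A direct summation (or induction on $\x$ using the telescoping identity $\alg(\x+1) - \alg(\x) = (b-1)\pi'_{\x+1} + \sum_{i \geq \x+1} \pi'_i = 1/(1-q^{b-1})$) yields the closed form $\alg(\x) = \x/(1-q^{b-1}) - (b-1)q^{b-2}/[b(1-q^{b-1})]$ for $\x \in \{2, \dots, b\}$, so the ratio $\alg(\x)/\x = [1 - (b-1)q^{b-2}/(b\x)]/(1-q^{b-1})$ is monotone increasing on $[2, b]$. Combined with $\alg(\x) = \alg(b)$ for every $\x \geq b$, the supremum is attained at $\x \geq b$, yielding $\rob_1(\pi') = 1 + q^b/(1-q^{b-1}) \to e/(e-1)$ as $b \to \infty$.

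The comparison now reduces to showing $1/[\lam(1-e^{-1/\lam})] > e/(e-1)$ for every $\lam < 1$. Setting $h(\lam) := \lam(1-e^{-1/\lam})$, its derivative $h'(\lam) = 1 - e^{-1/\lam}(1 + 1/\lam)$ is strictly positive on $(0, \infty)$ by the elementary inequality $e^{1/\lam} > 1 + 1/\lam$ for $\lam > 0$; hence $h$ is strictly increasing on $(0, 1]$ with $h(1) = 1 - e^{-1}$, implying the desired strict inequality. Combining, for any fixed $\lam \in (0, 1)$ and all sufficiently large $b$, both $\con_1(\pi') = 1 < \con_1(\KR)$ and $\rob_1(\pi') < \rob_1(\KR)$ hold simultaneously, so $\pi'$ strictly Pareto-dominates $\KR$ at $\y = 1$, violating condition~(2) of Definition~\ref{def_strong} and proving the theorem. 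The main subtlety is the robustness comparison as $\lam \to 1$: both $\rob_1(\KR)$ and $\rob_1(\pi')$ converge to $e/(e-1)$, so one must track the $\Theta(1-\lam)$ gap in $\rob_1(\KR)$ (from a first-order expansion of $h$ at $\lam=1$) against the $O(1/b)$ overshoot of $\rob_1(\pi')$; this is resolved cleanly by fixing $\lam < 1$ first and then letting $b \to \infty$.
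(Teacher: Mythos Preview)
Your proof is correct but takes a substantially more elaborate route than the paper's. The paper's argument is a two-line application of Lemma~\ref{lemma:restrict}: for any $\y < b$, $\KR$'s support extends to day $m = \lceil b/\lam\rceil > b$; shifting all mass beyond $b$ onto day $b$ leaves $\con_\y$ unchanged (since the consistent instance $\x = \y < b$ cannot tell the difference) while strictly reducing $\rob_\y$ (since the worst attack for $\KR$ sits at $\x = m > b$). That is enough to violate Definition~\ref{def_strong}, with no limits or explicit ratio computations needed, and it works for every finite $b$ and every $\y < b$.

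Your approach instead fixes $\y = 1$, builds a fresh Karlin-shape distribution on $\{2,\dots,b\}$, and computes both prediction-specific metrics in closed form to show strict improvement in \emph{both} coordinates as $b\to\infty$. This buys you something the paper's shift does not: an explicit quantification of the gap (namely $\rob_1(\KR)\to 1/[\lam(1-e^{-1/\lam})]$ versus $\rob_1(\pi')\to e/(e-1)$) and simultaneous strict dominance in consistency as well. The cost is the extra computation and the need to argue the robustness comparison only asymptotically. Your telescoping identity and the monotonicity argument $h'(\lam) = 1 - e^{-1/\lam}(1+1/\lam) > 0$ are both correct; in fact your formula $\alg(\x) = [\x - q^{b-1}]/(1-q^{b-1})$ actually holds for $\x \in \{1,\dots,b\}$, not just $\{2,\dots,b\}$, which slightly streamlines the argument.
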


\subsection{A Strongly-Optimal Algorithms for Randomized Ski Rental}

In the analysis that follows, we focus on designing a \strongly algorithm for randomized ski rental. We use $\ratio (\prob, \x)$ to denote the expected ratio between the cost achieved by a randomized algorithm that uses a distribution \( \prob \) over purchase days and the offline solution when the actual ski season lasts \( \x \) days, i.e.,
\[
    \ratio (\prob, \x) \coloneqq{\alg_{\mathrm{RSR}}(\pi,\x,\y)}/{\min\{b, x\}}.
\]
Let $\con_y(\pi)$ and $\rob_y(\pi)$ denote the prediction-specific consistency and robustness of $\prob$; thus,
\[\con_y (\pi) = \ratio(\pi, y) \qquad \rob_y (\pi) = \max_{x \in \mathbb{N}_+} \ratio(\pi, x).\]

\textbf{An Optimization-Based Algorithm.} While the bi-level optimization-based meta-algorithm (Algorithm~\ref{alg:meta}) provides a general method for computing strongly-optimal strategies, solving the underlying optimization problems can be complicated in general---in particular, since they require optimizing over all possible algorithms. Fortunately, for randomized ski rental, it is possible to restrict the support of the randomized strategy to a finite set $\mU(b, y) \coloneqq [b] \cup \{y+1\}$ without loss of optimality, thus enabling a computationally tractable solution to these problems (see \Cref{lemma:restrict} in \Cref{appendix:supporting_lemma}). Given a target robustness level $\overline{\rob} \in \Lambda_\rob \coloneqq \left[e_b/(e_b-1), +\infty\right)$, where $e_b \coloneqq (1 + \frac{1}{b-1})^b$, this structural result allows us to specialize the bi-level optimization framework introduced in~\Cref{sec:bi-level_opt_method} as follows:

\noindent % Prevent indentation for the first minipage
\begin{minipage}[t]{0.48\textwidth} % First column, slightly less than half width
\vspace{0pt} % Ensures top alignment is correct
\textbf{Problem 1 ($\mathcal{P}^\mathrm{RSR}_1$): Minimize Consistency}
\begin{subequations}
\label{optimization_1}
\begin{align}
    \min_{\pi, \con_y}\quad& \con_y\\
    \text{s.t.}\quad& \ratio(\pi,x) \leq \overline{\rob}, \forall x \in \mU(b,y) \\
    &\ratio (\pi, y) \leq \con_y
\end{align}   
\end{subequations}
Let $\{\overline{\pi}, \con^* \}$ denote the optimal solution.
\end{minipage}% % The '%' prevents extra space between minipages
\hfill % Pushes the minipages apart
\begin{minipage}[t]{0.48\textwidth} % Second column
\vspace{0pt} % Ensures top alignment is correct
\textbf{Problem 2 ($\mathcal{P}^\mathrm{RSR}_2$): Minimize Robustness}
\begin{subequations}
\label{optimization_2}
\begin{align}
    \min_{\pi, \rob_y}\quad& \rob_y\\
    \text{s.t.}\quad& \ratio(\pi,x) \leq \rob_y, \forall x \in \mU(b,y) \\
    &\ratio(\pi, y) \leq \con_y^*
\end{align}   
\end{subequations}
Let $\{\pi^*, \rob_y^* \}$ denote the optimal solution.
\end{minipage}

\begin{algorithm}[t!]
\caption{Variant of Algorithm~\ref{alg:meta} for randomized ski rental}
\label{alg:orsr}
\begin{algorithmic}[1]
\State \textbf{Input:} $\overline{\rob} \in \Lambda_\rob \coloneqq [e_b/(e_b - 1), \infty)$;
\State Compute $\{\con_y^*, \overline{\pi}\}$ by solving $\mathcal{P}^\mathrm{RSR}_1 (\rob, y)$ (Problem~\ref{optimization_1});
\State Obtain $\{\rob_y^*, \pi^*\}$ by solving $\mathcal{P}^\mathrm{RSR}_2 (\con_y^*, y)$ (Problem~\ref{optimization_2});
\State Choose $i$ randomly according to the distribution $\pi^*$;
\State Buy the skis at the start of day $i$.
\end{algorithmic}
\end{algorithm}

Note that both of these problems are linear programs, and thus the corresponding meta-algorithm (Algorithm~\ref{alg:orsr}) is tractable to implement in this case. As such, we obtain as a consequence of \Cref{prop:oba} the following strong optimality result (see \Cref{app:orsr} for a full proof).

\begin{theorem}
\label{thm:orsr}
    Algorithm~\ref{alg:orsr} is \strongly when $b \to \infty$.
\end{theorem}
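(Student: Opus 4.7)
The plan is to derive \Cref{thm:orsr} essentially as a corollary of \Cref{prop:oba}, with the bridge provided by the structural reduction encoded in \Cref{lemma:restrict}. In other words, I will argue that Algorithm~\ref{alg:orsr} is nothing more than a computationally tractable incarnation of the general bi-level meta-algorithm (Algorithm~\ref{alg:meta}) specialized to randomized ski rental, and then the desired strong optimality follows directly from the abstract guarantee.

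First, I would verify the hypothesis of \Cref{prop:oba}, namely that for every target robustness $\overline{\rob} \in \Lambda_\rob = [e_b/(e_b-1),+\infty)$ there exists a weakly-optimal randomized algorithm with robustness $\overline{\rob}$ in the limit $b \to +\infty$. This is exactly the content of the matching upper and lower bounds of Kumar et al.~\cite{Kumar2018} and Wei and Zhang~\cite{Wei2020}: as $b \to +\infty$, the family $\KR$ (parameterized by $\lambda$) traces out the entire Pareto frontier $\{(\gamma \log(1 + 1/(\gamma-1)), \gamma) : \gamma \geq e/(e-1)\}$ between consistency and robustness, so one can select a weakly-optimal algorithm whose robustness equals any prescribed $\overline{\rob}$.

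Second, I would invoke \Cref{lemma:restrict} to show that the specialization ($\mathcal{P}_1^{\mathrm{RSR}}$, $\mathcal{P}_2^{\mathrm{RSR}}$) to distributions supported on the finite set $\mU(b,y) = [b] \cup \{y+1\}$ is without loss of generality. Concretely, this says that the optimal value of $\mathcal{P}_1^{\mathrm{RSR}}(\overline{\rob}, y)$ coincides with that of the general problem $\mathcal{P}_1(\overline{\rob}, y)$ in the meta-algorithm, and analogously for $\mathcal{P}_2$; moreover, the distribution $\pi^*$ returned by Algorithm~\ref{alg:orsr} is an admissible optimizer of the abstract problem over all distributions on $\mathbb{N}_+$. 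Hence the online decision induced by Algorithm~\ref{alg:orsr} at prediction $y$ is identical, in its $(\con_y, \rob_y)$-performance, to the decision prescribed by Algorithm~\ref{alg:meta} at prediction $y$ with the same $\overline{\rob}$.

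Combining the two steps: by \Cref{prop:oba}, Algorithm~\ref{alg:meta} is \strongly, i.e., Pareto-optimal in the classic $(\beta,\gamma)$-plane and also prediction-specific Pareto-optimal in every $(\con_y,\rob_y)$-plane. Because Algorithm~\ref{alg:orsr} realizes the same input--output behavior up to the reduction of \Cref{lemma:restrict}, it inherits both optimality properties, establishing strong optimality in the regime $b \to +\infty$. The main technical delicacy I foresee is confirming that \Cref{lemma:restrict} genuinely supplies the ``no loss'' reduction required in both the outer problem $\mathcal{P}_1$ (where the infimum is over all $\overline{\rob}$-robust algorithms) and the inner problem $\mathcal{P}_2$ (where one searches for the best robustness subject to the just-optimized consistency $\con_y^*$); once that reduction is in hand on both levels, the conclusion is immediate.
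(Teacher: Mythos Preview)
Your proposal is correct and mirrors the paper's own proof almost exactly: the paper likewise (i) verifies the hypothesis of \Cref{prop:oba} by noting that Karlin's algorithm (for $\overline{\rob}=e_b/(e_b-1)$) and $\KR$ (for larger $\overline{\rob}$, as $b\to\infty$) furnish weakly-optimal algorithms at every target robustness, (ii) invokes \Cref{lemma:restrict} to reduce $\mathcal{P}_1,\mathcal{P}_2$ to their finite-support versions $\mathcal{P}_1^{\mathrm{RSR}},\mathcal{P}_2^{\mathrm{RSR}}$, and (iii) concludes via \Cref{prop:oba}. No substantive difference.
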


\textbf{Explicit Algorithm Design.} While the optimization-based algorithm described in \Cref{alg:orsr} is strongly-optimal, it does not provide much insight into the analytic structure of the resulting probability distribution over purchase days. Complementing this result, we can in fact leverage the problem structure to derive a novel and \emph{explicit} \strongly randomized algorithm \RSR (Algorithm~\ref{alg:rsr}), whose construction we detail in Appendix~\ref{app:rsr}. The algorithm builds on two transformation procedures, \opA (Algorithm~\ref{alg:op_A}) and \opB (Algorithm~\ref{alg:op_B}), which systematically adjust the “equalizing distributions” (Theorem~\ref{thm:eq_form}, Equation~\ref{dis:eq}) by reallocating probability mass to trace the prediction-specific Pareto frontier for each $y$. As we show in the following theorem, \RSR, like the optimization-based approach, achieves strong optimality.

\begin{theorem}
    \label{thm:rsr}
    The algorithm \RSR~(\Cref{alg:rsr}) is \strongly when $b \to +\infty$.
\end{theorem}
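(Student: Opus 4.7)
The plan is to establish strong optimality of \RSR\ by verifying both clauses of \Cref{def_strong} directly: (i) prediction-specific Pareto optimality of the distribution $\pi^y$ produced by \RSR\ for every $y$, and (ii) weak optimality in the limit $b\to+\infty$. The natural entry point is \Cref{lemma:restrict}, which reduces the search to distributions supported on the finite set $\mU(b,y)=[b]\cup\{y+1\}$. This finite-support reduction places \RSR\ and the optimization-based algorithm \Cref{alg:orsr} on the same feasible polytope, so one productive route is to show that for each $y$ and each choice of the tunable robustness parameter $\overline{\rob}$, the distribution produced by Operations A and B is an optimal solution to exactly the same pair of linear programs $\mathcal{P}^{\mathrm{RSR}}_1$ and $\mathcal{P}^{\mathrm{RSR}}_2$ solved by \Cref{alg:orsr}. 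Then \Cref{thm:orsr} together with \Cref{prop:oba} yields strong optimality of \RSR.

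For the prediction-specific Pareto claim, I would first analyze the canonical \emph{equalizing distributions} of \Cref{thm:eq_form}: these are the distributions on $[b]$ with the property that $\ratio(\pi,x)$ is constant across $x\in[b]$, which is the classical Karlin-type construction adapted to the target robustness $\overline{\rob}$. A perturbation argument---the same kind cited in the paper's novel primal--dual technique---shows that any $\pi$ supported on $[b]$ whose robustness equals $\overline{\rob}$ but which is not equalizing can be locally improved by shifting probability mass between adjacent days, so the equalizing distribution is the unique optimal starting point subject to the robustness constraint. Operations A and B then reallocate mass between $[b]$ and the additional atom at $y+1$: \opA\ shifts mass toward $y+1$ to tighten $\con_y$ against $\overline{\rob}$, and \opB\ rebalances inside $[b]$ so that $\ratio(\pi,y)$ meets the prescribed consistency $\con_y^*$ while $\rob_y$ is driven down as far as possible. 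For each $y$, I would pair the resulting primal distribution with dual multipliers supported exactly on the set of constraints $\ratio(\pi,x)=\rob_y^*$ (for $x\in[b]$) and $\ratio(\pi,y)=\con_y^*$; complementary slackness then certifies optimality of the output for $\mathcal{P}^{\mathrm{RSR}}_2(\con_y^*,y)$, hence prediction-specific Pareto optimality.

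Weak optimality in the asymptotic regime $b\to\infty$ should follow by taking the worst case over $y\in\mY$ of the $(\con_y,\rob_y)$ pairs just derived and matching it to the Wei--Zhang lower bound~\cite{Wei2020}, $\con\geq\gamma\log(1+1/(\gamma-1))$. The equalizing portion of the construction is precisely the Karlin-type distribution that attains this bound in the limit, so the supremum over $y$ recovers the tight worst-case curve; no $y$ can worsen the tradeoff because Operations A and B only reallocate mass subject to the robustness ceiling $\overline{\rob}$.

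The main obstacle will be the second step above: showing rigorously that \opA\ and \opB, which are defined procedurally, produce exactly the primal--dual optimal solutions of $\mathcal{P}^{\mathrm{RSR}}_1$ and $\mathcal{P}^{\mathrm{RSR}}_2$ rather than merely feasible points on the Pareto frontier. This requires a careful case analysis splitting on whether $y<b$ or $y\geq b$, identifying which constraints among $\{\ratio(\pi,x)\leq\rob_y:x\in[b]\}\cup\{\ratio(\pi,y+1)\leq\rob_y\}$ are active after each operation, and constructing nonnegative dual multipliers on precisely those active constraints whose induced reduced costs vanish on the support of the output distribution. A secondary technical difficulty is controlling the discretization error terms of order $1/b$ uniformly in $y$ when passing to the $b\to\infty$ limit, so that the prediction-specific bounds coincide with the continuous Karlin-type expressions on the Pareto frontier.
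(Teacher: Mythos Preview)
Your skeleton matches the paper's: reduce to finite support via \Cref{lemma:restrict}, establish prediction-specific Pareto optimality of the \opA/\opB output for each $y$, then recover weak optimality by taking the worst-case $y$ and matching the Wei--Zhang bound. Routing through ``\RSR\ solves the same LPs as \Cref{alg:orsr}'' is a legitimate strategy in principle. However, your description of the operations contains errors that would block the execution.

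The most serious is your account of \opB. When $y<b$, the point $y+1$ already lies in $[b]$, so there is no ``additional atom at $y+1$'' to reallocate toward; the output of \opB is supported on $\{1,\dots,r\}\cup\{y+1,\dots,b\}\subset[b]$. Moreover \opB does not start from an equalizing distribution on $[b]$ tuned to $\overline{\rob}$; it starts from $\eq[y+1,b]$, which is $1$-consistent with robustness $\rob_\nu$, and then \emph{adds} mass at early days $1,\dots,r$ to \emph{lower} the robustness toward the target while \emph{sacrificing} consistency---the opposite direction to what you wrote. Your primal--dual sketch also oversimplifies which constraints are active: the paper first uses perturbation (\Cref{lemma:pi_i_neq_0}) to show the LP optimum has $\pi_i\ne0$ for $i\in\{y+1,\dots,b\}$, then uses complementary slackness \emph{both ways}---from primal support to binding dual constraints, and then a separate contradiction argument on the dual (\Cref{lemma:_binding}) to show $\lambda_i\ne0$ on $\{y+1,\dots,b\}$, which forces the primal robustness constraints to bind there (and only there, not on all of $[b]$). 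Only after this structure is pinned down can the \opB output be identified with the LP optimum (\Cref{lemma:prepare}, \Cref{lemma:y<b}). Finally, for $y\ge b$ the paper does \emph{not} use primal--dual at all; \Cref{lemma:y>=b} is a direct perturbation/case analysis on the first index where a competing $\pi$ differs from the \opA output, split according to which termination condition of \opA fired. Your uniform primal--dual plan across both regimes would need a separate argument here.
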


The proof of \Cref{thm:rsr} is detailed in \Cref{app:proof_prsr}. We begin with a perturbation-based analysis that characterizes the structure of the optimal randomized distribution, and then establish prediction-specific optimality via a primal–dual formulation.

\color{black}

\section{One-Max Search}
\label{sec:one-max_search}

In this section, we shift our focus to the one-max search problem described in~\Cref{sec:online_problems} with predicted maximum price. Let $x$ denote the true highest price and let $y$ denote the predicted highest price. If the decision-maker decides to set their purchase threshold to $\Phi\in[L, U]$---i.e., they purchase at the first price exceeding $\Phi$, which may depend on the prediction $\y$---then their reward is 
\[\alg_{\mathrm{OMS}}(\Phi,\x,\y)\coloneqq \Phi(y)\cdot\mathds{1}_{\Phi(\y) \leq \x}+L\cdot\mathds{1}_{\Phi(\y) > \x}.\] 
For this problem, the prediction-specific consistency and robustness are:
\begin{align}
    \con_\y^{\mathrm{OMS}}(A_{\Phi(\y)}) &\coloneqq \frac{\y}{\alg_{\mathrm{OMS}}(\Phi,\y,\y)}, \quad 
    \rob_\y^{\mathrm{OMS}}(A_{\Phi(\y)})\coloneqq \sup_{\x \in [L, U]} \frac{x}{\alg_{\mathrm{OMS}}(\Phi,\x,\y)}. \label{eq:ps_con_rob_oms}
\end{align}

\subsection{The Algorithm of Sun et al. is Not Strongly-Optimal}
Sun et al.~\cite{Sun2021} proposed a $\con$-consistent and $\rob$-robust  Online Threshold-based Algorithm (\textsc{OTA}) using the threshold
\[
\Phi = 
\begin{cases}
    \label{Sun's}
    L \con, &\text{if } y \in [L, L \con );\\
    \lam L \rob + (1-\lam) y / \con &\text{if } y \in [L\con, L \rob );\\
    L\rob &\text{if } y \in [L\rob, U],
\end{cases}
\]
where $\con = 2\lam\theta/[\sqrt{(1-\lam)^2 + 4\lam \theta} - (1-\lam)]$, $\rob  = \theta / \con$, and $\lam \in [0,1]$. Furthermore, they show that any $\rob$-robust deterministic algorithm for one-max search must have consistency $\con \geq \theta / \rob$, implying their algorithm is \weakly. However, as we establish in the following theorem (which is proved in \Cref{appendix:oms_1}), their algorithm is not \strongly.

\begin{theorem}
    \label{thm:sun}
    Sun's algorithm is not \strongly.
\end{theorem}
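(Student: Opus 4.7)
The approach is to refute strong optimality by producing a single prediction value $y^\star \in [L, U]$ and an alternative threshold strategy whose prediction-specific pair $(\con^{\mathrm{OMS}}_{y^\star}, \rob^{\mathrm{OMS}}_{y^\star})$ strictly Pareto-dominates that of Sun's algorithm (hereafter \textsc{OTA}) at $y^\star$; by \Cref{def_strong}, this suffices. I would take $y^\star = L$ and compare \textsc{OTA} against the prediction-independent ``balanced'' threshold $\Phi' = L\sqrt{\theta}$---the classical optimum for one-max search without predictions. The intuition is that for small $y$, \textsc{OTA} uses the threshold $L\con$ so as to be prepared for predictions near the top of regime~1, but at $y = L$ this high threshold buys no consistency over the balanced alternative while paying a strictly worse robustness whenever $\lam < 1$.

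The first step is to compute \textsc{OTA}'s prediction-specific pair at $y = L$. For $\lam \in (0,1)$, we have $L\con > L$ and $y=L$ lies in regime~1 of the threshold rule, so $\Phi = L\con$; for $\lam = 0$, regime~1 is empty and $y=L$ lies in regime~2 with $\Phi = L$. In both cases, substituting $x = y = L$ into $\alg_{\mathrm{OMS}}(\Phi, x, y) = \Phi\, \mathds{1}_{\Phi \le x} + L\, \mathds{1}_{\Phi > x}$ yields $\alg = L$, giving $\con^{\mathrm{OMS}}_L(\textsc{OTA}) = 1$. For the robustness, I would split the sup over $x \in [L, U]$ into the accept regime $x \ge \Phi$, which yields ratio at most $U/\Phi = \theta/\con = \rob$, and the default regime $x < \Phi$, which yields ratio at most $\Phi/L \le \con$; since $\rob \ge \con$, this gives $\rob^{\mathrm{OMS}}_L(\textsc{OTA}) = \rob$.

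The second step repeats this computation for the alternative $\Phi' = L\sqrt{\theta}$ at $y = L$. Since $\Phi' > L$, the consistent-case cost is again $L$, so the consistency is $1$. The same case split gives robustness $\max(\Phi'/L,\ U/\Phi') = \max(\sqrt{\theta},\ \sqrt{\theta}) = \sqrt{\theta}$. Combining with the identity $\con \cdot \rob = \theta$ and the strict inequality $\con < \sqrt{\theta}$ for all $\lam \in [0, 1)$---which follows directly from Sun's closed form for $\con$, a strictly monotone function of $\lam$ equal to $1$ at $\lam = 0$ and to $\sqrt{\theta}$ at $\lam = 1$---we obtain $\rob > \sqrt{\theta}$ strictly. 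Thus the alternative pair $(1, \sqrt{\theta})$ strictly Pareto-dominates \textsc{OTA}'s pair $(1, \rob)$ at $y^\star = L$, violating the second condition of \Cref{def_strong} and establishing the theorem.

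The proof reduces to a careful counterexample rather than a substantive technical argument, so no step is a serious obstacle. The main care required is in the boundary case analysis---verifying uniformly across $\lam \in (0,1)$ (regime~1) and $\lam = 0$ (regime~2) that \textsc{OTA} produces the same pair $(1, \rob)$ at $y = L$, and confirming $\con < \sqrt{\theta}$ strictly for all $\lam < 1$---both of which reduce to short algebraic checks using Sun's closed forms for $\con$ and $\rob$.
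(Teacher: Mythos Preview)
Your proposal is correct and takes essentially the same approach as the paper: both compare Sun's threshold against the classical balanced threshold $\sqrt{LU}$ at a prediction in the first regime, showing equal prediction-specific consistency but strictly worse robustness for Sun's algorithm whenever $\lam < 1$. The only difference is that you specialize to $y^\star = L$ (and handle the $\lam = 0$ boundary explicitly), whereas the paper argues over the whole range $y \in [L, L\con)$, where the consistency is $y/L$ rather than $1$; the underlying domination argument is identical.
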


% \subsection{A \strongly Algorithm for One-Max Search}
% We prove this theorem in .

\subsection{A Strongly-Optimal Algorithm for One-Max Search}
We propose in \Cref{alg:one-max search} a new approach, $\PST$, which, by more carefully selecting the purchase threshold, achieves strong optimality. In particular, it achieves the prediction-specific consistency and robustness values established in the following theorem.

% We now present a \strongly algorithm for one-max search.

\begin{algorithm}[t!]
\small
\caption{\textsc{\textbf{$\PST$}: \textbf{P}rediction-\textbf{S}pecific \textbf{T}hresholding}}
\label{alg:one-max search}
\begin{algorithmic}[1]
\State \textbf{Input:} $\lam \in [0,1]$
\State Determine $M = \lam L + (1-\lam)\sqrt{LU}$;
\State \textbf{If} $y \in [L, M]$ \textbf{then} set $\Phi = \sqrt{LU} $; 
\State \textbf{Else if} {$y \in(M, \sqrt{LU}]$}  \textbf{then} set $\Phi = y$; 
\State \textbf{Else if} {$y \in (\sqrt{LU}, U] $} \textbf{then}
determine $\mu = \frac{(1 - \lam)\sqrt{\theta}}{(1-\lam)\sqrt{\theta}+\lam}$ and
 set $\Phi = \mu \sqrt{LU} + (1 - \mu)y$;
\State Perform \textsc{OTA} with threshold $\Phi$.
\end{algorithmic}
\end{algorithm}

\begin{theorem}
    \label{thm:oms}
  PST (Algorithm~\ref{alg:one-max search}) is \strongly, and is
    \[
    \begin{cases}
        (\y / L)\text{-consistent and } \sqrt{\theta}\text{-robust} & \text{if } \y \in [L, \lam L+(1-\lam)\sqrt{LU}); \\
        1\text{-consistent and } (U/\y)\text{-robust} & \text{if } \y \in [\lam L+(1-\lam)\sqrt{LU}, \sqrt{LU}]; \\
        \left(\frac{(1-\lam)\sqrt{\theta}\y + \lam \y}{(1-\lam)U+\lam \y}\right)\text{-consistent and } \left(\frac{(1-\lam)U + \lam \y}{(1-\lam)\sqrt{LU}+\lam L}\right)\text{-robust} & \text{if } \y  \in (\sqrt{LU}, U].
    \end{cases}
    \]
\end{theorem}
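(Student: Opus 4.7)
The proof proceeds in three stages: (i) direct computation of $\con_y^{\mathrm{OMS}}$ and $\rob_y^{\mathrm{OMS}}$ under the threshold $\Phi$ selected by PST in each regime; (ii) prediction-specific Pareto optimality at each $y \in [L, U]$; and (iii) weak optimality via comparison with the tight lower bound of Sun et al.~\cite{Sun2021}.

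For stage (i), for any threshold $\Phi$ the formulas in~\eqref{eq:ps_con_rob_oms} give $\con_y(\Phi) = y/\Phi$ when $\Phi \leq y$ and $\con_y(\Phi) = y/L$ when $\Phi > y$, while $\rob_y(\Phi) = \max\{\Phi/L,\, U/\Phi\}$, which is minimized at $\Phi = \sqrt{LU}$ with value $\sqrt{\theta}$. Substituting the three PST thresholds---$\Phi = \sqrt{LU}$, $\Phi = y$, and $\Phi = \mu\sqrt{LU} + (1-\mu)y$ respectively---yields the stated values after routine algebra, using $\sqrt{\theta}\cdot L = \sqrt{LU}$ and the definition of $\mu$. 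In particular in regime 3 one checks $\Phi^* = [(1-\lam)U + \lam y]/[(1-\lam)\sqrt{\theta}+\lam]$, from which $\rob_y^* = \Phi^*/L$ produces the claimed denominator $(1-\lam)\sqrt{LU}+\lam L$.

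For stage (ii), I would invoke the standard fact that threshold strategies are Pareto-optimal for deterministic one-max search, so it suffices to compare PST against all $\Phi' \in [L,U]$. In regime 1 ($y < M$), $\rob_y(\Phi') \geq \sqrt{\theta}$ for every $\Phi'$, with equality only at $\Phi' = \sqrt{LU}$, which then forces $\con_y = y/L$; hence PST attains the unique Pareto-optimal pair. In regime 2, requiring $\con_y \leq 1$ forces $\Phi' = y$, again giving a singleton Pareto point with $\rob_y = U/y$ (which dominates $y/L$ since $y \leq \sqrt{LU}$). The delicate case is regime 3: since $\mu \in [0,1]$ and $y > \sqrt{LU}$, we have $\Phi^* \in [\sqrt{LU}, y]$, so $\con_y(\Phi^*) \cdot \rob_y(\Phi^*) = (y/\Phi^*)(\Phi^*/L) = y/L$. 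Any $\Phi' \in [\sqrt{LU}, y]$ lies on the same hyperbola $cr = y/L$, so movement along it trades one metric for the other; any $\Phi' > y$ strictly worsens both metrics, and any $\Phi' < \sqrt{LU}$ strictly worsens both $\con_y$ (since $y/\Phi' > y/\sqrt{LU} \geq \con_y^*$) and $\rob_y$ (since $U/\Phi' > \sqrt{\theta}$). Thus PST lies on the prediction-specific Pareto front for every $y$.

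For stage (iii), I would maximize $\con_y^*$ and $\rob_y^*$ across the three regimes: in each regime the extremum is attained at a boundary ($y = M$ or $y = U$), and a direct computation yields $\sup_y \con_y^* = \lam + (1-\lam)\sqrt{\theta}$ and $\sup_y \rob_y^* = \theta/[\lam + (1-\lam)\sqrt{\theta}]$, whose product equals $\theta$. This saturates the tight lower bound $\con \cdot \rob \geq \theta$ of Sun et al., so PST is weakly optimal. Combined with stage (ii), this yields strong optimality. The main obstacle is regime 3: one must verify not only that $\mu$ places $\Phi^*$ inside $[\sqrt{LU}, y]$, but also that this particular $\mu$ is precisely what equalizes the boundary maxima across regimes so that the worst-case $(\con, \rob)$ tradeoff is tight; the rest of the argument reduces to careful case analysis.
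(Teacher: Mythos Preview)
Your three-stage approach---compute $(\con_y,\rob_y)$ per regime, establish per-$y$ Pareto optimality over all thresholds, then invoke Sun et al.'s bound $\con\cdot\rob\geq\theta$ for weak optimality---is exactly the paper's route.

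One slip in regime~3: you assert that $\Phi' < \sqrt{LU}$ ``strictly worsens both'' metrics, but the justification ``$U/\Phi' > \sqrt{\theta}$'' does not yield $\rob_y' > \rob_y^*$, since $\rob_y^* = \Phi^*/L$ may itself exceed $\sqrt{\theta}$ (take $\lambda=1$, $y=U$, $\theta>4$: then $\rob_y^* = \theta$ while $\rob_y'$ can be near $\sqrt{\theta}$). This does not break your argument, because the consistency inequality $\con_y' = y/\Phi' > y/\sqrt{LU} \geq y/\Phi^* = \con_y^*$ already rules out any Pareto improvement on its own. The paper handles this regime with a cleaner two-way split: $\Phi' < \Phi$ worsens consistency (since $\Phi' < \Phi \leq y$ gives $\con_y' = y/\Phi' > y/\Phi$), while $\Phi' > \Phi$ worsens robustness (since $\Phi' > \Phi \geq \sqrt{LU}$ gives $\rob_y' = \Phi'/L > \Phi/L$), which avoids your three-subcase decomposition and the hyperbola observation altogether.
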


We prove \Cref{thm:oms} in \Cref{appendix:oms_2}; the proof identifies prediction-specific optimal thresholds ($\Phi$) by partitioning the prediction space and deriving each threshold (which is a convex combination involving $\mu$) that ensures Pareto-optimality for each segment of predictions.

\color{black}
\section{Error-Tolerant Algorithms}
\label{sec:brittle}

Pareto-optimal algorithms can exhibit \emph{brittleness}, a vulnerability noted by~\cite{elenter2024overcoming, Benomar2025}, where the competitive ratio degrades sharply toward the worst-case robustness bound even with small prediction errors.  This issue stems from the standard definition of consistency (see~\Cref{def:cl_con_rob}), which assumes strictly perfect predictions ($\x(\I)=\y$) and thus fails to consider performance under erroneous predictions.  To address this, we use the one-max search problem as an example to demonstrate how explicit and tractable Pareto-optimal algorithms incorporating a ``generalized consistency'' can be constructed in our prediction-specific framework, offering good performance tradeoffs when faced with minor prediction errors. It is worth emphasizing that the notion of “error tolerance” considered in this section differs from the concept of “smoothness” discussed in~\cite{elenter2024overcoming,Benomar2025,benomar2025tradeoffs}. The former concerns the tradeoff between generalized consistency (allowing small prediction errors) and robustness (with arbitrarily large errors), whereas the latter requires that the algorithm’s performance degrade smoothly as the prediction error increases. Nevertheless, we view the two notions as related, both contributing to alleviating "brittleness". 

To account for small prediction errors, we specify a desired error tolerance $\epsilon$ and define a relaxed consistent set
$
\mI_y^\epsilon \coloneqq \left\{ I \in \mI : \mL(x(I), y) \leq \epsilon \right\},
$
where $\mL : \mX \times \mY \to \mathbb{R}_{\geq 0}$ is a chosen loss function that measures prediction error.
When substituting $\mI_y$ with $\mI_y^{\ep}$, we refer to the corresponding consistency metrics in \Cref{eq:std_cons_rob_pred} and \Cref{eq:ps_cons_rob} as $\epsilon$-consistency and prediction-specific $\epsilon$-consistency, denoted by $\con^\epsilon$ and $\con_y^\epsilon$, respectively. Substituting $\con$, $\con_y$ and $\mI_y$ with $\con^\eps$, $\con_y^\ep$ and $\mI_y^\eps$ in \Cref{sec:opt-meta}, we can generalize our meta-algorithm to incorporate error tolerance. 

For the one-max search problem, its inherent continuity renders prediction errors unavoidable in practice, underscoring the necessity of incorporating error tolerance. We fix $\mathcal{L}(x, y) = |x - y|$ and assume that $\ep$ is small  relative to the scale of the problem. We propose an error-tolerant algorithm for this problem, $\epsilon$-tolerant \OMS, in \Cref{alg:ep-one-max search}. To conclude the section, we present three theorems that characterize the performance of this algorithm. In particular, \Cref{thm:tol_1} establishes the tradeoffs between $\epsilon$-consistency and robustness achieved by $\epsilon$-tolerant \OMS, together with their prediction-specific counterparts, while \Cref{thm:tol_2} provides a lower bound on the global $\epsilon$-consistency–robustness tradeoff. Finally, \Cref{thm:tol_3} shows that the $\epsilon$-tolerant \OMS algorithm is strongly optimal. The proofs of all three results are given in \Cref{appendix:brittle}.

\begin{algorithm}[t!]
\small
\caption{\textbf{$\ep$-Tolerant \PST}}
\label{alg:ep-one-max search}
\begin{algorithmic}[1]
\State \textbf{Input:} $\lam \in [0,1], \ep > 0$
\State Determine $M = \lam (L+3\ep)+ (1-\lam)(\sqrt{LU} - \ep)$;
\State \textbf{If} $y \in [L, M-2\ep]$ \textbf{then} set $\Phi = \sqrt{LU} $; 
\State \textbf{Else if} {$y \in(M-2\ep, M)$} set $\Phi = M-\ep $; 
\State \textbf{Else if} {$y \in[M, \sqrt{LU} + \ep]$}  \textbf{then} set $\Phi = y - \ep$; 
\State \textbf{Else if} {$y \in (\sqrt{LU} + \ep, U-\ep) $} \textbf{then}
 set $\mu = \frac{(U-2\ep) - LU/(M-\ep)}{(U-2\ep) - \sqrt{LU}}$, $\Phi = \mu \sqrt{LU} + (1 - \mu)(y-\ep)$;
\State \textbf{Else if} {$y \in [U-\ep, U] $} \textbf{then} set $\Phi = LU/(M-\ep)$;
\State Perform \texttt{OTA} with threshold $\Phi$.
\end{algorithmic}
\end{algorithm}

\begin{theorem} 
\label{thm:tol_1}
$\ep$-Tolerant \OMS achieves $[(M-\ep)/L]$ $\ep$-consistency and $U/(M-\ep)$ robustness.
    Specifically, $\ep$-Tolerant \OMS achieves
\[
    \begin{cases}
    [(y+\ep)/L]\ \ep\text{-consistency and } \sqrt{\theta}\ \text{robustness} 
    & \text{if } y \in [L,\ M - 2\ep]; \\[4pt]

    (M-\ep)/L\ \ep\text{-consistency and } U/(M-\ep)\ \text{robustness} 
    & \text{if } y \in (M - 2\ep,\ M); \\[4pt]

    (y+\ep)/(y-\ep)\ \ep\text{-consistency and } U / (y-\ep)\ \text{robustness} 
    & \text{if } y \in [M,\ \sqrt{LU} + \ep]; \\[4pt]

    \frac{(y+\ep)}{\mu\sqrt{LU} + (1-\mu)(y-\ep)}\ \ep\text{-consistency and } 
    \frac{\mu\sqrt{LU} + (1-\mu)(y-\ep)}{L}\ \text{robustness} 
    & \text{if } y \in (\sqrt{LU} + \ep,\ U - \ep); \\[4pt]

    (M-\ep)/L\ \ep\text{-consistency and } U/(M-\ep)\ \text{robustness} 
    & \text{if } y \in [U - \ep,\ U];
\end{cases}
\]
where $M = \lam (L+3\ep)+ (1-\lam)(\sqrt{LU} - \ep)$.
\end{theorem}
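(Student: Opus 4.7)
The strategy is a case analysis on the five regions of $y$ defined in Algorithm~\ref{alg:ep-one-max search}, followed by taking suprema over $y$ to obtain the global bounds.

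I begin with two structural observations that eliminate most of the work. Since $\alg_{\mathrm{OMS}}(\Phi,x,y) = \Phi$ whenever $x \ge \Phi$ and $\alg_{\mathrm{OMS}}(\Phi,x,y) = L$ otherwise, the prediction-specific robustness for a fixed threshold $\Phi \in [L,U]$ reduces to
\[
\rob_y = \max\!\left(\sup_{x \in [\Phi,U]} \tfrac{x}{\Phi},\ \sup_{x \in [L,\Phi)} \tfrac{x}{L}\right) = \max\!\left(\tfrac{U}{\Phi},\ \tfrac{\Phi}{L}\right),
\]
with the comparison governed by the sign of $\Phi^2 - LU$. Similarly, the prediction-specific $\epsilon$-consistency is the analogous maximum taken over $x \in [y-\epsilon,y+\epsilon] \cap [L,U]$, yielding $(y+\epsilon)/L$ if $\Phi \ge y+\epsilon$, $(y+\epsilon)/\Phi$ if $\Phi \le y-\epsilon$, and $\max(\Phi/L,\ (y+\epsilon)/\Phi)$ if $\Phi$ sits inside the interval.

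Given these reductions, most regions fall out by direct substitution. In region~1, the inequality $y+\epsilon \le M-\epsilon < \sqrt{LU} = \Phi$ (where the second inequality uses $M-\epsilon = \lambda L + (1-\lambda)\sqrt{LU} + 2(2\lambda-1)\epsilon$ for small $\epsilon$) places the consistent interval below $\Phi$, giving $(y+\epsilon)/L$ consistency and $\sqrt\theta$ robustness. In region~3, $\Phi = y-\epsilon$ coincides with the left endpoint of the consistent interval, and $y-\epsilon \le \sqrt{LU}$ yields robustness $U/(y-\epsilon)$. In region~4, since $y-\epsilon > \sqrt{LU}$ the convex combination $\Phi$ lies in $[\sqrt{LU},\ y-\epsilon]$, so the consistent interval again sits above $\Phi$ and I obtain consistency $(y+\epsilon)/\Phi$ and robustness $\Phi/L$. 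Regions~2 and~5 require slightly more care: in region~2 the threshold $\Phi = M-\epsilon$ sits strictly inside $[y-\epsilon,y+\epsilon]$, so the consistency is $\max((M-\epsilon)/L,\ (y+\epsilon)/(M-\epsilon))$ and the first term dominates because the second is close to $1$ while the first is $\Theta(\sqrt\theta)$; in region~5 I must verify $\Phi = LU/(M-\epsilon) \le y-\epsilon$, which reduces to $(U-2\epsilon)(M-\epsilon) \ge LU$, and which holds for small $\epsilon$ since $M-\epsilon < \sqrt{LU}$ implies $\Phi > \sqrt{LU}$ and the gap $U-y$ is at most $\epsilon$.

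Finally, to derive the global bounds $(M-\epsilon)/L$ and $U/(M-\epsilon)$, I take the sup of each per-region formula over its domain and check that regions~2 and~5 attain these values while the others are dominated (using $y+\epsilon \le M-\epsilon$ in region~1, $(y+\epsilon)/(y-\epsilon) \approx 1$ in region~3, and the linear interpolation in region~4). The main obstacle, and essentially the only nontrivial step, is the calibration of $\mu$ in region~4: I will verify by direct computation that substituting $y = \sqrt{LU}+\epsilon$ collapses $\Phi$ to $\sqrt{LU}$ (matching region~3) and $y = U-\epsilon$ collapses $\Phi$ to $LU/(M-\epsilon)$ (matching region~5), so that $\Phi(y)$ is continuous across region boundaries and $\Phi/L$ interpolates linearly from $\sqrt\theta$ to $U/(M-\epsilon)$, producing both the per-region claim and the global robustness bound.
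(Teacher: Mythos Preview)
Your case-analysis plan matches the paper's proof, and the structural reductions for $\rob_y$ and $\con_y^\epsilon$ are correct and useful. However, two of your justifications are wrong, not merely sketchy.

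In region~2, you claim $(M-\epsilon)/L$ dominates $(y+\epsilon)/(M-\epsilon)$ because the former is ``$\Theta(\sqrt\theta)$''. This is false: when $\lambda$ is close to $1$, $M-\epsilon$ is close to $L+2\epsilon$, so $(M-\epsilon)/L \approx 1+2\epsilon/L$, which is near $1$, not $\sqrt\theta$. The correct comparison (as the paper does) is to bound the second term by $(M+\epsilon)/(M-\epsilon)$ (using $y<M$) and then verify $(M-\epsilon)^2 \ge L(M+\epsilon)$ directly; since $(M-\epsilon)/L$ is increasing in $M$ and $(M+\epsilon)/(M-\epsilon)$ is decreasing, it suffices to check the endpoint $M = L+3\epsilon$, where the inequality becomes $(L+2\epsilon)^2 \ge L(L+4\epsilon)$, i.e.\ $4\epsilon^2 \ge 0$.

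In region~5, your argument that $\Phi \le y-\epsilon$ is muddled: you correctly reduce to $(U-2\epsilon)(M-\epsilon) \ge LU$, but then offer ``$M-\epsilon < \sqrt{LU}$ implies $\Phi > \sqrt{LU}$'', which gives a \emph{lower} bound on $\Phi$ and is the wrong direction for what you need. The paper instead uses the lower bound $M-\epsilon \ge L+2\epsilon$ to obtain $\Phi \le LU/(L+2\epsilon)$, and then checks $LU/(L+2\epsilon) \le U-2\epsilon$, which is equivalent to $\epsilon \le (U-L)/2$ and follows from the standing assumption $\epsilon \le (\sqrt{LU}-L)/4$. Your boundary-matching verification for $\mu$ in region~4 is correct and is exactly how the paper controls the global supremum there.
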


\begin{theorem}
\label{thm:tol_2}
    Any $\rob$-robust algorithm has at least $(\theta/\rob)$ $\ep$-consistency, and any algorithm that achieves $\con^\ep$ $\ep$-consistency must be at least $(\theta / \con^\ep)$-robust.
\end{theorem}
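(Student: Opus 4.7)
The plan is to derive both claims as immediate consequences of the classical consistency--robustness lower bound of~\cite{Sun2021}, using the observation that $\ep$-consistency is definitionally a stronger requirement than classical consistency, so any lower bound on the latter transfers automatically.

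First, I would establish the set inclusion $\mI_y \subseteq \mI_y^\ep$ for every $\y \in \mY$. Indeed, for any $I \in \mI_y$ we have $\x(I) = \y$, so $\mL(\x(I), \y) = |\y - \y| = 0 \leq \ep$, placing $I$ in $\mI_y^\ep$ as well. Since for one-max search the prediction-specific consistency is the supremum of $\x(I)/\alg(A, I, \y)$ over the relevant instance set (with the $\alg/\opt$ ratio inverted, as noted in \Cref{sec:one-max_search}), taking the supremum over the larger set yields $\con_\y^\ep(A) \geq \con_\y(A)$ for each $\y$, and hence $\con^\ep(A) \geq \con(A)$ after a further supremum over $\y \in \mY$.

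Next, I would invoke the classical lower bound of~\cite{Sun2021}: any $\rob$-robust algorithm for one-max search must satisfy $\con(A) \geq \theta/\rob$. Chaining this with the previous inequality yields $\con^\ep(A) \geq \con(A) \geq \theta/\rob$, which is precisely the first assertion. The second assertion is logically equivalent and follows by rearranging: if $A$ achieves $\ep$-consistency $\con^\ep$ and robustness $\rob(A)$, then $\con^\ep \geq \theta/\rob(A)$, so $\rob(A) \geq \theta/\con^\ep$.

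The main conceptual observation is that no new adversarial construction is required: relaxing exact-match consistency to $\ep$-consistency only enlarges the set of instances over which the adversary optimizes, so the Sun et al.\ bound transfers without modification. The only step requiring care is the direction of the inequality $\con^\ep \geq \con$, which is the standard ``supremum over a larger set is larger'' argument; beyond this, there is no essential obstacle.
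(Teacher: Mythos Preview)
Your proposal is correct and matches the paper's own proof essentially line for line: both argue that $\con^\ep \geq \con$ via the inclusion $\mI_y \subseteq \mI_y^\ep$, then invoke the Sun et al.\ lower bound $\con \geq \theta/\rob$ to obtain the first claim, and derive the second claim by the same inequality read in the other direction. The only difference is that you spell out the set-inclusion step more explicitly, whereas the paper simply asserts $\con^\ep \geq \con$ ``by definition.''
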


\begin{theorem}
\label{thm:tol_3}
    Assume $\ep \leq (\sqrt{LU} - L)/4$. The $\ep$-consistency and robustness of $\ep$-Tolerant $\PST$ (\Cref{alg:ep-one-max search}) are jointly Pareto optimal. Moreover, for every prediction $\y \in \mY = [L, U]$, $\ep$-Tolerant $\PST$ achieves prediction-specific $\ep$-consistency $\con_y^\ep$ and robustness $\rob_y$ that are jointly Pareto optimal. 
\end{theorem}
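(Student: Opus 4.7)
The plan is to establish \Cref{thm:tol_3} in two stages: global Pareto optimality of the $(\con^\ep, \rob)$ pair, and prediction-specific Pareto optimality via a per-regime analysis. The global claim follows by combining \Cref{thm:tol_1} and \Cref{thm:tol_2}: the algorithm achieves $\con^\ep = (M-\ep)/L$ and $\rob = U/(M-\ep)$, whose product equals $U/L = \theta$, exactly matching the lower bound $\con^\ep \cdot \rob \geq \theta$. Since the product is tight, any alternative algorithm that strictly improves one metric must degrade the other, yielding the desired global Pareto optimality.

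For the prediction-specific claim, I would first record two structural properties of an OTA using threshold $\Phi \in [L, U]$. The prediction-specific robustness
\[
\rob_y(\Phi) = \max\{\Phi/L,\ U/\Phi\}
\]
is independent of $y$, is strictly decreasing on $[L, \sqrt{LU}]$, strictly increasing on $[\sqrt{LU}, U]$, and is minimized at $\Phi = \sqrt{LU}$ with value $\sqrt{\theta}$. The prediction-specific $\ep$-consistency admits the piecewise form
\[
\con_y^\ep(\Phi) =
\begin{cases}
(y+\ep)/\Phi & \text{if } \Phi \leq y - \ep, \\
\max\{\Phi/L,\ (y+\ep)/\Phi\} & \text{if } y - \ep < \Phi \leq y + \ep, \\
(y+\ep)/L & \text{if } \Phi > y + \ep,
\end{cases}
\]
obtained by taking the sup over $x \in [y-\ep, y+\ep] \cap [L, U]$ of the reward ratio. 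Traversing $\Phi$ over $[L, U]$ therefore traces a curve in the $(\con_y^\ep, \rob_y)$-plane, and for each of the five regimes of $y$ in \Cref{alg:ep-one-max search} I would substitute the chosen $\Phi(y)$, recover the consistency-robustness pair stated in \Cref{thm:tol_1}, and show it lies on the lower envelope of this curve. The ordering of the regime boundaries $L < M - 2\ep < M < \sqrt{LU} + \ep < U - \ep$ required by the case analysis follows from the hypothesis $\ep \leq (\sqrt{LU} - L)/4$.

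The main obstacle is the intermediate regime $y \in (\sqrt{LU} + \ep,\, U - \ep)$, where the algorithm uses the convex combination $\Phi = \mu \sqrt{LU} + (1 - \mu)(y - \ep)$. Here $\Phi \in (\sqrt{LU},\, y - \ep]$, so $\rob_y = \Phi/L$ and $\con_y^\ep = (y+\ep)/\Phi$, and the locus of attainable $(\con_y^\ep, \rob_y)$ pairs along this segment forms the hyperbola $\con_y^\ep \cdot \rob_y = (y + \ep)/L$. The argument hinges on showing that the calibrated weight $\mu$ places $\Phi(y)$ at the unique point on this hyperbola consistent with the global robustness budget $U/(M-\ep)$, and in particular that $\Phi$ varies linearly with $y$ between $\sqrt{LU}$ at $y = \sqrt{LU} + \ep$ and $LU/(M-\ep)$ at $y = U - \ep$, joining Case 5 continuously. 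A careful check of continuity of $(\con_y^\ep, \rob_y)$ across all five regime boundaries, together with the monotonicity properties of $\rob_y(\Phi)$ and $\con_y^\ep(\Phi)$ above, then closes the proof by exhibiting, in each regime, the perturbation direction that would have to be taken to improve either coordinate and showing it necessarily worsens the other.
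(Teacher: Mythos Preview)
Your high-level plan matches the paper's: the global claim follows directly from \Cref{thm:tol_1} and \Cref{thm:tol_2} via the product $\con^\ep\cdot\rob=\theta$, and the prediction-specific claim is handled by fixing $y$, perturbing the threshold $\Phi'\neq\Phi$, and showing one of $\con_y^\ep,\rob_y$ strictly worsens. That final sentence of your proposal is exactly the argument the paper runs, regime by regime.

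Where you go astray is in your diagnosis of Case~IV. You call it the ``main obstacle'' and say the argument ``hinges on showing that the calibrated weight $\mu$ places $\Phi(y)$ at the unique point on this hyperbola consistent with the global robustness budget.'' But prediction-specific Pareto optimality is a per-$y$ statement: \emph{every} point on the hyperbola segment $\Phi\in[\sqrt{LU},y-\ep]$ is Pareto optimal, since $\con_y^\ep=(y+\ep)/\Phi$ decreases and $\rob_y=\Phi/L$ increases monotonically there. The specific calibration of $\mu$, the global budget $U/(M-\ep)$, and the continuity of $(\con_y^\ep,\rob_y)$ across regime boundaries in $y$ are all irrelevant to this claim---they matter only for the global (weak) optimality already dispatched in the first paragraph. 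In the paper, Case~IV is two lines.

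The cases that actually require care are II and III, which you pass over. When $\Phi'=\Phi+\delta$ with $\Phi=y-\ep$ (Case~III) or $\Phi=M-\ep$ (Case~II), the worst $\ep$-consistent instance for $\Phi'$ can have maximum $x$ just below $\Phi'$, yielding ratio $\approx\Phi'/L$ rather than $(y+\ep)/\Phi'$. To conclude ${\con_y^\ep}'>\con_y^\ep$ you need the inequality $(y-\ep)/L\geq(y+\ep)/(y-\ep)$, equivalently $(y-\ep)^2\geq L(y+\ep)$, which holds precisely because $y\geq M\geq L+3\ep$ (check it at $y=L+3\ep$: $(L+2\ep)^2\geq L(L+4\ep)$). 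Your general formulas for $\con_y^\ep(\Phi)$ are correct and would surface this, but the proposal does not flag that this is where the hypothesis $\ep\leq(\sqrt{LU}-L)/4$ (ensuring $M\geq L+3\ep$) actually bites.
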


\color{black}
\section{Numerical Experiments}
\label{sec:exp}

In this section, we evaluate the performance of our proposed algorithms in three different case studies spanning synthetic and real-world settings.\footnote{Our code is publicly available at \url{https://github.com/Bill-SizheLi/Prediction_Specific_Design_of_Learning-Augmented_Algorithms}}

\subsection{Case Study 1: Synthetic Data Experiments for Ski Rental}

We begin by testing the performance of our algorithms for the ski rental problem via simulations on synthetic instances. We let the actual number of skiing days $x$ be a uniformly random integer drawn from $[1, 10b]$, where $b = 100$ is buying costs of skis. The prediction $y$ is generated with accuracy $p$: with probability $p$, the prediction is accurate (i.e., $y =x$), and with probability $(1-p)$, the prediction $y \sim \mathcal{N} (x, \sigma^2)$, where $\sigma = 500$ and the output is rounded and made positive. 

\begin{figure}[t!]
  \centering
  \begin{minipage}[t]{0.48\textwidth}
    \centering
    \includegraphics[width=\linewidth]{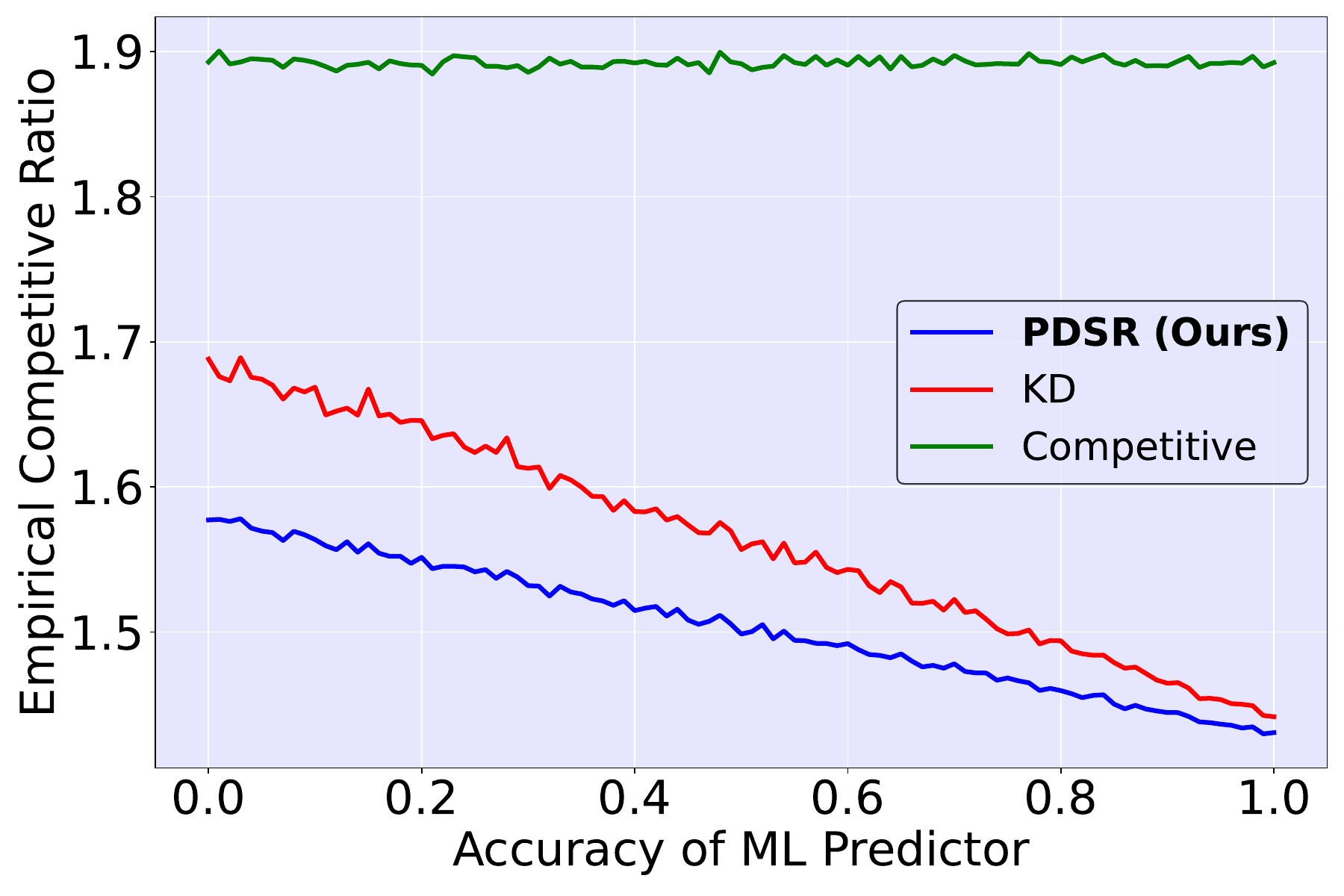}
    \caption{Empirical competitive ratios versus accuracy $p$ in \emph{deterministic setting}.}
    \label{fig:dsr_experiment}
  \end{minipage}
  \hfill
  \begin{minipage}[t]{0.48\textwidth}
    \centering
    \includegraphics[width=\linewidth]{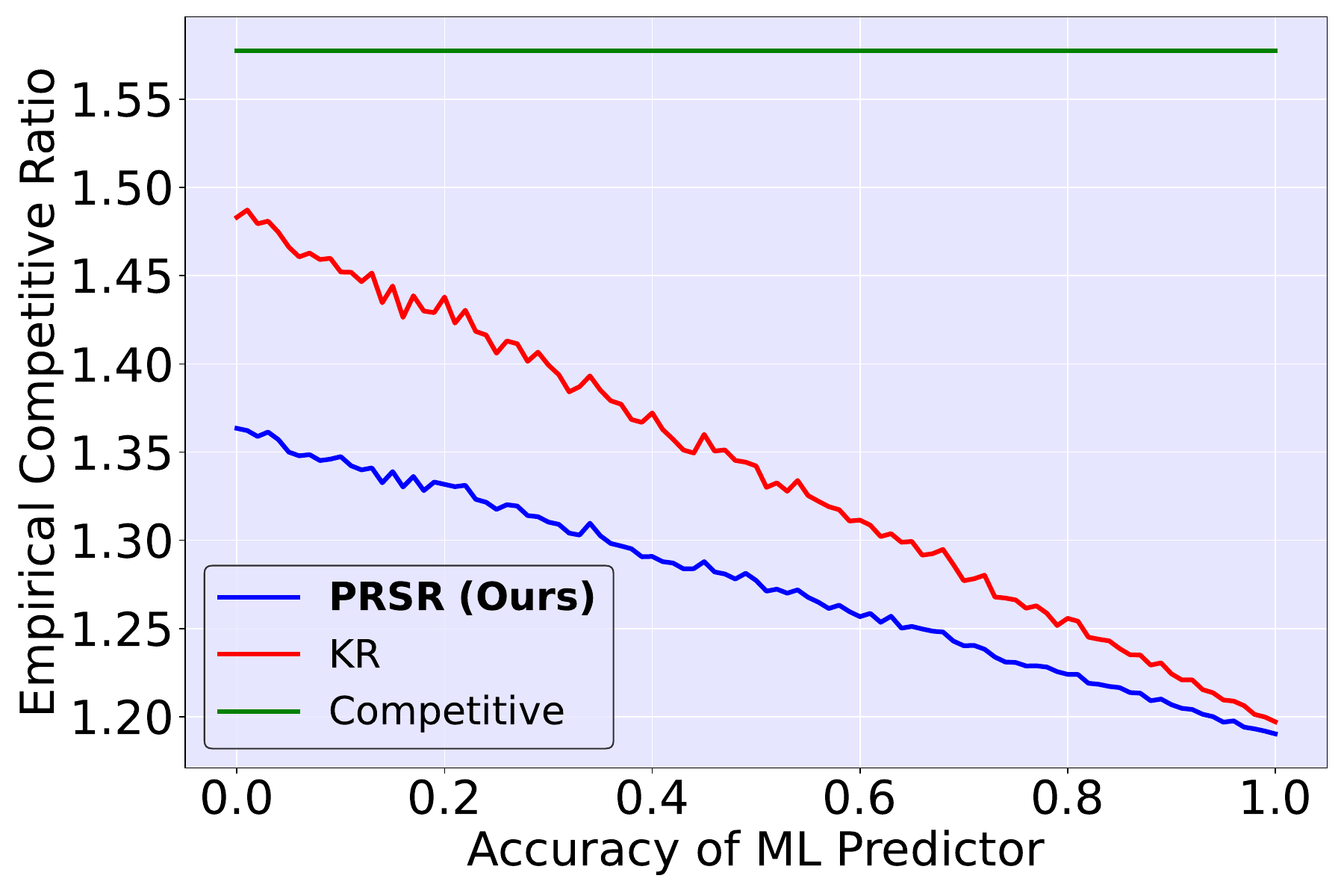}
    \caption{Empirical competitive ratios versus accuracy $p$ in \emph{randomized setting}.}
    \label{fig:rsr_experiment}
  \end{minipage}
\end{figure}

For the deterministic setting, we compare $\PDSR$ (Algorithm~\ref{alg:dsr}) with $\KD$ (\cite{Kumar2018}, Algorithm 2)  and the classic competitive algorithm (which always buys on day b), using the same parameter $\lambda = 0.5$ for both $\PDSR$ and $\KD$. For the randomized setting, we compare $\PRSR$ (Algorithm~\ref{alg:rsr}) with $\KR$ (\cite{Kumar2018}, Algorithm 3) and Karlin’s algorithm~\cite{Karlin1988}, using $\overline{\gamma} = 3$ for $\PRSR$ and $\lambda = \ln (3/2)$ for $\KR$, ensuring PRSR and KR have the same robustness $3$. Each setup is evaluated over $10000$ independent trials. Figures~\ref{fig:dsr_experiment} and~\ref{fig:rsr_experiment} present the empirical results of average competitive ratio versus accuracy~$p$. We observe that our proposed algorithms, $\PDSR$ and $\PRSR$, consistently outperform both classic online algorithms and existing learning-augmented algorithms across both settings.

\subsection{Case Study 2: Ski Rental on Dynamic Power Management}

We next evaluate our ski rental algorithms on \emph{real-world traces} for a Dynamic Power Management (DPM) problem, where we control the idle and active periods of a computing device. Modern processors typically support multiple power states: deeper states disable more components, leading to lower operating cost/energy but higher wake-up penalties/overhead. During each idle interval, a DPM controller must decide whether to stay active or transition into a deeper sleep state without knowing the remaining idle duration.

The two-state DPM system (one active and one sleep state with zero operating cost) is equivalent to the ski rental problem, where remaining active corresponds to \emph{renting} and transitioning to the sleep state corresponds to \emph{buying}. Moreover, Antoniadis et al.~\cite{antoniadis2021dpm} demonstrated that randomized ski rental algorithms can be converted to multi-state DPM algorithms. 

\begin{figure}[t]
  \centering
  \includegraphics[width=0.72\linewidth]{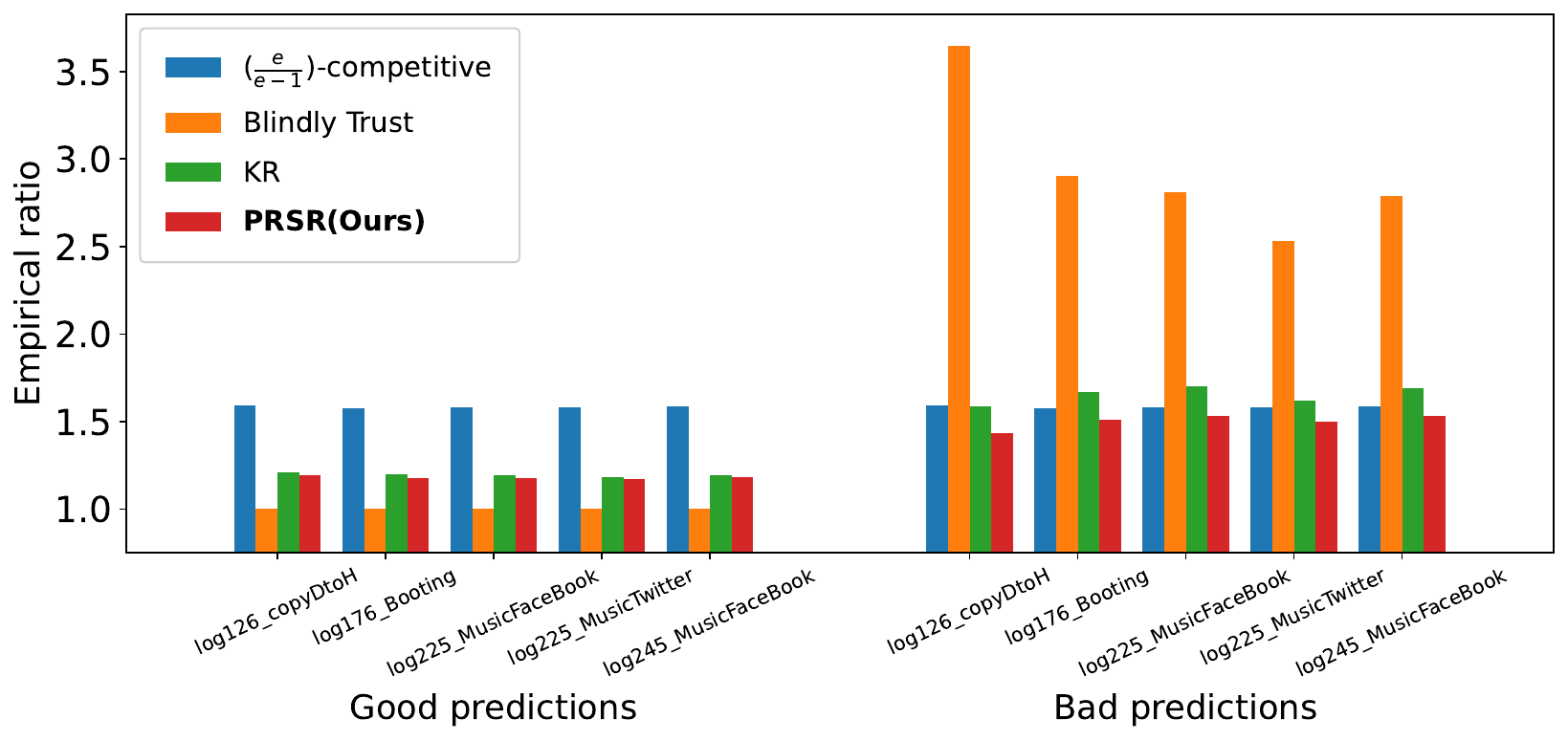}
  \caption{Empirical competitive ratios on DPM traces for good and bad predictions.}
  \label{fig:dmp}
\end{figure}

\textbf{Setup.} We consider a DPM problem with 4 states. Specifically, we use the same problem setting as Antoniadis et al.~\cite{antoniadis2021dpm}, employing I/O traces\footnote{The traces are available at \url{http://iotta.snia.org/traces/block-io}.} collected from a Nexus~5 smartphone~\cite{zhou2015io}, from which idle intervals between consecutive requests are extracted. We adopt the IBM mobile hard-drive power states reported in~\cite{irani2003online}, consistent with the setup in~\cite{antoniadis2021dpm}. The idle periods are scaled in the same way as in~\cite{antoniadis2021dpm}.

We use the five largest traces for evaluation. Since the main goal of this section is to probe the algorithms' performance under the two extremes of very good and very bad predictions, we consider the following method for generating predictions: "good predictions" and "bad predictions" are obtained by perturbing the ground truth with $\mathcal{N}(0, \sigma_{\text{good}}^2)$ and $\mathcal{N}(0, \sigma_{\text{bad}}^2)$ noises, respectively. In this experiment, we set $\sigma_{\text{good}} = 0.02$ and $\sigma_{\text{bad}} = 20$. We compare four algorithms: the classic $(\frac{e}{e-1})$-competitive algorithm, \textsf{Blindly Trust} (which treats the prediction as if it is correct and optimizes accordingly), the randomized algorithm of Kumar et~al. (\KR), and our randomized algorithm (\PRSR). For the learning-augmented algorithms, we use the same parameter values for $\lambda$ and $\overline{\rob}$ as in \textbf{Case Study~1}.

\textbf{Results.} Figure~\ref{fig:dmp} reports the empirical competitive ratios on the real DPM traces. We observe that our strongly-optimal algorithm $\PRSR$ consistently achieves the lowest competitive ratios, except for when the prediction quality is very good (in which case, the non-robust \textsf{Blindly Trust} algorithm outperforms it. 
This validates our algorithm's ability to exploit specific predictions to enable good performance regardless of prediction value or quality.

\subsection{Case Study 3: One-Max Search on VIX Trading}

\begin{wrapfigure}{r}{0.46\textwidth}
\vspace{-1em}
\centering
\includegraphics[width=0.46\textwidth]{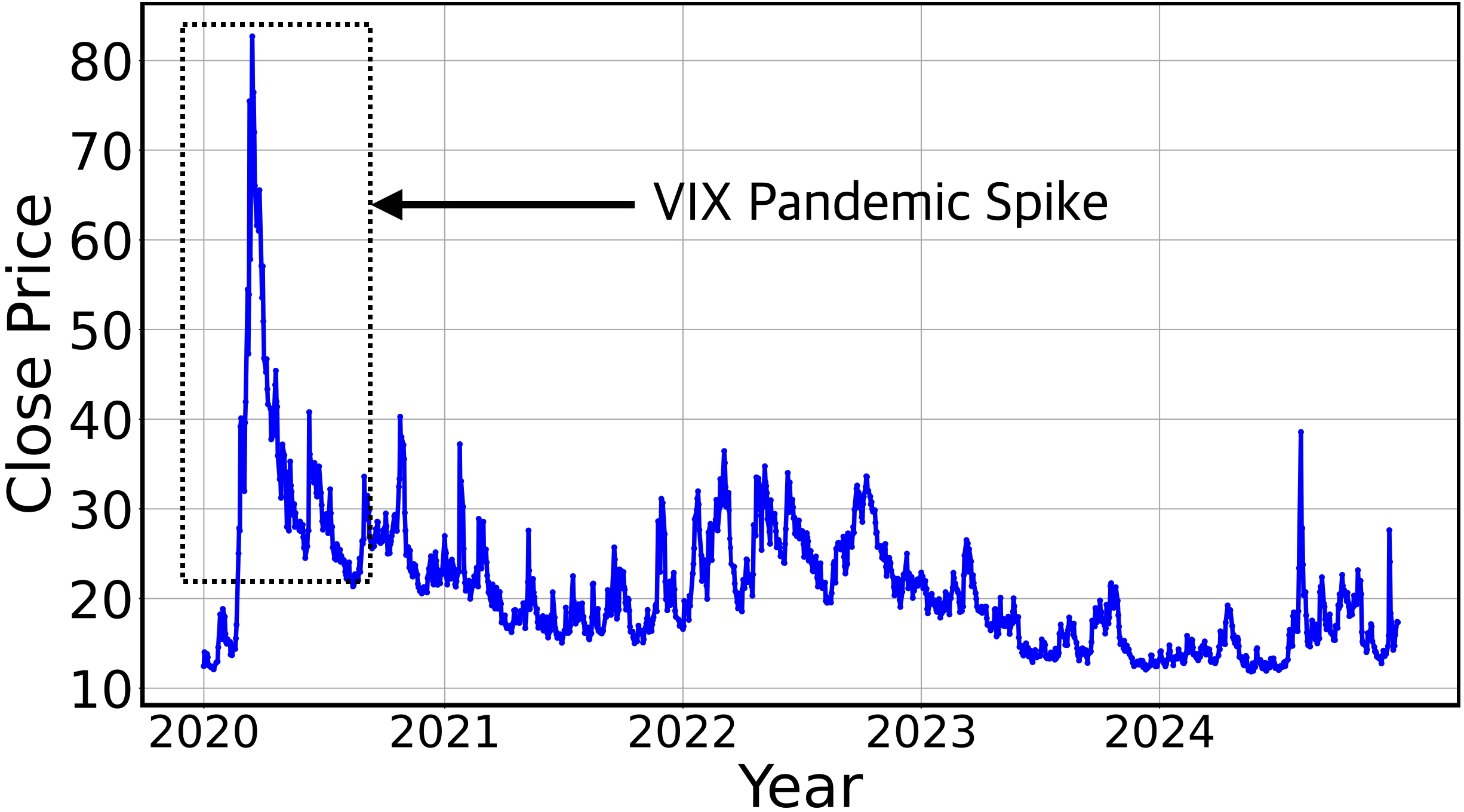}
\caption{VIX Closing Price from January 2020 to December 2024. The VIX index soared from a pre-pandemic level of $12$-$15$ to $83$ in March 2020. 
}
\label{fig:fig3}
\end{wrapfigure}

The VIX, often referred to as the \textit{fear index}, exhibits sharp volatility spikes that make it a natural benchmark for evaluating online search algorithms. Its uncertainty and heavy-tailed dynamics closely mirror the one-max search setting, where the core challenge lies in optimally timing a single exit. This characterization is vividly illustrated by the extreme volatility in early 2020: within just a few weeks at the onset of the COVID-19 market shock, the VIX surged from below \$15 to over \$80 (shown in Figure~\ref{fig:fig3}).

\textbf{Setup.} We evaluate our one-max search algorithms in a case study using the daily closing prices of the VIX index from January 2020 to December 2024 (shown in Figure~\ref{fig:fig3}), which consist of publicly available values obtained from the \href{https://www.cboe.com/tradable_products/vix/vix_historical_data/}{Cboe Options Exchange}. We assume that at the beginning of each month, an agent holds one unit of VIX and must choose a single day within that month to sell it. Over the course of five years, there are 60 trading rounds (one per month), each offering approximately 20 to 21 trading opportunities, as the VIX is only traded on weekdays. We set $L$ and $U$ as the historical minimum and maximum prices over the entire 5 years.\footnote{The focus of this paper is not on the impact of $L$ and $U$; therefore, we simply set them to historical values. In practical trading scenarios, $L$ and $U$ can be viewed as predetermined parameters representing the stop-loss and take-profit thresholds in the exit strategy of the trading process.}

\textbf{Baselines.} We compare our proposed methods, \PST (Algorithm~\ref{alg:one-max search}) and $\epsilon$-Tolerant \PST (Algorithm~\ref{alg:ep-one-max search}), to three baseline algorithms: (i) blindly trusting the prediction, (ii) the classical online algorithm of El-Yaniv~\cite{ElYaniv2001}, and (iii) prior learning-augmented algorithms (Sun’s~\cite{Sun2021} and Benomar’s~\cite{Benomar2025}).

\textbf{Experiment 1.} In this experiment, we consider a naive prediction strategy that simply uses the highest observed VIX price from the previous month. As the evaluation metric, we use the \textit{empirical ratio}\footnote{Note that the empirical ratio here is the inverse of that used in the theoretical analysis, so as to better reflect the proportion of the hindsight optimum that the online or learning-augmented algorithm can achieve (or recover).}, defined as the cumulative online outcome up to the current round divided by the cumulative offline optimum. This metric reflects how well an algorithm performs in practice relative to the hindsight-optimal strategy, averaged over time. We run the algorithms over the 60-month horizon using historical VIX data, and report the empirical ratios at each round to visualize both long-term trends and the stability of performance across different market periods in \Cref{fig:fig1}.

For our proposed algorithms, we fix the trade-off parameter $\lambda = 0.3$ in \PST, and use $\lambda = 0.3$ and $\epsilon = 1.8$ in $\epsilon$-Tolerant \PST. For baseline algorithms with tunable parameters, including those from Sun et al.~\cite{Sun2021} and Benomar et al.~\cite{Benomar2025}, we find that setting $\lambda = 1.0$ yields the best cumulative empirical ratio over the full 60-month horizon. However, to better illustrate performance variation across different regimes, we also include their results under $\lambda = 0.3$ and $\lambda = 0.6$. 

\textbf{Experiment 2.} In practical settings, machine-learned predictions are often more accurate than the naive predictor used in Experiment 1, though they remain imperfect due to model limitations and data noise. The degree of prediction accuracy varies with the capability and training of the underlying ML model. To systematically evaluate algorithmic performance under varying prediction quality, we introduce the notion of an \textit{error level} -- a scalar value between 0 and 1 that quantifies the deviation from perfect information. For each trading round, the prediction is constructed via linear interpolation between the previous month’s maximum (naive prediction) and the current month’s actual maximum (perfect prediction), where the error level determines the interpolation weight. An error level of 1.0 corresponds to the naive prediction, while 0.0 yields the perfect prediction.

We assess the cumulative empirical ratio of each algorithm over all 60 trading rounds under varying error levels from 0 to 1, and report the result in \Cref{fig:fig2}. To ensure a fair comparison across prediction regimes, we fix the trade-off parameter $\lambda = 0.5$ for all tunable methods, including those of Sun, Benomar, \PST, and $\epsilon$-Tolerant \PST. For $\epsilon$-Tolerant \PST, we additionally set $\epsilon = 0.5$ to account for moderate tolerance to prediction error.

\begin{figure}[t!]
  \centering
  \begin{minipage}[t]{0.48\textwidth}
    \centering
    \includegraphics[width=\linewidth]{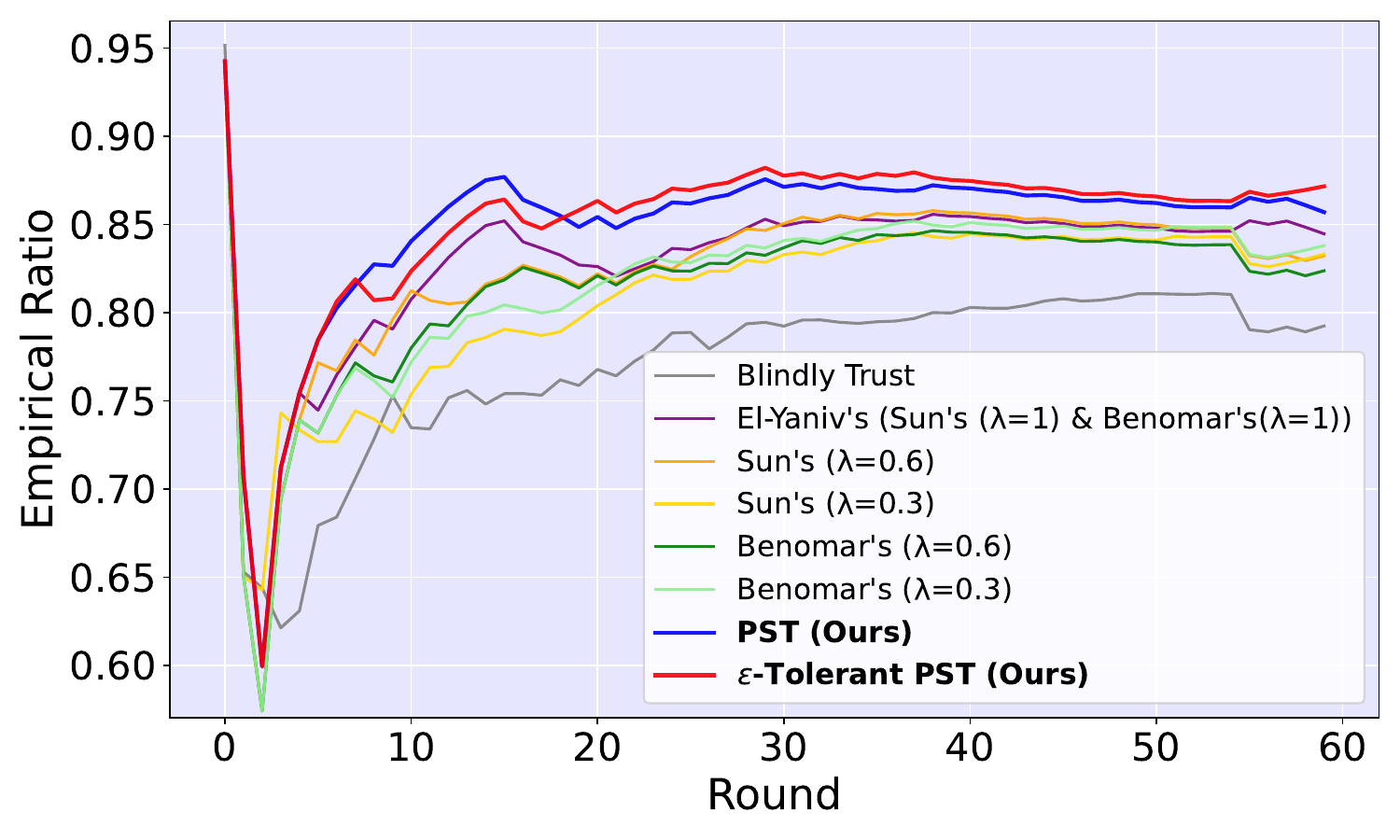}
    \caption{Empirical competitive ratios over $60$ trading rounds.}
    \label{fig:fig1}
  \end{minipage}
  \hfill
  \begin{minipage}[t]{0.48\textwidth}
    \centering
    \includegraphics[width=\linewidth]{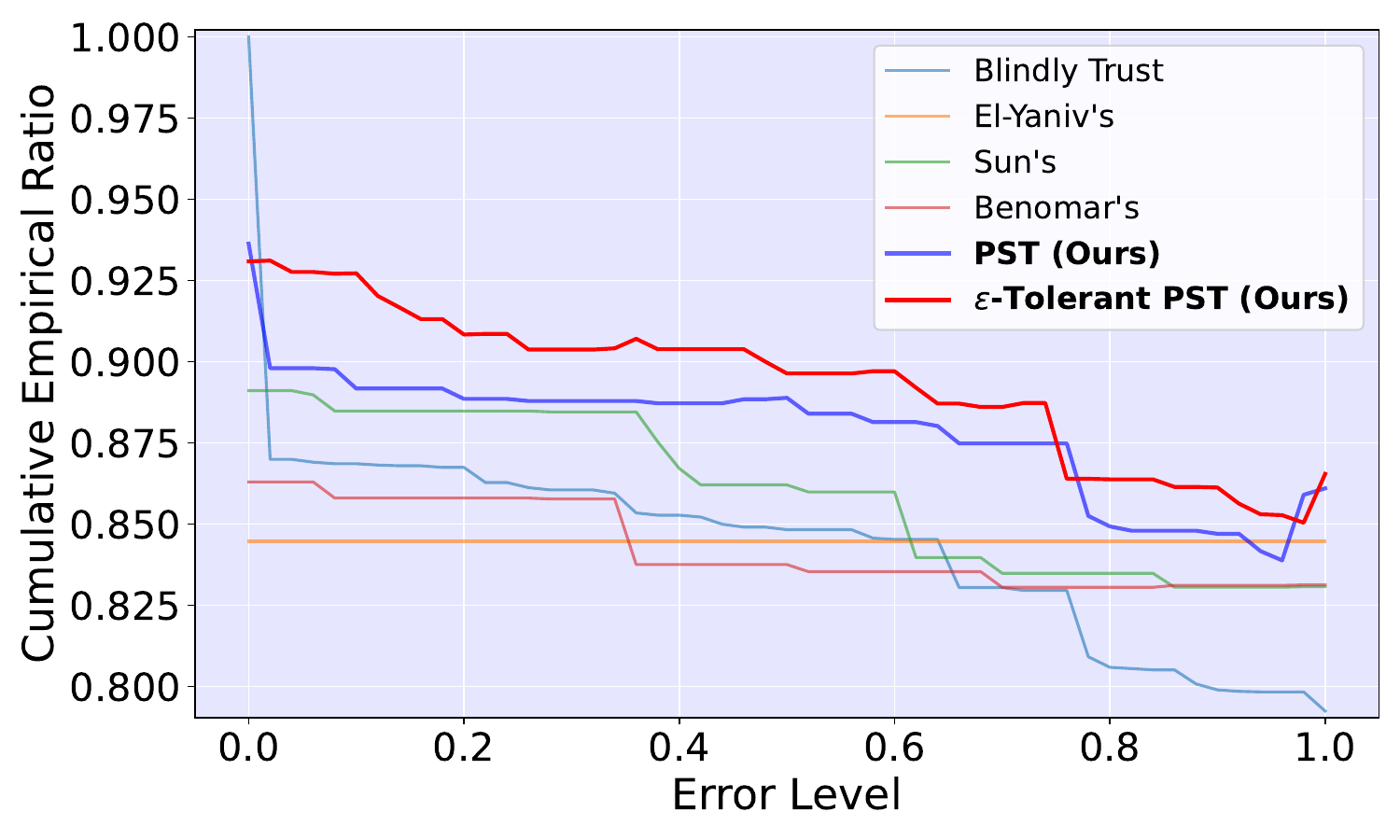}
    \caption{Cumulative empirical competitive ratio with varying prediction error levels.}
    \label{fig:fig2}
  \end{minipage}
\end{figure}

\textbf{Results.} The results for \emph{Experiment 1} are shown in Figure~\ref{fig:fig1}. Our methods consistently outperform all baselines across the decision horizon, with final empirical competitive ratios of $87.2\%$ ($\epsilon$-Tolerant \PST) and $85.7\%$ (\PST), compared to $82.4\%$–$84.4\%$ for all baselines.
The results for \emph{Experiment 2} are shown in Figure~\ref{fig:fig2}; $\epsilon$-Tolerant $\PST$ remains consistently superior across nearly all error levels.

\color{black}
\section{Concluding Remarks}
\label{sec:further_discussion}

In this work, we introduce a prediction-specific analysis framework and a finer-grained notion of \textit{strong optimality} for online algorithms with predictions. We further provide a systematic approach to designing Pareto-optimal online algorithms with better prediction-specific performance than prior algorithms, and we show how this methodology can yield significant performance improvements for the problems of ski rental (deterministic and randomized) and one-max search. 

\textbf{Future Directions.} In contrast to the ski rental and one-max search settings, the existing weakly-optimal non-clairvoyant scheduling algorithm of Wei and Zhang~\cite{Wei2020} is strongly-optimal when $n = 2$. Thus, designing a strongly-optimal algorithm for $n$-job non-clairvoyant scheduling remains an open question. Similarly, as we did for one-max search, developing explicit error-tolerant strongly-optimal algorithms for both deterministic and randomized ski rental is also an interesting future direction. In addition, exploring whether the bi-level optimization in the meta-algorithm (Algorithm~\ref{alg:meta}) in \Cref{sec:opt-meta} can be tractably solved for more complex, multi-stage problems represents a challenging but potentially impactful direction for future study.

\bibliographystyle{unsrtnat}
\bibliography{references}

\appendix 
%%%%%%%%%%%%%%%%%%%%%%%%%%%%%%%%%%%%%%%%%%%%%%%%%%%%%%%%%%%%%%%%%%%%%%%%%%%%%%%
%%%%%%%%%%%%%%%%%%%%%%%%%%%%%%%%%%%%%%%%%%%%%%%%%%%%%%%%%%%%%%%%%%%%%%%%%%%%%%%
% APPENDIX
%%%%%%%%%%%%%%%%%%%%%%%%%%%%%%%%%%%%%%%%%%%%%%%%%%%%%%%%%%%%%%%%%%%%%%%%%%%%%%%
%%%%%%%%%%%%%%%%%%%%%%%%%%%%%%%%%%%%%%%%%%%%%%%%%%%%%%%%%%%%%%%%%%%%%%%%%%%%%%%

\newpage

\appendix

\section{Additional Details and Proofs for \Cref{sec:scheduling}\label{appendix:scheduling}}

This section supplements the analysis of the non-clairvoyant scheduling problem for the special case $n = 2$, as introduced in Section~\ref{sec:online_problems}.

We focus on the two-stage scheduling algorithm proposed by Wei and Zhang~\cite{Wei2020}, which is known to achieve the optimal classic trade-off and is \weakly-optimal under Definition~\ref{def_weak}. As detailed in Algorithm~\ref{alg:two-stage schedule} in \Cref{appendix:two-stage schedule}, the algorithm proceeds in two phases based on predicted job lengths $ \y = (\y_1, \y_2) $, where $ \y_1 \leq \y_2 $. In Theorem~\ref{thm:scheduling}, we show that this algorithm also satisfies prediction-specific Pareto optimality under $n = 2$.

\subsection{Wei et al.'s Two-Stage Schedule\label{appendix:two-stage schedule}}

Wei and Zhang~\cite{Wei2020} propose an algorithm called the two-stage schedule (see Algorithm~\ref{alg:two-stage schedule}) that achieves a consistency of $(1+\lambda)$ and a robustness of $(1 + \frac{1}{1+6\lambda})$ under $n=2$. By \cite{Wei2020} and Definition~\ref{def_weak}, this two-stage schedule algorithm is \weak-optimal for $n = 2$. 

\begin{algorithm}[H]
\caption{Two-Stage Schedule}
\label{alg:two-stage schedule}
\begin{algorithmic}[1]
\Statex
\State At any point, if a job finishes with processing time less or more than its prediction, use round robin forever.
\State \texttt{Stage 1:} Round robin for at most $\lambda n \cdot \text{OPT}_y /  \binom{n}{2} $ units of time.
\State \texttt{Stage 2:} Process jobs in predicted order
\Statex
\hspace{2em} (\textit{starting from the unfinished job with the least predicted time}).
\end{algorithmic}
\end{algorithm}

\subsection{Proof of~\Cref{thm:scheduling}} \label{appendix:two-stage-schedule-optimality}

In this subsection, we further analyze the prediction-specific consistency and robustness and prove that two-stage schedule is \strongly under $n=2$.

\begin{proof}[Proof of~\Cref{thm:scheduling}]
    Note that the jobs are ranked based on their predicted lengths; thus we have $\y_1 \leq \y_2$.
    We first prove the prediction-specific consistency and robustness under a specific set of predictions $y=(y_1, y_2) $.
    
We first consider $\lambda \leq \frac{y_1}{2y_1+y_2}$, i.e. $\lambda (2y_1+y_2) \leq y_1$.
Regarding the consistency, assume that $x_1 = y_1, x_2 = y_2$. In stage 1, the algorithm runs round-robin for $2 \lambda\cdot(2y_1+y_2)$ time. Since $\lambda \cdot (2y_1+y_2) \leq y_1$, job 1 cannot finish in stage 1. Therefore, the completion time of job 1 is $$2\lambda (2y_1+y_2) + y_1 - \lambda(2y_1+y_2) = y_1 + \lambda(2y_1+y_2),$$ and that of job 2 is $y_1 + \lambda(2y_1+y_2) + y_2 - \lambda(2y_1+y_2) = y_1 + y_2$, thus, $\text{ALG} = 2y_1 + y_2 + \lambda(2y_1+y_2)$, and $\text{OPT} = 2y_1 + y_2$, yeilding a consistency of $1 + \lambda$.

Regarding the robustness, we consider an adversarial attack $x_1, x_2$. Let $\delta$ denote an infinitesimal quantity.

\noindent\textbf{Case I:} If \colorbox{gray!20}{$x_1 \leq \lambda(2y_1+y_2)$ or $x_2 \leq \lambda(2y_1+y_2)$}, i.e. some incorrect prediction is found in stage 1. In this case, the algorithm runs round-robin from beginning to end, resulting in a robustness of at most $4/3$.

\noindent\textbf{Case II:} If \colorbox{gray!20}{$\lambda(2y_1+y_2) < x_1 \leq y_1$}, i.e. job 1 finishes no later than its prediction. In this case, the algorithm runs round-robin for $2\lambda(2y_1+y_2)$ time and processes job 1 until completion, and then turns to job 2. Thus, 
\begin{align*}
    \text{ALG} &= 2\Big(2\lambda(2y_1+y_2) + \big(x_1 - \lambda(2y_1+y_2)\big)\Big) + \left(x_2 - \lambda(2y_1+y_2)\right) \\ &= 2x_1 +x_2 + \lambda(2y_1+y_2).
\end{align*}

\vspace{2pt}

\textbf{Case II(a):} \colorbox{gray!20}{$\lambda(2y_1+y_2) < x_2 \leq x_1$}
    In this case, $\text{OPT} = 2x_2 + x_1$. This yields a robustness of $1 + \frac{y_1}{y_1+2\lambda(2y_1+y_2)} \geq 4/3$, which is attained when $x_1 = y_1$ and $x_2 = \lambda(2y_1+y_2) + \delta$.

\vspace{2pt}
    
\textbf{Case II(b):} \colorbox{gray!20}{$x_2 > x_1$}
    In this case, $\text{OPT} = 2x_1 + x_2$. This results in a robustness of $1 + \frac{\lambda(2y_1+y_2)}{3y_1} \leq 4/3$, which is achieved when $x_1 = y_1$ and $x_2 = y_1 + \delta$.

\vspace{5pt}

\noindent\textbf{Case III:} If \colorbox{gray!20}{$x_1 > y_1$}, i.e. job 1 finishes later than its prediction. In this case, the algorithm first runs round-robin for $2\lambda \cdot(2y_1+y_2)$ time, then processes job 1 for $y_1 - \lambda(2y_1+y_2)$ time, and finally runs round-robin till the end.

\vspace{3pt}

\textbf{Case III(a)} \colorbox{gray!20}{$\lambda(2y_1+y_2) <x_2 < x_1 + \lambda(2y_1+y_2) - y_1$}, i.e. job 1 finishes later than job 2.
    
    In this case, $\text{ALG} = x_1 + 3x_2 - \lambda (2y_1+y_2)+y_1$ and $\text{OPT} = 2x_2 + x_1$. This yields a robustness of $1 + \frac{y_1}{y_1+2\lambda(2y_1+y_2)} \geq 4/3$, which is achieved when $x_1 = y_1 + \delta$ and $x_2 = \lambda(2y_1+y_2) + \delta$.

\vspace{3pt}
    
\textbf{Case III(b)} \colorbox{gray!20}{$x_2 \geq x_1 + \lambda(2y_1+y_2) - y_1$}, i.e. job 1 finishes no later than job 2.
    
Note that in this case, we have $\text{ALG} = 3x_1 + x_2 + \lambda (2y_1+y_2) -y_1$. 

First, if  $x_1 < x_2$, $\text{OPT} = 2x_1 + x_2$. This generates a robustness of $1 + \frac{\lambda (2y_1+y_2)}{3y_1} \leq 4/3,$ which is achieved when $x_1 = y_1 + \delta$ and $x_2 = y_1 + 2\delta$. 

Otherwise,  if $x_1 + \lambda(2y_1+y_2) - y_1 < x_2 < x_1$, $\text{OPT} = 2x_2 + x_1$. 
%Denote $$\varphi(y_1,y_2;\lambda)\coloneq \frac{y_1}{y_1 + 2\lambda(2y_1+y_2)}.$$ 
This yields a robustness of $1 + \frac{y_1}{y_1 + 2\lambda(2y_1+y_2)} \geq 4/3,$ which is achieved when $x_1 = y_1 + \delta$ and $x_2 = \lambda(2y_1+y_2) + 2\delta$. 

\vspace{7pt}
To sum up, if $\lambda \leq {y_1}/({2y_1+y_2})$, the algorithm is
$
(1+\lambda)$-consistent and $(1 + \frac{y_1}{y_1 + 2\lambda(2y_1+y_2)})$-robust.
Otherwise, if $\lambda > {y_1}/({2y_1+y_2})$, the algorithm runs round-robin forever, and is
\( (1 + \frac{y_1}{{2y_1+y_2}})\)-consistent and $(4/3)$-robust. 

In conclusion, given prediction $y = (y_1, y_2)$, the prediction-specific consistency and robustness are
\[\con_y = 1 + \min\{\frac{{y_1}}{{2y_1+y_2}}, \lambda\}, \qquad \rob_y = 1 + \max\{{1}/{3}, \frac{y_1}{y_1 + 2\lambda(2y_1+y_2)}\}.\] 

Since the two-stage schedule is already proven to be \weakly, we then prove that it is \strong-optimal by demonstrating the Pareto optimality of their prediction-specific consistency and robustness.

When $\lambda \leq {y_1}/({2y_1+y_2})$, the algorithm is $(1 + \lambda)$-consistent and $\left(1 + \frac{y_1}{y_1 + 2\lambda(2y_1+y_2)}\right)$-robust. Consider algorithm $\A$ that completes $r > \lambda(2y_1+y_2)$ amount of work for job 2 when it finishes job 1 in the case where predictions are accurate. Then, it follows that
\begin{align*}
    \text{ALG} &= 2\cdot(y_1 + r) + y_2 - r = 2y_1 + y_2 + r\\
\text{OPT} &= 2y_1 + y_2,
\end{align*}
which yields a competitive ratio of $1 + {r}/({2y_1+y_2}) > 1 + \lambda$. Therefore, any $(1+\lambda)$-consistent algorithm processes at most $\lambda(2y_1+y_2)$ amount of work for job 2 when it finishes job 1 or finds any incorrect prediction of job 1. Then, we consider an incorrect prediction $x_1 = y_1 $, $x_2 = \lambda(2y_1+y_2) + \delta$. Consider a $(1+\lambda)$-consistent algorithm $\B$ that completes $r \leq \lambda(2y_1+y_2)$ amount of work when it finishes job 1. Then, upon $x_1 = y_1$ and $x_2 = r + \delta$, $\text{ALG} = 2(y_1 + r) + (r+\delta -r)=  2y_1 + 2r + \delta$ and $\text{OPT} = 2r + y_1$. This leads to a robustness no better than $$\min_{r \leq \lambda(2y_1+y_2)} \frac{2y_1 + 2r}{y_1+2r} = 1 + \frac{y_1}{y_1+2\lambda(2y_1+y_2)}$$ for all $r \leq \lambda(2y_1+y_2)$.

When $\lambda > {y_1}/({2y_1+y_2})$, the algorithm, equivalent to round-robin (RR), is $(1+{y_1}/({2y_1+y_2}))$-consistent and $4/3$-robust. Note that RR is the only algorithm that achieves $4/3$ competitive ratio. Therefore, under prediction $y= (y_1, y_2)$, no other algorithm achieves a robustness equal to or less than $4/3$. 

By~\Cref{def_strong}, the two-stage schedule is \strong-optimal.
\end{proof}

\newpage
\section{Additional Details for Section~\ref{sec:opt-meta} \label{appendix:oba}}
In this section, we provide a proof of \Cref{prop:oba} and discuss the scenario in which a \weakly algorithm with robustness $ \overline{\rob} $ is unavailable.

\subsection{Proof of \Cref{prop:oba}}
\begin{proof}[Proof of~\Cref{prop:oba}]
Let $\{\con_y^*, \overline{\omega}\}$ and $\{\rob_y^*, \omega^*\}$ denote the optimal solution to Problem~\ref{optimization_A} and Problem~\ref{optimization_B}, respectively. Since $\overline{\rob} \geq \CR^*$, $\{\overline{\rob}, \overline{\omega}\}$ is always a feasible solution to Problem~\ref{optimization_B}. Thus, $\rob_y^* \leq \overline{\rob}$, where $\rob_y^*$ is the optimal objective value to Problem~\ref{optimization_B}. Since $\{\rob_y^*, \omega^*\}$ is a feasible solution to Problem~\ref{optimization_B}, we have 
\begin{align*}
    \alg(A_{\omega^*}, I, y) &\leq \rob_y^* \cdot \opt(I), \ \forall I \in \mI, \\
    \alg(A_{\omega^*}, I, y) &\leq \con_y^* \cdot \opt(I), \ \forall I \in \mI_y.
\end{align*}

By~\Cref{def:ps_con_rob}, $A_{\omega^*}$ is $\con_y^*$-consistent and $\rob_y^*$-robust with respect to $y$.

Consider $\omega' \in \Omega$. Let $\con_y'$ and $\rob_y'$ denote the prediction-specific consistency and robustness of $A_{\omega'}$ with respect to $y$. We consider two cases.

\vspace{4pt}

\noindent \textbf{Case I:} \colorbox{gray!20}{$\con_y' > \con_y^*$} If $\con_y' > \con_y^*$, then $(\con_y', \rob_y')$ produces no Pareto improvement over $(\con_y^*, \rob_y^*)$. 

\vspace{4pt}

\noindent \textbf{Case II:} \colorbox{gray!20}{$\con_y' \leq \con_y^*$}. If $\con_y' \leq \con_y^*$, since $A_{\omega'}$ is $\con_y'$-consistent and $\rob_y'$-robust with respect to $y$, by~\Cref{def:ps_con_rob},
\begin{subequations}
\begin{align}
\label{a}
    \alg(A_{\omega'}, I, y) &\leq \con_y' \cdot \opt(I), \  \forall I \in \mI_y,\\
\label{c}
    \alg(A_{\omega'}, I, y) &\leq \rob_y' \cdot \opt(I), \  \forall I \in \mI.
\end{align}
\end{subequations}
Since $\con_y' \leq \con_y^*$, by Inequality~\eqref{a}, we further have
\begin{equation}
\label{d}
    \alg(A_{\omega'}, I, y) \leq \con_y^* \cdot \opt(I), \forall I \in \mI_y.
\end{equation}
By Inequalities~\eqref{c} and \eqref{d}, $\{\rob_y', \omega'\}$ is a feasible solution to Problem~\ref{optimization_B}, thus we have $\rob_y^* \leq \rob_y'$, where $\rob_y^*$ is the optimal objective value to Problem~\ref{optimization_B}. 

\SetLabelAlign{newline}{%
  \begin{minipage}[t]{\dimexpr\linewidth-\labelsep\relax}%
    #1\vspace*{4pt}% The actual label text, followed by vertical space
  \end{minipage}%
}

\vspace{4pt}

\textbf{Case II (a):} \colorbox{gray!20}{$\rob_y' > \rob_y^*$}
If $\rob_y' > \rob_y^*$, then $(\con_y', \rob_y')$ produces no Pareto improvement over $(\con_y^*, \rob_y^*)$.

\vspace{4pt}

\textbf{Case II (b):} \colorbox{gray!20}{$\rob_y' = \rob_y^*$}
 If $\rob_y' = \rob_y^*$, since $\rob^*_y \leq \overline{\rob}$, we have $\rob_y' \leq \overline{\rob}$. By Inequality~\eqref{c}, we further have
\begin{equation}
    \label{b}
    \alg(A_{\omega'}, I, y) \leq \overline{\rob} \cdot \opt(I), \ \forall I \in \mI.
\end{equation}

By Inequalities~\eqref{a} and \eqref{b}, $\{\con_y', \omega'\}$ is a feasible solution to Problem~\ref{optimization_A}, thus we have $\con_y^* \leq \con_y'$, where $\con_y^*$ is the optimal objective value to problem~\ref{optimization_A}. Because $\con_y' \leq \con_y^*$ and $\con_y^* \leq \con_y'$, we have $\con_y' = \con_y^*$. Since $\rob_y' = \rob_y^*$ and $\con_y' = \con_y^*$, $(\con_y', \rob_y')$ produces no Pareto improvement over $(\con_y^*, \rob_y^*)$.

Consequently, $(\con_y^*, \rob_y^*)$ are jointly Pareto optimal in terms of the prediction-specific consistency and robustness with respect to $y$.
Since $\rob_y^* \leq \overline{\rob}$ for all $y \in \mY$, we have $\sup_{y \in \mY} \rob_y^* \leq \overline{\rob}$. By \Cref{eq:std_cons_rob_pred} and \Cref{eq:ps_cons_rob}, Algorithm~\ref{alg:meta} is $\overline{\rob}$-robust.

Now, there exists a weakly-optimal algorithm $A_{\overline{\omega}}$ with robustness $\overline{\rob}$, we assume the consistency of $A_{\overline{\omega}}$ is $\overline{\con}$ and the prediction-specific consistency of $A_{\overline{\omega}}$ under prediction $y$ is $\overline{\con}_y$. Since $A_{\overline{\omega}}$ is $\overline{\con}_y$-consistent with respect to $y$, by \Cref{def:ps_con_rob},
\begin{equation}
    \label{eq:con_overline}
    \alg(A_{\overline{\omega}}, I, y) \leq \overline{\con}_y \cdot \opt(I), \  \forall I \in \mI_y.
\end{equation}
Since $A_{\overline{\omega}}$ is $\overline{\rob}$-robust, it is also $\overline{\rob}$-robust with respect to $y$, thus we have 
\begin{equation}
    \label{eq:rob_overline}
    \alg(A_{\overline{\omega}}, I, y) \leq \overline{\rob} \cdot \opt(I), \  \forall I \in \mI_y.
\end{equation}

By Inequalities~\eqref{eq:con_overline} and \eqref{eq:rob_overline}, $\{\overline{\con}_y, \overline{\omega}\}$ is a feasible solution to Problem~\ref{optimization_A}. Therefore, $\con^*_y \leq \overline{\con}_y$, where $\con^*_y$ is the optimal objective value to Problem~\ref{optimization_A}. Consequently,
\begin{equation*}
    \sup_{y \in \mY} \con_y^* \leq \sup_{y \in \mY} \overline{\con}_y = \overline{\con}.
\end{equation*}

By \Cref{eq:std_cons_rob_pred} and \Cref{eq:ps_cons_rob}, Algorithm~\ref{alg:meta} is $\overline{\con}$-consistent.

By weak optimality of $A_{\overline{\omega}}$, any $\overline{\rob}$-robust algorithm is at least $\overline{\con}$-consistent, and any $\overline{\con}$-consistent algorithm is at least $\overline{\rob}$-robust. Since Algorithm~\ref{alg:meta} is $\overline{\con}$-consistent and $\overline{\rob}$-robust, by \Cref{def_weak}, it is \weakly.

Moreover, $\forall \ y \in \mY$, $(\con_y^*, \rob_y^*)$ are jointly Pareto optimal in terms of the prediction-specific consistency and robustness with respect to $y$. Thus,
 Algorithm~\ref{alg:meta} is \strongly by \Cref{def_strong}.

%Now, since  there exists a weakly-optimal algorithm $A'$ that is $\overline{\rob}$-robust, we have $\forall \ y \in \mY, \rob_y^* \leq \overline{\rob}$, therefore, $\sup_{y \in \mY} \rob_y^* \leq \rob$. By \Cref{eq:std_cons_rob_pred}, Algorithm~\ref{alg:meta} is $\overline{\rob}$-robust.
%Note that Algorithm~\ref{alg:meta}'s prediction-specific consistency with respect to $y$ is $\con_y^* = \mP_1(\overline{\rob}, y)$. Since $\overline{\rob}$ is feasible on the Pareto front corresponding to weak optimality,  $\overline{\rob}=\sup_{y' \in \mY} \mP_2 (\con, y')$ for some $\con$ such that $A'$ is $\beta$-consistent. Therefore, $\sup_{y \in \mY} \con_y^* = \sup_{y \in \mY} \mP_1 (\overline{\rob}, y) \leq \con$. From \Cref{eq:std_cons_rob_pred}, Algorithm~\ref{alg:meta} is $\con$-consistent too. Therefore,
%by \Cref{def_weak}, Algorithm~\ref{alg:meta} is \weakly.

%Moreover, $\forall \ y \in \mY$, $(\con_y^*, \rob_y^*)$ are jointly Pareto optimal in terms of the prediction-specific consistency and robustness with respect to $y$. Thus,

\end{proof} 

\subsection{Addressing the Absence of a Weakly-Optimal Algorithm with $\overline{\rob}$ Robustness.} 

In general, even if $(\con', \overline{\rob})$ is not on the Pareto front for any $\con' \geq 1$, a process that first determines a tight consistency bound $\con = \sup_{y' \in \mY} \mP_1 (\overline{\rob}, y')$, then  determines a tight robustness bound $\rob = \sup_{y' \in \mY} \mP_2 (\con, y')$ can generate a tight Pareto-optimal consistency-robustness tradeoff $(\con, \rob)$ so that $\gamma$ becomes a valid input of Algorithm~\ref{alg:meta}.

%\textbf{Discussion when $\overline{\rob}$ is not feasible:} In general, even if $\overline{\rob}$ is not a feasible point on the Pareto front of weak optimality, a process that first determines a tight consistency bound $\con = \sup_{y' \in \mY} \mP_1 (\overline{\rob}, y')$, then  determines a tight robustness bound $\rob = \sup_{y' \in \mY} \mP_2 (\con, y')$ can generate a tight Pareto-optimal consistency-robustness tradeoff $(\con, \rob)$ so that $\gamma$ becomes a feasible input of Algorithm~\ref{alg:meta}.

Let $\con, \rob$ denote $\sup_{y' \in \mY} \mP_1 (\overline{\rob}, y')$, $\sup_{y' \in \mY} \mP_2 (\con, y')$, respectively. Define $y_1 := \arg \max_{y' \in \mY} \mP_1 (\overline{\rob}, y')$, $y_2 := \arg\max_{y' \in \mY} \mP_2 (\con, y')$. 

Since $\con = \sup_{y' \in \mY} \mP_1 (\overline{\rob}, y')$, we have $\forall \ y' \in \mY, \mP_1 (\overline{\rob}, y') \leq \con$. Therefore, 
\begin{equation*}
    \forall\  y' \in \mY, \exists\ \omega' \in \Omega, \ \text{s.t.} \ \{\overline{\rob}, \omega'\} \text{ is a feasible solution to }  \mathcal{P}_2(\con, y'),
\end{equation*}
i.e., $\forall \ y' \in \mY, \overline{\rob} \geq \mP_2(\con, y')$, where $\mP_2(\con, y')$ is the optimal objective value. Consequently,
\begin{equation}
    \label{lam_geq_rob}
    \overline{\rob} \geq \sup_{\ y' \in \mY} \mP_2(\con, y') = \rob.
\end{equation}
%\tongxin{$\mP_2$ is an optimal solution, not a problem}
We can use similar techniques to prove 
\begin{equation*}
    \label{con_geq}
    \con \geq \sup_{y' \in \mY} \mP_1 (\rob, y').
\end{equation*}

Since $\con = \sup_{y' \in \mY} \mP_1 (\overline{\rob}, y')$, any $\overline{\rob}$-robust algorithm is at least $\con$-consistent. We prove this by contradiction, assuming there exists a $\overline{\rob}$-robust algorithm $\mA$ that has consistency $\con^\mA < \con$. Then, under the prediction $y_1$, $\mA$ achieves a prediction-specific consistency 
\begin{equation*}
    \con_{y_1}^\mA \leq \con^\mA < \con = \mP_1 (\overline{\rob}, y_1).
\end{equation*}
Moreover, $\mA$ is $\overline{\rob}$-robust under $y_1$, thus satisfying Constraint~\ref{cons_1a}. Therefore, $\mP_1 (\overline{\rob}, y_1)$ is not the optimal objective value, since $\con_{y_1}^\mA < \mP_1 (\overline{\rob}, y_1)$, yielding a contradiction.

Note that by Inequality~\eqref{lam_geq_rob}, $\overline{\rob} \geq \rob$; thus we can further conclude that any $\rob$-robust algorithm is at least $\con$-consistent. 

Similarly, since $\rob = \sup_{y' \in \mY} \mP_2 (\con, y')$, any $\con$-consistent algorithm is at least $\rob$-robust.
Therefore, $(\con, \rob)$ is a Pareto-optimal consistency-robustness tradeoff.

% Note that Algorithm~\ref{alg:meta}'s ancillary optimization problems for received prediction $y$ are driven by $\rob$ instead of $\overline{\rob}$, by \Cref{prop:oba}, we have $\forall \ y \in \mY, \rob_y^* \leq \rob$, therefore, $\sup_{y \in \mY} \rob_y^* \leq \rob$. By \Cref{eq:std_cons_rob_pred}, Algorithm~\ref{alg:meta} is $\rob$-robust.

% Note that Algorithm~\ref{alg:meta}'s prediction-specific consistency with respect to $y$ is $\con_y^* = \mP_1(\rob, y)$. By \Cref{con_geq}, $\sup_{y \in \mY} \con_y^* = \sup_{y \in \mY} \mP_1 (\rob, y) \leq \con$. From \Cref{eq:std_cons_rob_pred}, Algorithm~\ref{alg:meta} is $\con$-consistent. Therefore,
% by \Cref{def_weak}, Algorithm~\ref{alg:meta} is \weakly.

% Moreover, $\forall \ y \in \mY$, $(\con_y^*, \rob_y^*)$ are jointly Pareto optimal in terms of the prediction-specific consistency and robustness with respect to $y$. Thus,
%  Algorithm~\ref{alg:meta} is \strongly by \Cref{def_strong}.

\newpage
\section{Proofs for \Cref{sec:dsr} \label{appendix:dsr}}

We provide the proofs of \Cref{thm:kd} and \ref{thm:dsr} in the following.

\subsection{Proof of \Cref{thm:kd} \label{appendix:dsr_1}}

\begin{proof}[Proof of~\Cref{thm:kd}]
    We begin by analyzing the prediction-specific consistency and robustness of $\KD$, considering two distinct cases: $y < b$ and $y \geq b$.
    
\vspace{3pt}

\noindent \textbf{Case I:} If \colorbox{gray!20}{$\y < b$}, then the algorithm buys on day $\lceil b / \lam \rceil$. 

To obtain the prediction-specific consistency, we assume that the prediction is accurate, i.e. $\x = \y$. Since the algorithm postpones the purchase beyond the predicted day, $\alg = \opt = y$, thus $\con_\y = 1$. To analyze the prediction-specific robustness, we consider incorrect predictions. Since the worst-case attack arises when $\x = \lceil b/\lam\rceil$, we have $\alg = \lceil b / \lam \rceil - 1 + b$ and $\opt = b$. Thus, $\rob_\y = \alg/\opt < (b/\lam + b)/ b = 1 + 1/\lam$.

\vspace{3pt}

\noindent \textbf{Case II:} If \colorbox{gray!20}{$\y \geq b$}, then the algorithm buys on day $\lceil \lam b\rceil$. 

To obtain the prediction-specific consistency, we assume $\x = \y$. $\alg = \lceil \lam b\rceil -1 +b$ and $\opt = b$, yielding $\con_\y = \alg / \opt < 1 + \lam $. To obtain the prediction-specific robustness, we consider incorrect predictions. Observe that the worst-case attack occurs when $\x = \lceil \lam b\rceil$, we have $\alg = \lceil \lam b\rceil - 1 + b$, and $\opt = \lceil \lam b\rceil$. Hence, $\rob_\y = \alg / \opt  < (\lam b + b)/\lceil \lam b\rceil \leq (\lam b + b)/(\lam b) =  1 + 1/\lam$.

We now prove that $\KD$ is not \strongly.
Consider a simple algorithm that always buys on day $b$. It is straightforward to verify that buy-to-rent algorithm is $1$-consistent and $(2 - 1/b)$-robust under $\y < b$. Recall that under $\y < b$, $\KD$'s prediction-specific consistency and robustness are $1$ and $\frac{\lceil b/ \lam \rceil - 1 + b}{b}$, respectively. Since $$\frac{\lceil b/ \lam \rceil - 1 + b}{b} > \frac{b-1+b}{b}= 2 - 1/b$$ for all $\lambda\in (0,1)$, by~\Cref{def_strong}, $\KD$ is not strong-optimal.
\end{proof}
\subsection{Proof of \Cref{thm:dsr} \label{appendix:dsr_2}}

\begin{proof}[Proof of~\Cref{thm:dsr}]
Denote the prediction-specific consistency and robustness of Algorithm~\ref{alg:dsr} with respect to $\y$ as $\con_\y$ and $\rob_\y$. 

We start by analyzing the prediction-specific consistency and robustness of Algorithm~\ref{alg:dsr} by considering the following three different cases.

\vspace{3pt}

\noindent\textbf{Case I:} \colorbox{gray!20}{$\y < b$}. In this case, Algorithm~\ref{alg:dsr} purchases on day $b$. It is straightforward that the algorithm is $1$-consistent and $(2-1/b)$-robust with respect to $y$.

\vspace{3pt}

\noindent\textbf{Case II:} \colorbox{gray!20}{$\y \in [b, \min\{ b(\lam + 1) - 1, (b-1)/\lam \}]$}. In this case, Algorithm~\ref{alg:dsr} purchases on day $\y + 1$. To prove the prediction-specific consistency, we assume $\x = \y$. $\alg = \y$ and $\opt = b$, yielding $\con_\y = \alg/\opt = y/b$. To prove the prediction-specific robustness, we consider inaccurate predictions. Observe that $\x = \y + 1$ brings the best attack to the algorithm, in which case $\alg = \y + b$, $\opt = b$, leading to $\rob_\y = \alg / \opt = 1 + \y / b$.

\vspace{3pt}

\noindent\textbf{Case III:} \colorbox{gray!20}{$\y > \min\{ b(\lam + 1) - 1, (b-1)/\lam \}$}. In this case, Algorithm~\ref{alg:dsr} buys on day $\lceil \lam b\rceil$. To obtain the prediction-specific consistency, we assume $\x = \y$. $\alg = \lceil \lam b \rceil -1 + b$, $\opt = b$, $\con_\y = \alg / \opt < 1 + \lambda$. To get the prediction-specific robustness, we consider incorrect predictions. Observe that the worst-case attack occurs at $x = \lceil \lam b\rceil$, in which case $\alg = \lceil \lam b \rceil -1 + b$, $\opt = \lceil \lam b\rceil$. Thus, $\rob_\y = \alg / \opt < 1 + 1/\lam$.

Now, we prove that Algorithm~\ref{alg:dsr} is \strongly.

By considering the worst-case prediction $y$ over the $\con_\y$ and $\rob_\y$ in \Cref{thm:dsr}, Algorithm~\ref{alg:dsr} is $(1+\lam)$-consistent and $(1+ 1/\lam)$-robust, where the worst-case prediction occurs when $\y > \min\{ b(\lam + 1) - 1, (b-1)/\lam \}$. Based on the lower bound by Wei and Zhang~\cite{Wei2020}, Algorithm~\ref{alg:dsr} is \weakly, as $b \to \infty$.

We consider the following three cases to prove the Pareto optimality of $(\con_y, \rob_y)$ when $b \to \infty$.

\vspace{3pt}

\noindent\textbf{Case I:} \colorbox{gray!20}{$\y < b$}. Algorithm~\ref{alg:dsr} is $1$-consistent and $(2 - 1/b)$-robust,  which is already optimal. This is because consistency can be no less than $1$, and $(2 - 1/b)$ is the best competitive ratio $\CR$ achievable in competitive analysis.

\vspace{3pt}

\noindent\textbf{Case II:} \colorbox{gray!20}{$\y \in [b, \min\{ b(\lam + 1) - 1, (b-1)/\lam \}]$}. Algorithm~\ref{alg:dsr}, which purchases on day $\y+1$, is $(\y/b)$-consistent and $(1 + \y/b)$-robust with respect to $y$.
    Since $b \leq y \leq \min\{ b(\lam + 1) - 1, (b-1)/\lam \}$, we have (i) $\lam b \geq y + 1 - b$, (ii) $\lam b \leq \frac{b^2 - b}{y}$, which together implies
    \begin{equation}
    \label{M_1, M_2}
     (b^2 - b)/y \geq y+1-b.
    \end{equation}
    Consider another algorithm $\tP$ that buys on day $p$ ($\neq \y+1$) with prediction-specific consistency and robustness $\con_\y^\tP$ and $\rob_\y^\tP$. 

\vspace{3pt}

\textbf{Case II (a):} \colorbox{gray!20}{$p< (b^2-b)/y $}.
In this case, it holds that $$\rob_\y^\tP = \frac{p - 1 + b}{p} > \frac{ [(b^2-b)/y] - 1 + b}{ [(b^2-b)/y]} = 1 + \y/b = \rob_y .$$

\vspace{3pt}

\textbf{Case II (b) :} \colorbox{gray!20}{$p = (b^2-b)/y$}  (This case applies only if $\frac{b^2 - b}{y} \in \mathbb{N}_+$).
In this case, $\rob^\tP_y =\frac{p-1+b}{p}= 1 + y/b = \rob_y$. By Inequality~\eqref{M_1, M_2}, we have the following holds: $$\con^\tP_y = \frac{p-1+b}{b} =\frac{[(b^2-b)/y]+(b-1)}{b} \geq \frac{[y+1-b]+(b-1)}{b} = y/b = \con_y .$$

\vspace{3pt}

\textbf{Case II (c) :} \colorbox{gray!20}{$(b^2-b)/y< p < \y + 1$}.
In this case, by Inequality~\eqref{M_1, M_2}, it follows that $$\con_\y^\tP = \frac{p-1+b}{b} > \frac{[(b^2-b)/y]-1+b}{b} \geq \frac{[y+1-b]-1+b}{b}= \y/b = \con_y .$$

\vspace{3pt}

\textbf{Case II (d) :} \colorbox{gray!20}{$p > \y+1$}.
In this case, we see that $$\rob_\y^\tP = \frac{p-1+b}{b} > \frac{(y+1)-1+b}{b} = 1 + y / b = \rob_y .$$

In either case, $(\con_y^\tP, \rob_y^\tP)$ provides no Pareto improvement over $(\con_y, \rob_y)$.

\vspace{3pt}

\noindent\textbf{Case III:} \colorbox{gray!20}{$y > \min\{ b(\lam + 1) - 1, (b-1)/\lam \}$}. Algorithm~\ref{alg:dsr}, purchasing on day $\lceil \lam b\rceil$, achieves a consistency of $\frac{\lceil \lam b \rceil - 1 + b}{b}$ and a robustness of $\frac{\lceil \lam b \rceil - 1 + b}{\lceil \lam b \rceil}$. Consider another algorithm $\tQ$ that buys on day $q$ ($\neq \lceil \lam b\rceil$) with prediction-specific consistency and robustness $\con_\y^\tQ$ and $\rob_\y^\tQ$. 

\vspace{3pt}

\textbf{Case III (a):} \colorbox{gray!20}{$q < \lceil \lam b \rceil$}.  
In this case,  $\rob_\y^\tQ = \frac{q-1+b}{q} > \frac{\lceil \lam b \rceil - 1 + b}{\lceil \lam b \rceil} = \rob_y$. 

\vspace{3pt}

\textbf{Case III (b):} \colorbox{gray!20}{$\lceil \lam b \rceil < q \leq y $}.  
In this case, $\con_\y^\tQ = \frac{q-1+b}{b} > \frac{\lceil \lam b \rceil - 1 + b}{b} = \con_y$.

\vspace{3pt}

\textbf{Case III (c):} \colorbox{gray!20}{$q > y$}.
    If $y > b (\lam + 1) - 1$, we have $\lam b < y+1-b$. Note that $\lim_{b\to + \infty} \frac{\lceil \lam b\rceil}{b} = \lam$. Therefore, $$\con_y = \frac{\lceil \lam b \rceil - 1 + b}{b} < \frac{(y+1-b)-1+b}{b} = \frac{y}{b} = \con_y^\tQ $$ as $b \to \infty$. If $y > (b-1)/\lam$, we have $\lceil \lam b \rceil \geq \lam b > \frac{b^2 - b}{y}$. Therefore, $$\rob_y = \frac{\lceil \lam b \rceil - 1 + b}{\lceil \lam b \rceil} < \frac{[(b^2-b)/y]-1+b}{[(b^2-b)/y]} = \frac{y+b}{b} \leq \rob_y^\tQ .$$

    In either case, $(\con_y^P, \rob_y^P)$ yields no Pareto improvement over $(\con_y, \rob_y)$. By~\Cref{def_strong}, Algorithm~\ref{alg:dsr} is \strongly as $b \to \infty$.
\end{proof}

\newpage
\section{Additional Details for \Cref{sec:rsr}}
\label{app:rsr}

In this section, we provide some supplementary details for the construction of \textit{explicit} algorithms for \emph{randomized ski rental}. The design of the \emph{explicit} strongly-optimal algorithm is largely inspired by an equalizing property, which plays a central role in the classic randomized ski rental problem~\cite{Karlin1988}.

\subsection{Equalizing Distributions}

We formally define equalizing distributions as follows.

%\tongxin{We need to define $\CR(\eq[m,n])$ before using it here. What are $m, n$? (Say something like they are nonnegative integers ...)}

\begin{definition}
\label{def_Equalizing_dis}
    Given integers $m$ and $n$ with $1 \leq m \leq n \leq b$, a distribution $\eq [m, n]$ is  an \textbf{equalizing distribution} in $[m, n]$ if it is only supported on $\{m, m+1, \ldots,n\}$, and $\ratio(\eq[m,n], x) = \CR(\eq[m,n])$ for all $x \in \{m, m+1,\ldots,n\}$, where $\CR(\pi)$ is the competitive ratio of distribution $\pi$.

\end{definition}
Furthermore, the following theorem gives the explicit form of the \textit{equalizing distribution}.
\begin{theorem}
\label{thm:eq_form}
Let $\eq [m,n]$ be an equalizing distribution in $[m, n]$. Then it satisfies the following recursive formula:
\begin{equation} 
\label{dis:eq}
    \eq_i[m,n] = 
\begin{cases} 
\left(1 + \frac{m+b-1}{m}\cdot\left((\frac{b}{b-1})^{n-m}-1\right)\right)^{-1} & \text{for } \ i = m; \\
\eq_m[m,n]\cdot\frac{m+b-1}{m(b-1)} \cdot\left(\frac{b}{b-1}\right)^{i-m-1} , & \text{for } \ i \in \{m+1,\ldots,n\};\\
0,  & \text{otherwise}.
\end{cases}
\end{equation}
\end{theorem}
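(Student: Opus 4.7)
The plan is to exploit the equalizing condition directly to derive a pair of linear recurrences on $\pi_i$, and then close the system using the normalization constraint $\sum_{i=m}^{n}\pi_i=1$. Since the support lies in $[m,n]\subseteq[1,b]$, for every $x\in\{m,\dots,n\}$ we have $\min\{b,x\}=x$, so the equalizing condition $\ratio(\pi^{eq}[m,n],x)=\alpha$ amounts to
\begin{equation*}
\sum_{i=m}^{x}\pi_i(b+i-1) + x\sum_{i=x+1}^{n}\pi_i \;=\; \alpha x, \qquad x\in\{m,\dots,n\}.
\end{equation*}
My first step is to subtract the equation at $x$ from the one at $x{+}1$; after the cancellations collapse, this simplifies to $b\pi_{x+1}+\sum_{i=x+2}^{n}\pi_i=\alpha$, valid for every $x\in\{m,\dots,n-1\}$.

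Next, I will apply the same telescoping trick one more time: subtracting the identity at $x$ from the one at $x{+}1$ (within the range $m+1\le x\le n-2$) eliminates $\alpha$ and yields the clean two-term recurrence $(b-1)\pi_{x+2}=b\,\pi_{x+1}$. This gives the geometric formula $\pi_i=\pi_{m+1}\bigl(\tfrac{b}{b-1}\bigr)^{i-m-1}$ for $i\in\{m+1,\dots,n\}$, which is exactly the shape claimed in \eqref{dis:eq} once I pin down $\pi_{m+1}$ in terms of $\pi_m$.

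To link $\pi_m$ and $\pi_{m+1}$, I will combine two ingredients. First, specializing the original equalizing equation at $x=m$ and using $\sum_{i=m+1}^{n}\pi_i=1-\pi_m$ produces $\alpha=1+\pi_m(b-1)/m$. Second, the once-differenced identity $b\pi_{m+1}+\sum_{i=m+2}^{n}\pi_i=\alpha$, rewritten as $(b-1)\pi_{m+1}+1-\pi_m=\alpha$, can be equated with the previous expression for $\alpha$. Solving gives $\pi_{m+1}=\pi_m\cdot\frac{m+b-1}{m(b-1)}$, which matches the coefficient appearing in the statement and, combined with the geometric recurrence, produces the full formula for $\pi_i$ when $i\ge m+1$.

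Finally, I will enforce normalization $\pi_m+\sum_{i=m+1}^{n}\pi_i=1$. Substituting the closed-form expression and evaluating the geometric sum via $\sum_{j=0}^{n-m-1}r^j=(r^{n-m}-1)/(r-1)$ with $r=b/(b-1)$ (so $r-1=1/(b-1)$) makes the factor $(b-1)$ cancel cleanly, leaving
\begin{equation*}
\pi_m\left(1+\frac{m+b-1}{m}\Bigl[\bigl(\tfrac{b}{b-1}\bigr)^{n-m}-1\Bigr]\right)=1,
\end{equation*}
which gives the stated value of $\pi_m$. The only subtle step is the double-telescoping argument; I expect the main bookkeeping obstacle to be carefully tracking the index ranges for which each derived identity is valid (in particular, the second difference requires $n-m\ge 2$, and the boundary cases $n=m$ and $n=m+1$ should be verified separately, with the formula reducing to $\pi_m=1$ when $n=m$ and to the two-point solution above when $n=m+1$).
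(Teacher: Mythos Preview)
Your proposal is correct and follows essentially the same approach as the paper: both derive the geometric recurrence $\pi_{i+1}=\tfrac{b}{b-1}\pi_i$ for $i\ge m+1$ by telescoping consecutive equalizing equations (the paper packages this step as a separate lemma), obtain the $\pi_m$--$\pi_{m+1}$ link from the equations at $x=m$ and $x=m+1$, and finish via normalization. Your treatment is in fact a bit more careful, since you flag the degenerate cases $n=m$ and $n=m+1$ that the paper does not explicitly address.
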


Note that $\eq[1, b]$ is exactly the same as Karlin's distribution~\cite{Karlin1988}, while $\eq[1, \lfloor\lam b\rfloor]$ coincides with the distribution used in Kumar's randomized algorithm~\cite{Kumar2018} when $y \ge b$.

\begin{theorem}
\label{thm:eq_robust}
    For any $y < b$, the equalizing distribution on $[y+1, b]$, denoted $\eq [y+1, b]$, is $1$-consistent and minimizes robustness among all $1$-consistent distributions under $y$.
\end{theorem}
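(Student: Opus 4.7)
The plan is to first characterize 1-consistent distributions, then reduce the robustness minimization to an LP over distributions with bounded support, and finally use LP duality to prove optimality of $\eq[y+1, b]$. For characterization, fix $y < b$, so $\mI_y = \{y\}$ and $\opt(y) = y$. A direct calculation using the definition of $\alg_{\mathrm{RSR}}$ gives
\[
\alg_{\mathrm{RSR}}(\pi, y, y) - \opt(y) \;=\; \sum_{i=1}^{y} \pi_i\,(b + i - 1 - y),
\]
and each summand is strictly positive whenever $\pi_i > 0$, since $i \geq 1$ and $y < b$ imply $b + i - 1 - y \geq b - y > 0$. Hence $\pi$ is 1-consistent if and only if $\mathrm{supp}(\pi) \subseteq \{y+1, y+2, \ldots\}$. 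To further restrict the support, I would invoke \Cref{lemma:restrict} to reduce to distributions supported on $\mU(b,y) = [b] \cup \{y+1\}$; combined with 1-consistency this yields $\mathrm{supp}(\pi) \subseteq [y+1, b]$ (using $y + 1 \leq b$).

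With support fixed to $[y+1, b]$, minimizing $\rob_y$ among 1-consistent distributions becomes a linear program: $\min \gamma$ subject to $\sum_{i=y+1}^{b} a_i^x\,\pi_i \leq \gamma\,x$ for each $x \in \{y+1, \ldots, b\}$, $\sum_i \pi_i = 1$, $\pi \geq 0$, where $a_i^x = b + i - 1$ if $i \leq x$ and $a_i^x = x$ if $i > x$. By \Cref{thm:eq_form}, $\eq[y+1, b]$ is primal-feasible with every constraint for $x \in [y+1, b]$ tight at a common value $\alpha$. By LP duality, there exist nonnegative dual multipliers $\{\lambda_x\}_{x=y+1}^{b}$ such that $\sum_x \lambda_x\,a_i^x$ equals the same constant $C$ for every $i \in [y+1, b]$ and $C / \sum_x \lambda_x\,x = \alpha$. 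For any competing 1-consistent $\pi'$ supported on $[y+1, b]$, exchanging the order of summation gives
\[
\rob_y(\pi') \cdot \textstyle\sum_x \lambda_x\, x \;\geq\; \sum_x \lambda_x\, \alg_{\mathrm{RSR}}(\pi', x, y) \;=\; \sum_i \pi'_i \sum_x \lambda_x\, a_i^x \;=\; C,
\]
so $\rob_y(\pi') \geq \alpha$ as required.

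The hardest step will be explicitly constructing the nonnegative dual multipliers $\lambda_x$ that equalize $\sum_x \lambda_x\,a_i^x$ across $i \in [y+1, b]$, given the recursive form of $\eq_i[y+1, b]$ in \Cref{thm:eq_form}. I plan to solve the equalization system by backwards recursion on $i$: differencing the conditions for consecutive indices $i$ and $i+1$ should yield a two-term recurrence in the $\lambda_x$ whose geometric ratio mirrors the $b/(b-1)$ factor appearing in $\eq_i[y+1, b]$, from which nonnegativity of each $\lambda_x$ follows directly. If the explicit construction proves cumbersome, an alternative route is a direct exchange argument: any $\pi' \neq \eq[y+1, b]$ supported on $[y+1, b]$ with $\rob_y(\pi') < \alpha$ must differ by a net mass transfer between two indices, and tracking which of the tight $x$-constraints this perturbation worsens produces the contradiction without passing through duality.
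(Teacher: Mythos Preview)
Your proposal is correct and uses the same LP-duality backbone as the paper, but the execution is genuinely different. The paper's proof works ``forward'': it first uses perturbation arguments to show that any optimal primal solution must have $\pi_i^*>0$ for all $i\in\{y+1,\ldots,b\}$, then invokes complementary slackness to make the corresponding dual constraints tight, then runs a second structural argument on the dual to show all dual variables are nonzero, and finally invokes complementary slackness once more to conclude that every primal constraint $\ratio(\pi^*,x)=\rob_y^*$ is tight, forcing $\pi^*=\eq[y+1,b]$. Your route is the ``verification'' direction: you already hold the candidate $\eq[y+1,b]$ with all constraints tight at value $\alpha$, and you plan to produce a single dual certificate $\{\lambda_x\}$ with $\sum_x\lambda_x a_i^x$ constant in $i$. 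Your recursion idea is right: differencing the equalization conditions gives $\sum_{x>i}\lambda_x=(b-1)\lambda_i$, whence the tail sums $S_i=\sum_{x\ge i}\lambda_x$ satisfy $S_{i+1}=\tfrac{b-1}{b}S_i$, so every $\lambda_x$ is positive and the weak-duality inequality closes. This is shorter and more transparent than the paper's double complementary-slackness loop; the paper's approach, on the other hand, never needs the candidate in advance and generalizes more readily to the later lemmas where the optimum is not known a priori.

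One small gap to close: your LP only carries constraints for $x\in\{y+1,\ldots,b\}$, but $\rob_y$ is a supremum over all $x\in\mathbb{N}_+$. You should add a sentence observing that for $\pi$ supported on $[y+1,b]$, the ratio at any $x\le y$ equals $1$ and the ratio is constant on $x\ge b$, so the omitted constraints are dominated.
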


The proof of~\Cref{thm:eq_form} and \ref{thm:eq_robust} is detailed in Appendix~\ref{app:eq_form_robust}.

\subsection{\opA and \opB \label{appendix:opAB}}

We now propose two types of operations (\opA and \opB) that transform an equalizing distribution into a distribution with the desired robustness level on the Pareto frontier.

\textbf{\opA: {Consistency Boosting}.} \opA (see Algorithm~\ref{alg:op_A}) is employed when $y \geq b$. It starts with an equalizing distribution $\eq[1, n]$, where the precise choice of $n$ depends on the desired robustness level $\overline{\rob}$. The core idea of \opA is to enhance the prediction-specific consistency $\con_y$ of the distribution while not sacrificing the prediction-specific robustness $\rob_y$.

%\tongxin{Explain the equations used in the two operations. I hide their eq labels since they are not used. If needed, we can set labels.}

% \setcounter{algorithm}{0}

\begin{algorithm}[h]
\caption{\opA: \textsc{Consistency Boosting}}
\label{alg:op_A}
\begin{algorithmic}[1]
\State \textbf{Input:} Initial distribution $\eq[1, n]$ and prediction $y \geq b$;
\State \textbf{Initialization:} $\pi \gets \eq[1,n]$, iterative index $r \gets n$, initial robustness $\rob \gets \rob_y (\pi)$;
\State \texttt{Step 0:}
\State \quad \quad {\color{gray}  \textit{// Check whether the current distribution is a two-point distribution.}}
\State \quad \quad \textbf{If} $r=1$, \textbf{then} update $\pi_1 \gets 1/ b$ and $\pi_{b+1} \gets (b-1)/b$, go to \texttt{Step 4};
\State \quad \quad {\color{gray}  \textit{// Check if further shifting enhances consistency.}}
\State \quad \quad \textbf{If} $r \leq y+1 - b$, \textbf{then} go to \texttt{Step 4};
\State \texttt{Step 1:}
\State \quad \quad {\color{gray}  \textit{// Shifting probability mass from $\pi_r$ to $\pi_{y+1}$.}}
\State \quad \quad  Update $\pi_{y+1} \gets \pi_{y+1} + \pi_r$ and $\pi_r \gets 0$;
\State \texttt{Step 2:}
\State \quad \quad {\color{gray}  \textit{// Determine if more shifting is necessary.}}
\State \quad \quad \textbf{If} $\ratio(\pi, y+1) < \rob$, \textbf{then} set $r \gets r-1$, go to \texttt{Step 0};
\State \texttt{Step 3:} 
\State \quad \quad {\color{gray}  \textit{//Determine the maximum amount of probability mass that can be shifted.}}
\State \quad \quad Set $\pi'\gets (\pi_1,\ldots,\pi_{r-1}, \pi_r',0,\ldots,0, \pi_{y+1}')$;
\State \quad \quad Solve $\pi_{r}', \pi_{y+1}'$ through
$\smash{
\ratio(\pi', y+1) = \rob \text{ and } \
    \pi_{r}' + \pi_{y+1}' = 1 - \sum_{i=1}^{r-1} \pi_i}
$;
%\State \quad \quad {\color{gray}  \texttt{// where $\pi' = (\pi_1,\ldots,\pi_{r-1}, \pi_r',0,\ldots,0, \pi_{y+1}')$, and $\pi_r', \pi_{y+1}'$ are unknowns}} \tongxin{Comments are often used to explain the steps, mostly using plain words i.e., what this step actually does ... rather than something like this (which doesn't really help readers understand the underlying meanings in these steps).  We can explain mathematical details later in the in-line contexts (like after this algorithm block). }
\State \quad \quad Update $\pi_r, \pi_{y+1} \gets \pi_r', \pi_{y+1}'$;
\State \texttt{Step 4:} \ \textbf{Return} $\pi$.
\end{algorithmic}
\end{algorithm}

\begin{remark}
Reaching $r = 1$ in \texttt{Step 0} only occurs when $y = b$; otherwise, the condition $r \leq y + 1 - b$ will be satisfied earlier. In the special case where $y = b$, any two-point distribution over $\{1, b+1\}$ is trivially $1$-consistent. In this case, \opA returns the distribution among them that achieves the lowest possible robustness.
\end{remark}

\begin{remark}
When $y \geq b$, the prediction-specific consistency is $\con_y = \frac{\sum_{i=1}^{y} \pi_i \cdot (b+i-1)}{b} + \frac{\sum_{i=y+1}^{\infty}  \pi_i \cdot y}{b}$. As a result, shifting probability mass from $\pi_r$ to $\pi_{y+1}$ can potentially improve prediction-specific consistency only in cases where $b + r - 1 > y$. This explains why, when $r\leq y+1-b$, we immediately terminate the shifting process in \emph{\texttt{Step 0}}.
\end{remark}

\textbf{\opB: {Robustness Seeking}.} \opB (see Algorithm~\ref{alg:op_B}) is employed when $y < b$. \opB (see Algorithm~\ref{alg:op_B}) initializes from the most robust $1$-consistent distribution $\eq [y+1, b]$ given the prediction $y$, and incrementally sacrifices consistency to gain robustness. This process continues until a distribution is obtained that achieves Pareto-optimality under the consistency-robustness trade-off, subject to a desired robustness level. %\tongxin{Explain the mathematical details of Operation A here.}

\begin{algorithm}[H]
\caption{\opB: \textsc{Robustness Seeking}}
\label{alg:op_B}
\begin{algorithmic}[1]
\State \textbf{Input:} Initial distribution $\pi^{eq}[y+1, b] (y < b)$ and desired robustness $\rob$;
\State \textbf{Initialization:} $\pi \gets \pi^{eq}[y+1, b]$, iterative index $r \gets 1$;
\State \texttt{Step 1:}
\State \quad \quad  {\color{gray}  \textit{// Establish an equalized distribution over $\{1, \ldots, r, y+1, \ldots, b\}$.}}
\State \quad \quad Set $\pi'\gets(\pi_1',\ldots,\pi_r', 0, \ldots, 0, \pi_{y+1}', \ldots, \pi_b')$;
\State \quad \quad Solve $(\pi', \rob')$ through
$\smash {\
 \sum_{i=1}^{b} \pi_i' = 1 \ \text{ and } \ \ratio(\pi', x)  = \rob', \ x \in \{1, \ldots, r, y+1, \ldots,b\}}$;
\State \texttt{Step 2:}
\State \quad \quad {\color{gray}  \textit{// Check if the robustness of the newly constructed distribution meets the desired level.}}
\State \quad \quad \textbf{If} $\rob' > \rob$, \textbf{then} set $r \gets r+1$, and go to \texttt{Step 1};
\State \texttt{Step 3:}
\State \quad \quad {\color{gray}  \textit{// After determining the appropriate value of r ,}}
\State \quad \quad {\color{gray}  \textit{// Prioritize assigning probability mass to $\{1, \ldots, r-1, y+1, \ldots,b\}$.}}
\State \quad \quad Solve $\pi'$ through
$
\ \sum_{i=1}^{b} \pi_i'  = 1 \ \text{ and } \  \ratio(\pi', x)  = \rob,   \ x \in \{1, \ldots, r-1, y+1, \ldots,b\}
$;
\State \quad \quad Update $\pi \gets \pi'$;
\State \texttt{Step 4:}
\ \textbf{Return} $\pi$.
\end{algorithmic}
\end{algorithm}

%\tongxin{Similarly, explain the mathematical details of Operation B here. Add connecting words.}

\subsection{\PRSR: Prediction-specific Randomized Ski Rental}

With these foundations in place, we proceed to introduce an algorithm for randomized ski rental, referred as \PRSR (see Algorithm~\ref{alg:rsr}).

\begin{algorithm}[H]
\caption{\textsc{\textbf{\PRSR}:  \textbf{P}rediction-Specific \textbf{R}andomized \textbf{S}ki \textbf{R}ental}}
\label{alg:rsr}
\begin{algorithmic}[1]
\State \textbf{Input:} $\overline{\rob} \in [e_b / (e_b - 1), b-2)$
\State {\color{gray} \textit{// Determine the smallest $n$ such that $\eq[1,n]$ is $\overline{\rob}$-robust.}}
\State Determine $n \gets \lceil \log_{b/(b-1) } (1 + 1 / (\overline{\rob} - 1))\rceil$;
\State {\color{gray} \textit{// Determine the adjusted robustness $\rob'$, defined as that of $\eq[1, n]$.}}
\State Determine $\rob' \gets [(\frac{b}{b-1})^n-1]^{-1} + 1 $;
\State \textbf{If} $y \geq b$ \textbf{then} 
\State \quad $\pi \gets \opA(\eq[1, n], y)$; (see Algorithm~\ref{alg:op_A})
\State \textbf{Else if} $\y < b$ \textbf{then}
\State \quad {\color{gray} \textit{// Ensure that the adjusted robustness does not exceed the upper bound $\rob_\nu \coloneqq \rob(\eq[y+1, b])$.}}
\State \quad Determine $\rob_\nu \gets \rob(\eq[y+1, b])$ and determine $\rob'' \gets \min\{\rob_{\nu}, \rob' \}$;
\State \quad $\pi \gets \opB(\eq[y+1, b], \rob'')$; (see Algorithm~\ref{alg:op_B})
\State Choose $i$ randomly according to the distribution $\pi$;
\State Buy the skis at the start of day $i$.
\end{algorithmic}
\end{algorithm}

\begin{remark}
    Consider the equalizing distribution $\eq [1, n]$ with $n \leq b$. Let $\hat{\rob}$ denote the robustness of $\eq [1, n]$. Note that 
    \begin{equation}
    \label{eq:rob_eq_1}
    \hat{\rob} = \ratio(\eq [1, n], 1) = (b-1)\cdot \eq_1 [1,n] + 1.
\end{equation}
Moreover, by \Cref{thm:eq_form},
\begin{equation}
    \label{eq:eq_1}
    \eq_1 [1,n] \cdot \left( \frac{1 - (\frac{b}{b-1})^n}{1 - (\frac{b}{b-1})} \right) = 1.
\end{equation}
By \Cref{eq:rob_eq_1} and \Cref{eq:eq_1},
\begin{equation*}
    n = \log_{b/(b-1)} (1 + 1/(\hat{\rob}-1)).
\end{equation*}

Therefore, choosing $n = \lceil \log_{b/(b-1)} (1 + 1/(1-\overline{\rob}))\rceil$ ensures that $\eq [1, n]$ is $\overline{\rob}$-robust.
\end{remark}

\begin{remark}
    The requirement that $\overline{\rob} < b-2$ is primarily enforced to guarantee $n > 1$.
\end{remark}

In \Cref{app:proof_prsr}, we formally verify the prediction-specific Pareto optimality and strong optimality of $\PRSR$.

%\tongxin{I'd still recommend adding comments to after nontrivial lines in the operations and PRSR to explain what the steps do.}

%\tongxin{Use $\cdots$ (for $=\cdots=$,$<\cdots$,etc) or $\ldots$ (for $,\ldots,$) rather than $...$. I've replaced them.}

%\tongxin{I'd use align + subequations to present systems of equations, instead of cases.}

%\tongxin{Only use ``THE optimal solution'' when the optimal solution is guaranteed to be unique; otherwise, use ``AN optimal solution''.}

%\sizhe{Thank you Professor! Got it!}

\newpage
\section{Proofs for \Cref{sec:rsr} and Appendix~\ref{app:rsr} \label{appendix:RSR}}

This section provides proofs for \Cref{sec:rsr} and Appendix~\ref{app:rsr}.

\subsection{Proofs of Useful Lemmas \label{appendix:supporting_lemma}}

\begin{lemma}
    \label{lemma:restrict}
    Consider the ski rental randomized problem with $b > 1$, for any distribution $\pi$ supported on $\mathbb{N}_+$, there exists another distribution $\pi'$ supported on a finite set $[b]$ such that $\CR(\pi') \leq \CR(\pi)$. Specifically, when the best attack for $\pi$ only occurs at $x > b$, we have $\CR(\pi') < \CR(\pi)$. Furthermore, in the learning-augmented setting with prediction $y$, for any distribution $\pi$ supported on $\mathbb{N}_+$, there exists another distribution $\pi'$ supported on the finite set $[b] \cup \{y+1\}$ such that $\con_y(\pi') \leq \con_y(\pi)$ and $\rob_y(\pi') \leq \rob_y(\pi)$.
\end{lemma}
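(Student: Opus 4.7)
The plan is to construct $\pi'$ explicitly via a mass-shifting transformation and then verify the stated inequalities by direct computation. For the classical (non-augmented) claim, I set $\pi'_i = \pi_i$ for $i < b$ and $\pi'_b = \sum_{j \geq b} \pi_j$; that is, I lump all probability on days $\geq b$ onto day $b$. For the learning-augmented claim with $y < b$ the same construction already yields a distribution supported on $[b] \subseteq [b] \cup \{y+1\}$. For $y \geq b$, I refine the construction so that mass on $(b, y]$ is pushed down to $b$ and mass on $[y+1, \infty)$ is consolidated at $y+1$: set $\pi'_i = \pi_i$ for $i < b$, $\pi'_b = \pi_b + \sum_{b < j \leq y} \pi_j$, and $\pi'_{y+1} = \sum_{j \geq y+1} \pi_j$. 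Before computing anything, I would record three structural facts that make the rest of the proof mechanical: (i) for any $\pi$, $\ratio(\pi, x)$ is non-decreasing in $x$ on $[b, \infty)$ with $\lim_{x \to \infty} \ratio(\pi, x) = \tfrac{1}{b} \sum_i \pi_i (b + i - 1)$; (ii) $\ratio(\pi', x) = \ratio(\pi, x)$ for every $x < b$, since mass on any day $> x$ contributes exactly $x$ to the cost regardless of which such day it sits on; and (iii) for $\pi'$ supported on $[b] \cup \{y+1\}$, $\ratio(\pi', \cdot)$ is piecewise linear on $[b, \infty)$ and constant on $[y+1, \infty)$, so its maximum on $[b, \infty)$ is attained at $y+1$ (and at $b$ itself in the classical construction).

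The main inequalities then follow from two algebraic identities. For the classical case, a direct expansion gives
\[
\lim_{x \to \infty} \ratio(\pi, x) - \ratio(\pi', b) \;=\; \tfrac{1}{b}\sum_{j \geq b} \pi_j (j - b) \;\geq\; 0,
\]
with strict inequality whenever $\pi$ places positive mass on some $j > b$; together with fact (ii) this yields $\CR(\pi') \leq \CR(\pi)$. For the strict statement, if the worst attack for $\pi$ occurs only at $x > b$, then by monotonicity in (i) the distribution $\pi$ must place positive mass on some $j > b$ (otherwise $\ratio(\pi, \cdot)$ would already be constant on $[b, \infty)$), and the inequality above is strict, forcing $\CR(\pi') < \CR(\pi)$. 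In the learning-augmented case with $y \geq b$, the analogous expansion gives
\[
\lim_{x \to \infty} \ratio(\pi, x) - \ratio(\pi', y+1) \;=\; \tfrac{1}{b}\Big[\sum_{b < j \leq y}\pi_j (j - b) \;+\; \sum_{j \geq y+1}\pi_j (j - 1 - y)\Big] \;\geq\; 0,
\]
which, with facts (ii) and (iii), delivers $\rob_y(\pi') \leq \rob_y(\pi)$. The consistency inequality is similarly direct: a computation of $\ratio(\pi', y)$ against $\ratio(\pi, y)$ collapses to $\ratio(\pi', y) - \ratio(\pi, y) = \tfrac{1}{b} \sum_{b < j \leq y} \pi_j (b - j) \leq 0$.

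The main obstacle is that the proposed shift does \emph{not} dominate $\pi$ pointwise in $x$: for example, moving mass from a day $j \in (b, y]$ down to $b$ strictly \emph{increases} the cost at adversarial choices $x \in [b, j-1]$, because the algorithm then buys at $b \leq x$ and pays $2b - 1$ rather than the former $x$. The argument therefore cannot rely on pointwise cost comparison and must instead exploit a global observation: for the original $\pi$, the worst attack on $[b, \infty)$ is captured in the limit by the moment $\tfrac{1}{b}\sum_i \pi_i (b+i-1)$, and this moment weakly decreases under each mass shift I perform because the shift replaces a day $j$ by a smaller day $j' \in \{b, y+1\}$. Executing this cleanly in LaTeX will amount to partitioning the adversary's choices into $x < b$, $b \leq x \leq y$, $x = y+1$, and $x > y+1$, and verifying in each regime that the maximum of $\ratio(\pi', \cdot)$ is controlled either by $\max_{x < b} \ratio(\pi, x)$ or by $\lim_{x \to \infty}\ratio(\pi, x)$, both of which are upper-bounded by $\rob_y(\pi)$.
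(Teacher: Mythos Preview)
Your construction is exactly the one the paper uses (collapse mass at $\geq b$ onto $b$; for $y\geq b$, send mass on $(b,y]$ to $b$ and on $[y+1,\infty)$ to $y+1$), and your verification via the monotonicity of $\ratio(\pi,\cdot)$ on $[b,\infty)$ and the comparison against $\lim_{x\to\infty}\ratio(\pi,x)$ is precisely the ``simple verification'' the paper omits. One minor refinement over the paper: by working with $\lim_{x\to\infty}\ratio(\pi,x)$ rather than $\ratio(\pi,r)$ for a maximal support point $r$, your argument cleanly covers distributions with unbounded support, whereas the paper implicitly assumes $r<\infty$.
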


\begin{proof}
We first consider the traditional competitive analysis setting.

Let $\pi$ be a distribution with $\sum_{i=b+1}^{\infty} \pi_i \neq 0$. Consider another distribution $\pi'$ with $\pi'_i = \pi_i$ for all $i \in [b-1]$, and $\pi'_b = \sum_{i=b}^{\infty} \pi_i$. Let $r$ be the maximal index on which $\pi$ has probability mass, i.e. $r = \max\{i \in \mathbb{N} \mid \pi_i \neq 0\}$. Let $\CR(\pi), \CR(\pi')$ denote the competitive ratio of a randomized algorithm that uses $\pi$ and $\pi'$, respectively.

\vspace{4pt}

\noindent\textbf{Case I:} \colorbox{gray!20}{$x <b$}. 
By the definition of $\ratio(\pi, x)$ in~\Cref{sec:rsr}, $\ratio(\pi, x) = \ratio(\pi', x)$.

\vspace{4pt}
    
\noindent\textbf{Case II:} \colorbox{gray!20}{$x \geq b$}. 
In this range, the worst-case attack against $\pi$ occurs at $x = r$, while the worst-case attack against $\pi’$ occurs at $x = b$. According to the definition of $\ratio(\pi, x)$, we have $\ratio(\pi, r) > \ratio(\pi’, b)$.
    
Note that $\CR(\pi) = \sup_{x \in \mathbb{N}_+} \ratio(\pi, x)$ and $\CR(\pi') = \sup_{x \in \mathbb{N}_+} \ratio(\pi', x)$. We have $\CR(\pi) \geq \CR(\pi')$. Specifically, when the best attack for $\pi$ occurs at $x > b$, we have $\CR(\pi) > \CR (\pi')$.

In the learning-augmented setting, we divide our discussion into two parts: $y < b$ and $y \geq b$.

\vspace{4pt}

\noindent\textbf{Case I:} \colorbox{gray!20}{$y < b$}. 
In this case, $[b] \cup \{y+1\} = [b]$. Let $\pi'$ be the distribution after transferring all probability mass beyond $b$ of $\pi$ to $b$ (i.e., $\pi'_i = \pi_i, \forall\  i \in [b-1]$ and $ \pi'_b = \sum_{i=b}^{\infty}\pi_i$). By \Cref{lemma:restrict}, $\rob_y (\pi') \leq \rob_y (\pi)$. By \Cref{ps_con_rob_rsr}, we have $\con_y (\pi') = \con_y (\pi)$.

\vspace{4pt}

\noindent\textbf{Case II:} \colorbox{gray!20}{$y \geq b$}. 
Construct another distribution $\pi'$ by transferring all the probability mass of $\pi$ on the $\{b+1, \ldots, y\}$ to $b$, and all the mass beyond $y+1$ to $y+1$ (i.e., $\pi_i' = \pi_i, \forall\  i \in [b-1]; \pi_b' = \sum_{i=b}^{y} \pi_i$ and $\pi_{y+1}' = \sum_{i=y+1}^{\infty} \pi_i$). By a simple verification, we obtain that $\con_y (\pi') \leq \con_y(\pi)$ and $\rob (\pi') \leq \rob(\pi)$.

\end{proof}

\begin{lemma}
\label{lemma:eq_ratio}
    Consider a randomized ski rental problem with $b > 1$. Consider a probability distribution $\pi$ over $\mathbb{N}_+$. If $\ratio(\pi, x) = \CR(\pi)$ for all $ x \in \{m, m+1,\ldots,n \}$, then we have 
    \begin{equation*}
        \pi_{i+1} = \frac{b}{b-1}\cdot \pi_i , \quad \forall \ i \in \{m+1,\ldots,n-1\}.
    \end{equation*}
\end{lemma}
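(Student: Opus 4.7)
The plan is to expand the equalizing condition algebraically, using the closed-form expression for $\ratio(\pi,x)$ coming from the cost $\alg_{\mathrm{RSR}}(\pi,x,y)$, and then extract the recursion from two neighbouring consecutive equalities. Since by context we have $1 \le m \le n \le b$, for every $x \in \{m,\ldots,n\}$ we have $\min\{b,x\} = x$, so
\begin{equation*}
\ratio(\pi,x) \;=\; \frac{1}{x}\Bigl[\,\sum_{i=1}^{x} \pi_i(b+i-1) \;+\; x \sum_{i=x+1}^{\infty} \pi_i\Bigr].
\end{equation*}
I would introduce the shorthand $T_x \coloneqq \sum_{i=x+1}^{\infty}\pi_i$ so that $T_{x-1}-T_x = \pi_x$, which makes the subsequent telescoping transparent.

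Next, I would use the equalizing hypothesis $\ratio(\pi,x) = \CR(\pi) \eqqcolon C$ for all $x \in \{m,\ldots,n\}$ and compare consecutive values. Multiplying through by $x$ and $x+1$ respectively gives
\begin{equation*}
Cx = \sum_{i=1}^{x}\pi_i(b+i-1) + x\,T_x, \qquad C(x+1) = \sum_{i=1}^{x+1}\pi_i(b+i-1) + (x+1)\,T_{x+1}.
\end{equation*}
Subtracting the first from the second and simplifying using $T_{x+1} = T_x - \pi_{x+1}$, the $\pi_{x+1}(b+x)$ and $(x+1)\pi_{x+1}$ terms collapse, leaving the clean identity
\begin{equation*}
(b-1)\,\pi_{x+1} + T_x \;=\; C \qquad \text{for every } x \in \{m,\ldots,n-1\}.
\end{equation*}

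Finally, I would take the difference of this identity at two consecutive admissible values $x-1$ and $x$ (both in $\{m,\ldots,n-1\}$, which forces $x \in \{m+1,\ldots,n-1\}$). This gives $(b-1)(\pi_{x+1}-\pi_x) = T_{x-1} - T_x = \pi_x$, i.e.\ $\pi_{x+1} = \tfrac{b}{b-1}\pi_x$ for each $x \in \{m+1,\ldots,n-1\}$, which is precisely the claim after relabelling $x = i$.

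I do not anticipate any serious obstacle: the argument is a two-step discrete differentiation of the equalizing relation, and the only care required is to keep track of which indices $x$ admit the neighbouring equalities (both $x$ and $x{\pm}1$ must lie in $\{m,\ldots,n\}$), which is exactly what pins down the final range $\{m+1,\ldots,n-1\}$ in the statement.
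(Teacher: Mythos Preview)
Your proposal is correct and follows essentially the same approach as the paper's proof: both multiply the equalizing relations by $x$ and take two successive differences to extract the geometric recursion. Your presentation is in fact slightly cleaner, introducing the tail notation $T_x$ and handling the general index $x$ directly, whereas the paper works out the specific cases $x = m, m+1, m+2$ and then appeals to ``similarly'' for the remaining indices; the mathematical content is identical.
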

\vspace{-8pt}
\begin{proof}
By the definition of $\ratio (\prob, \x)$ in~\Cref{sec:rsr}, when $x \leq b$:
\begin{equation*}
    \ratio (\prob, \x) = \frac{\sum_{i=1}^\x \prob_i (\x-1+b) + \sum_{i=\x+1}^{\infty} \prob_i\cdot \x}{x}.
\end{equation*} 
Consider the following equations:
\begin{align}
\label{ratio_a}
    \ratio (\pi, i) &= \CR (\pi), \quad m \leq i \leq m+2
\end{align}
Subtracting both sides of~$(m+1)\times$\Cref{ratio_a} with $i = m+1$ by~$m\times$\Cref{ratio_a} with $i = m$,
\begin{equation}
    \label{ratio_1}
    b \cdot\pi_{m+1} + \sum_{i=m+2}^{\infty} \pi_i = \CR (\pi).
\end{equation}
Subtracting both sides of~$(m+2)\times$ \Cref{ratio_a} with $i = m+2$ by~$(m+1)\times$ \Cref{ratio_a} with $i = m+1$,
\begin{equation}
    \label{ratio_2}
    b \cdot\pi_{m+2} + \sum_{i=m+3}^{\infty} \pi_i = \CR (\pi).
\end{equation}
Finally, subtracting both sides of \Cref{ratio_2} by \Cref{ratio_1} yields
$
    \pi_{m+2} = \frac{b}{b-1}\pi_{m+1}.
$
Similarly, we can show that 
\begin{equation*}
        \pi_{i+1} = \frac{b}{b-1} \pi_i , \quad \forall \ i \in \{m+2,\ldots,n-1\}.
\end{equation*}
\end{proof}

\subsection{Proof of \Cref{thm:eq_form} and \ref{thm:eq_robust} \label{app:eq_form_robust}}
\begin{proof}[Proof of~\Cref{thm:eq_form}]
By \Cref{def_Equalizing_dis}, $\eq[m,n]$ only has positive support over $\{m, \ldots,n\}$. 

Consider $x = m$ and $x = m+1$,
\begin{equation}
    \label{eq_1}
    \frac{(m-1+b)\cdot\eq_{m}[m,n]}{m}+  (1-\eq_m[m,n]) =\CR(\pi).
\end{equation}
\begin{equation}
    \label{eq_2}
    \frac{(m-1+b)\eq_m[m,n] + (m+b)\eq_{m+1}[m,n]]}{(m+1)} +(1-\eq_{m}[m,n]-\eq_{m+1}[m,n]) = \CR(\pi).
\end{equation}
Equating the left-hand sides of \Cref{eq_1} and \Cref{eq_2} yields:
\begin{equation}
\label{ratio_m}
    \eq_{m+1}[m,n] = \frac{m(b-1)}{m+b-1} \eq_m[m,n].
\end{equation}
By~\Cref{lemma:eq_ratio}, we have
\begin{equation}
\label{ratio_m+}
        \eq_{i+1}[m,n] = \frac{b}{b-1}\cdot \eq_i[m,n] , \quad \forall \ i \in \{m+1,\ldots,n-1\}.
\end{equation}
\Cref{ratio_m}, \Cref{ratio_m+}, together with the constraint $\sum_{i=m}^n \pi_i = 1$, imply that
\begin{equation*} 
    \eq_i[m,n] = 
\begin{cases} 
\left(1 + \frac{m+(b-1)}{m}\cdot\left((\frac{b}{b-1})^{n-m}-1\right)\right)^{-1} & \text{for } \  i = m; \\
\eq_m[m,n]\cdot\frac{m+(b-1)}{m(b-1)} \cdot\left(\frac{b}{b-1}\right)^{i-m-1} , & \text{for } \  i \in \{m+1,\ldots,n\};\\
0,  & \text{otherwise}.
\end{cases}
\end{equation*}
\end{proof}

\begin{proof}[Proof of~\Cref{thm:eq_robust}]
    Applying \Cref{lemma:restrict}, we first reduce the support under consideration to $[b]$. Furthermore, the condition of $1$ consistency requires $\pi_i = 0$ for all $i \in [y]$. We consider the following optimization problem.

\begin{align*}
    \label{Primal}
    \tag{Primal Problem}
    \min_{\pi_{y+1}, \ldots, \pi_b, \rob_y} \quad & \rob_{y}   \\
    \text{s.t.} \quad \quad \quad 
    & \sum_{j=y+1}^{i} \pi_j \cdot (b + j - 1) +  \sum_{j=i+1}^{b} \pi_j  \cdot i \leq \rob_{y} \cdot i, \quad \forall \ i \in \{y+1, \dots, b\}, \\
    & \sum_{i=y+1}^{b} \pi_i = 1, \\
    & \pi_i \geq 0, \quad \forall i \in \{y+1,\dots, b\}.\\
    & \rob_{y} \geq 0
\end{align*}

Let $\pi_{y+1}^*,\ldots, \pi_b^*,\rob_y^*$ denote the optimal solution to \ref{Primal}. It's clear that 
\begin{equation}
    \label{primal_rob}
    \rob_y^* = \min_{\pi_{y+1}^*,\ldots, \pi_b^*}\max_{i\in\{ y+1,\ldots,b \}} \ratio(\pi^*, i).
\end{equation}

We claim that $\pi_{y+1}^* \neq 0$. We prove this by contradiction, assuming $\pi_{y+1}^* = 0$. Let $r$ be the minimal index with non-zero probability mass, i.e. $r \coloneqq \min \{i \mid \pi_i^* \neq 0\}$. Consider another distribution $\pi'$ with $\pi'_{y+1} = \ep$, $\pi'_r = \pi_r^*-\ep$ and $\pi'_i = \pi_i^*$ for all $i \in \mathbb{Z}_+\setminus \{y+1, r\}$, where we denote $$\ep \coloneq \min \left\{\pi_r^*, \frac{y+1}{b-1}\cdot (\frac{\rob^*_y -1}{2})\right\}.$$ 

Note that 
\begin{equation*}
    \ratio(\pi',y+1) = \frac{(b+y)\cdot \pi_{y+1}' + (y+1)\cdot (1-\pi_{y+1})}{y+1} \leq \frac{\rob^*_y -1}{2} + 1 = \frac{\rob^*_y + 1}{2} < \rob^*_y.
\end{equation*}
Moreover, it follows that
\begin{equation*}
\ratio(\pi',i) < \ratio(\pi^*, i), \quad \forall i \ \in \{y+2,\ldots, b\}.
\end{equation*}
Therefore, 
\begin{equation*}
\max_{i\in\{ y+1,\ldots,b \}} \ratio(\pi',i) < \max_{i\in\{ y+1,\ldots,b \}} \ratio(\pi^*, i) = \rob_y^*,
\end{equation*}
which contradicts with \Cref{primal_rob}.

We then claim that $\pi_b^* \neq 0$. We prove this by contradiction, assuming $\pi_b^* = 0$. Consider another distribution $\pi'$ with $\pi_{y+1}' = \pi_{y+1}^* - \ep$, $\pi_b' = \ep$, and $\pi'_i = \pi_i^*$ for all $i \in \mathbb{Z}_+\setminus \{y+1, r\}$, where $\ep = \min \{ \pi^*_{y+1}, {\rob_y^*}/({2b-1}) \}$.
We verify that 
\begin{equation*}
\ratio(\pi',i) < \ratio(\pi^*, i), \quad \forall i \ \in \{y+1,\ldots, b-1\}.
\end{equation*}
Furthermore,
\begin{equation*}
    \ratio(\pi',b) < \frac{\sum_{i=y+1}^{b-1} (i-1+b)\cdot \pi_i^* + (2b-1)\cdot \ep}{b}  \leq \frac{(b-1)\cdot \rob_y^* + (2b-1)\cdot \ep}{b} \leq \rob^*_y.
\end{equation*}
Consequently,
\begin{equation*}
\max_{i\in\{ y+1,\ldots,b \}} \ratio(\pi',i) < \max_{i\in\{ y+1,\ldots,b \}} \ratio(\pi^*, i) = \rob_y^*,
\end{equation*}
which contradicts with \Cref{primal_rob}.

We further claim that $\pi_i^* \neq 0$, for all  $\ i \in \{y+2,\ldots, b-1\}$. We prove this by contradiction, assuming that there exists some $\ q \in \{y+2,\ldots, b-1\}$, such that $\pi_q^* = 0$. Let $r = \min \{i >q \mid \pi_i \neq 0\}$. Since $\pi_b^* \neq 0$, such $r$ is guaranteed to exist. Consider another distribution $\pi'$ with $\pi'_{y+1} = \pi^*_{y+1} - \ep_1$, $\pi'_q =\ep_1+\ep_2$, $\pi_r = \pi^*_{r} - \ep_2$, and $\pi'_i = \pi_i^*$ for all $i \in \mathbb{Z}_+\setminus \{y+1, q, r\}$, where we denote 
\begin{align*}
    \ep_1 & \coloneq \min \left\{\pi_{y+1}^*, \frac{r-q}{2(q-y-1)}\cdot \ep_2\right\},\\
    \ep_2 & \coloneq \min  \left\{ \pi^*_{r}, \frac{2(q-y-1)}{[(r-q)+2(q-y-1)](q-1+b)} \cdot \rob_y^*\right\}.
\end{align*}

Similarly, we verify that
\begin{equation*}
\ratio(\pi',i) < \ratio(\pi^*, i), \quad \forall i \ \in \{y+1,\ldots, q-1\}.
\end{equation*}
Note that $\ep_1 +\ep_2 \leq \frac{\rob_y^*}{q-1+b}$. We have
\begin{equation*}
    \ratio(\pi',q) < \frac{(q-1)\cdot \rob_y^* + (q-1+b)\cdot (\ep_1+\ep_2)}{q} \leq \rob^*_y.
\end{equation*}
Since $(q-y-1)\ep_1 < 2(q-y-1)\ep_1 \leq (r-q)\ep_2$, we have
\begin{equation*}
\ratio(\pi',i) < \ratio(\pi^*, i), \quad \forall i \ \in \{q+1,\ldots, b\}.
\end{equation*}
Therefore,
\begin{equation*}
\max_{i\in\{ y+1,\ldots,b \}} \ratio(\pi',i) < \max_{i\in\{ y+1,\ldots,b \}} \ratio(\pi^*, i) = \rob_y^*,
\end{equation*}
which contradicts with \Cref{primal_rob}.

As a result, we conclude that $\pi_i^* \neq 0, \forall \ i \in \{y+1, \ldots, b\}$.

Now, let the dual variables be $\lam_{y+1},\ldots, \lam_b, \lam$. The dual problem can be formulated as following:

\begin{align*}
    \label{Dual}
    \tag{Dual Problem}
    \max_{\lam_{y+1}, \ldots, \lam_b, \lam} \quad & \lam   \\
    \text{s.t.} \quad \quad \quad 
    & \sum_{i=y+1}^{x-1} i \cdot \lam_i + (b+x-1)\cdot\sum_{i = x}^b \lam_i + \lam \leq 0, \quad \forall\ x  \in \{y+1 , \ldots, b\}, \\
    & \sum_{i=y+1}^{b} - i \cdot \lam_i \leq 1, \\
    & \lam_i \leq 0, \quad \forall i \in \{y+1,\dots, b\},\\[2pt]
    & \lam \text{ is free.}
\end{align*}

Let $\lam_{y+1}^*,\ldots, \lam_b^*, \lam^*$ denote the optimal solution to \ref{Dual}.

By the \textit{complementary slackness}, 
\begin{equation*}
    \pi_x^* \cdot \left[\sum_{i=y+1}^{x-1} i \cdot \lam_i^* + (b+x-1)\cdot\sum_{i = x}^b \lam_i^* + \lam^*\right] = 0, \quad \forall\ x  \in \{y+1 , \ldots, b\};
\end{equation*}
\begin{equation*}
    \rob_y^* \cdot \left( 1 + \sum_{i=y+1}^b i\cdot \lam_i^* \right) =0.
\end{equation*}

Since $\pi_i^* \neq 0, \forall \ i \in \{y+1, \ldots, b\}$, we have
\begin{equation*}
    \sum_{i=y+1}^{x-1} i \cdot \lam_i^* + (b+x-1)\cdot\sum_{i = x}^b \lam_i^* + \lam^* = 0, \quad \forall\ x  \in \{y+1 , \ldots, b\}.
\end{equation*}
Because $\rob_y^* \neq 0$, we have
$
    \sum_{i=y+1}^b -i\cdot \lam_i^* = 1.
$

Let $\Lambda_i \coloneqq  - i \cdot \lam_i$ for all $i \in \{y+1,\ldots, b\}$. Then, the dual problem can be transformed into:
\begin{align*}
    \max_{\Lambda_{y+1}, \ldots, \Lambda_b, \lam} \quad & \lam   \\
    \text{s.t.} \quad \quad \quad 
    & \sum_{i=y+1}^{x-1} \Lambda_i + \sum_{i = x}^b \frac{(b+x-1)}{i} \cdot \Lambda_i = \lam, \quad \forall x \ \in \{y+1 , \ldots, b\}, \\
    & \sum_{i=y+1}^{b} \Lambda_i = 1, \\
    & \Lambda_i \geq 0, \quad \forall i \in \{y+1,\dots, b\},\\[2pt]
    & \lam \text{ is free.}
\end{align*}

Let $\Lambda^*_{y+1},\ldots, \Lambda_b^*,\lam^*$ denote the optimization problem.

We claim that $\Lambda_{y+1}^* \neq 0$. Consider $x = y+1$ and $x = y+2$, the optimal solution should satisfy that
\[
\begin{cases}
    \frac{b+y}{y+1}\cdot \Lambda_{y+1}^* + \ \ \ \frac{b+y}{y+2}\cdot \Lambda_{y+2}^* + \cdots + \quad \frac{b+y}{b}\cdot \Lambda_b^*  = \lambda^* \\
    \ \ \ \ \ \ \ \ \  \Lambda_{y+1}^* + \frac{b+y+1}{y+2}\cdot \Lambda_{y+2}^* +  \cdots + \frac{b+y+1}{b}\cdot \Lambda_b^*  = \lambda^* \\
\end{cases}
\]
If $\Lambda_{y+1}^* = 0 $, since $$\frac{b+y}{y+2} <\frac{b+y+1}{y+2} , \ldots, \frac{b+y}{b} <\frac{b+y+1}{b},$$ then we have $$\Lambda_{y+2}^* = \cdots= \Lambda_b^* = 0,$$ which contradicts with $\sum_{i=y+1}^b \Lambda_i^* = 1$.
We then claim that $\Lambda_i^* \neq 0$, for all $ i \in \{y+2, \ldots, b-1\}$. Consider $x = i$ and $x = i+1$, the optimal solution should satisfy that
\[
\begin{cases}
    \Lambda_{y+1}^* + \cdots + \Lambda^*_{i-1}+ \frac{b+i-1}{i} \cdot \Lambda_i^*+\frac{b+i-1}{i+1}\cdot \Lambda_{i+1}^* + \cdots + \frac{b+i-1}{b}\cdot \Lambda_b^*  = \lambda^* \\
    \Lambda_{y+1}^* + \cdots + \Lambda^*_{i-1}+ \quad \quad \ \ \ \ \ \Lambda_i^*+ \ \ \
    \frac{b+i}{i+1}\cdot \Lambda_{i+1}^* +  \cdots +\ \ \ \ \frac{b+i}{b}\cdot \Lambda_b^*  = \lambda^* \\
\end{cases}
\]
Assume $\Lambda_i^* = 0$. Since $$\frac{b+i-1}{i+1} < \frac{b+i}{i+1}, \ldots, \frac{b+i-1}{b} < \frac{b+i}{b},$$ we have $$\Lambda_{i+1}^* = \Lambda_{i+2}^* = \cdots = \Lambda_b^* = 0.$$

Similarly, consider $x = i-1$ and $x= i$, we have 
\[
\begin{cases}
    \Lambda_{y+1}^* + \cdots + \Lambda^*_{i-2}+ \frac{b+i-2}{i-1} \cdot \Lambda_{i-1}^*+\frac{b+i-2}{i}\cdot \Lambda_{i}^* + \cdots + \frac{b+i-2}{b}\cdot \Lambda_b^*  = \lambda^* \\
    \Lambda_{y+1}^* + \cdots + \Lambda^*_{i-2}+ \quad \quad \ \ \, \ \ \Lambda_{i-1}^*+
    \frac{b+i-1}{i}\cdot \Lambda_{i}^* +  \cdots +\frac{b+i-1}{b}\cdot \Lambda_b^*  = \lambda^* \\
\end{cases}
\]
We further have $\Lambda_{i-1}^* = 0$.

Repeating this process, we obtain $\Lambda_{y+1}^* = 0$, which leads to a contradiction.

We finally claim that $\Lambda^*_b \neq 0$. Consider $x = b-1$ and $x = b$,
\[
\begin{cases}
    \Lambda_{y+1}^* + \cdots + \Lambda_{b-2}^*+\frac{2b-2}{b-1}\cdot \Lambda^*_{b-1}+ \frac{2b-2}{b}\cdot \Lambda_b^*  = \lambda^* \\
    \Lambda_{y+1}^* + \cdots +\Lambda_{b-2}^*+  \ \ \ \ \ \ \ \ \ \ \  \Lambda^*_{b-1}+ \frac{2b-1}{b}\cdot \Lambda_b^*  = \lambda^* \\
\end{cases}
\]
If $\Lambda^*_b = 0$, we can deduce that $\Lambda^*_{b-1} = 0$, which contradicts our previous results.

Therefore, we have
\begin{equation*}
    \Lambda_i^* \neq 0, \quad \forall \ i \in \{y+1,\ldots,b\},
\end{equation*}
which is equivalent to 
\begin{equation*}
    \lambda_i^* \neq 0, \quad \forall \ i \in \{y+1,\ldots,b\}.
\end{equation*}

Furthermore, by \textit{complementary slackness}, 
\begin{equation*}
    \lambda_i^*\cdot \left[\sum_{j=y+1}^{i} \pi_j \cdot (b + j - 1) +  \sum_{j=i+1}^{b} \pi_j  \cdot i - \rob_{y} \cdot i\right] =  0, \quad \forall \ i \in \{y+1,\ldots,b\}
\end{equation*}
It follows that
\begin{equation*}
    \sum_{j=y+1}^{i} \pi_j \cdot (b + j - 1) +  \sum_{j=i+1}^{b} \pi_j  \cdot i = \rob_{y} \cdot i, \quad \forall \ i \in \{y+1,\ldots,b\}
\end{equation*}

By \Cref{def_Equalizing_dis}, we conclude that $\eq[y+1, b]$ is the most robust $1$-consistent distribution.
\end{proof}

\subsection{Proof of~\Cref{thm:KR} \label{appendix:kr}}

\begin{proof}[Proof of~\Cref{thm:KR}]
    Consider $y < b$. Kumar's algorithm uses distribution in~\eqref{eq:KR-distribution}, where $m = \lceil b / \lam \rceil > b$. We denote this distribution as $\pi$, and consider another distribution $\pi'$ that transfer all probability masses beyond $b$ to $b$ (i.e. $\pi_i' = \pi_i, \forall i \in [b-1]$ and $\pi_b' = \sum_{i=b}^{\infty} \pi_i$). Recall that $\con_\y (\pi), \rob_\y (\pi), \con_\y (\pi'), \rob_\y (\pi')$ are the consistency and robustness of $\pi$ and $\pi'$ under prediction $\y$, respectively.

    Since $y < b$, by \Cref{ps_con_rob_rsr}, we have $\con_\y (\pi') = \con_\y (\pi)$. Observe that the worst-case attack against $\pi$ only occurs at $x > b$, by \Cref{lemma:restrict}, we have $\rob_y (\pi') < \rob_y (\pi)$. By \Cref{def_strong}, $\KR$ is not \strongly.
\end{proof}

%We divide the proof into two sections. In the Section~\ref{sec_1}, we establish several key lemmas aimed at demonstrating the Pareto optimality of \PRSR (see \ref{alg:rsr}) with respect to prediction-specific consistency and robustness. In the Section~\ref{sec_2}, we present the full proof of~\Cref{thm:rsr}.

\subsection{Proof of \Cref{thm:orsr}}
\label{app:orsr}

\begin{proof}[Proof of \Cref{thm:orsr}]
We start by finding a \weakly algorithm with robustness $\overline{\rob}$ for any given $\overline{\rob} \in \Lambda_{\rob}$.

%Note that given $\lam \in (1/b, 1)$, \KR's consistency and robustness are $\frac{\lambda}{1 - e^{-\lambda}}$ and $\frac{{1 + 1/b}}{1 - e^{-(\lambda - 1/b)}}$, respectively.
If $\overline{\rob} = e_b/(e_b-1)$, Karlin's algorithm \cite{Karlin1988} has exactly $\overline{\rob}$ robustness. Since it is the only $e_b/(e_b - 1)$-competitive algorithm, it is weakly-optimal.

As $b \to \infty$, for any $\overline{\rob} \in (e_b/(e_b-1), \infty)$, there exists $\lambda \in (1/b, 1)$, such that \KR's robustness is exactly $\overline{\rob}$. Furthermore, \KR is \weakly as $b \to \infty$. 

Therefore, we find a \weakly algorithm with robustness $\overline{\rob}$ for any $\overline{\rob} \in \Lambda_\rob$.

We now conclude with the strong optimality of Algorithm~\ref{alg:orsr}: by \Cref{lemma:restrict}, the bi-level optimization problems can be reduced to 
Problem~\ref{optimization_1} and Problem~\ref{optimization_2}. By immediate consequence of \Cref{prop:oba}, Algorithm~\ref{alg:orsr} is \strongly.
\end{proof}

\subsection{Proof of \Cref{thm:rsr} \label{app:proof_prsr}}

Before proving \Cref{thm:rsr}, we first prove the Pareto optimality of \PRSR (see Algorithm~\ref{alg:rsr})'s prediction-specific consistency and robustness. Our analysis is divided into two cases: $y < b$ and $y \geq b$.

For the case where $y < b$, we begin by considering the following optimization problem, referred to as~\ref{Prob A}, which plays a key role in the subsequent analysis.

\begin{align*}
    \label{Prob A}
    \tag{Problem A}
    \min_{\pi_1, \pi_2, \ldots, \pi_b} \quad & \frac{\sum_{i=1}^{y}\pi_i \cdot (b+i-1) + \sum_{i=y+1}^{b}\pi_i \cdot y}{y}\\
    \text{s.t.} \quad \quad 
    & \frac{\sum_{i=1}^{x} \pi_i \cdot (b + i - 1) + \sum_{i=x+1}^{b} \pi_i  \cdot x}{x} \leq \rob_y, \quad \forall \ x \in \{1, 2, \ldots, b\}, \\
    & \sum_{i=1}^{b} \pi_i = 1, \\
    & \pi_i \geq 0, \quad \forall i \in \{1, 2, \ldots, b\}.
\end{align*}
Let $\pi^A = (\pi_1^A, \pi_2^A,\ldots,\pi_b^A)$ denote the optimal solution to \ref{Prob A}.

\begin{lemma}
\label{lemma:pi_i_neq_0}
Assume that $\rob_y \in [\rob_\xi, \rob_\nu]$. For \ref{Prob A}, $\pi_i^A \neq 0,$ for all $ i \in \{y+1, y+2, \ldots,b\}$. 
\end{lemma}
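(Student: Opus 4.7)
The plan is to prove the lemma by contradiction: assume some index $q \in \{y+1, \ldots, b\}$ satisfies $\pi_q^A = 0$, and construct a feasible distribution $\pi'$ with strictly smaller objective than $\pi^A$, contradicting optimality.

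First, I would dispose of the boundary case $\rob_y = \rob_\nu$. In this case, the equalizing distribution $\eq[y+1,b]$ is feasible for \ref{Prob A} with objective value $1$, so by \Cref{thm:eq_robust} it is optimal, and it places positive mass on every index in $\{y+1,\ldots,b\}$. Therefore I may restrict attention to $\rob_y < \rob_\nu$.

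Under this strict inequality, I would first argue that $\pi^A$ must place positive mass on some index $p \in \{1,\ldots,y\}$. Indeed, any distribution supported solely on $\{y+1,\ldots,b\}$ is $1$-consistent, and by \Cref{thm:eq_robust} has robustness at least $\rob_\nu > \rob_y$, contradicting feasibility. Fix such a $p$ with $\pi_p^A > 0$.

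The core step is a two-parameter perturbation. Start by transferring $\delta$ units of mass from index $p$ to index $q$. The change in the objective is $\delta\bigl(1 - (b+p-1)/y\bigr) < 0$ since $b + p - 1 > y$, so the objective strictly decreases. For the robustness constraints I would analyze three ranges of $x$: (i) for $x < p$ there is no change; (ii) for $p \leq x < q$ the numerator of $\ratio(\pi,x)$ strictly decreases, so the constraint only loosens; (iii) for $q \leq x \leq b$ the numerator increases by $\delta(q-p)$, which could violate a tight constraint. If every constraint at $x \in \{q,\ldots,b\}$ is slack at $\pi^A$, then taking $\delta$ sufficiently small preserves feasibility and completes the contradiction.

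The delicate case is when some constraint at $x \geq q$ is active at $\pi^A$. To handle it, I would introduce a compensating shift of $\delta' = \delta(q-p)/(r-s)$ units from some $r \in \{q+1,\ldots,b\}$ with $\pi_r^A > 0$ to an index $s \in \{y+1,\ldots,b\}$ with $s < r$. Choosing $r,s$ both in $\{y+1,\ldots,b\}$ keeps the objective contribution of the compensating shift equal to zero, and the calibration of $\delta'$ is designed to exactly cancel the numerator increase caused by the first shift at the binding constraint. The main obstacle, which I expect to take the most care, is verifying that such a valid compensating pair $(r,s)$ always exists. I would handle this by a downward argument on $q$: the boundary case $q = b$ (where no $r > q$ is available) is treated separately by showing that if $\pi_b^A = 0$ and the constraint at $x = b$ is tight, then $\rob_y$ would have to equal $\rob_\nu$, violating our standing assumption; for $q < b$, an analogous perturbation argument—mirroring the case-by-case structure used in the proof of \Cref{thm:eq_robust}—produces the required $r > q$ with $\pi_r^A > 0$.

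Combining the original and compensating shifts yields a feasible $\pi'$ with strictly smaller objective than $\pi^A$, contradicting the optimality of $\pi^A$ and completing the proof.
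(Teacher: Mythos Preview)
Your approach is essentially the same as the paper's: both argue by contradiction via a mass-shifting perturbation that strictly decreases the objective while preserving feasibility. The paper organizes this as a direct case split (whether mass on $[y]$ is present, and whether mass beyond the empty index is present) with explicit choices of the shift amounts, whereas you organize it as a downward induction on the empty index $q$; both organizations are valid.

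There is, however, a flaw in your base-case reasoning for $q = b$. You claim that if $\pi_b^A = 0$ and the constraint at $x = b$ is tight, then $\rob_y$ must equal $\rob_\nu$; this implication is false. The correct (and simpler) argument is that when $\pi_b^A = 0$ one has $b\,\ratio(\pi^A,b) = (b-1)\,\ratio(\pi^A,b-1)$, so $\ratio(\pi^A,b) = \tfrac{b-1}{b}\ratio(\pi^A,b-1) < \rob_y$, and hence the constraint at $b$ is automatically strictly slack---allowing the primary shift alone to work. A second, smaller omission: in the inductive step your calibration $\delta' = \delta(q-p)/(r-s)$ only cancels the numerator increase at $x \geq r$, not at $x \in \{q,\ldots,r-1\}$, where both shifts deposit mass and tighten the constraint. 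You need to note that $\pi_q^A = 0$ implies $\ratio(\pi^A,q) \le \tfrac{(q-1)\rob_y + 1}{q} < \rob_y$, so these constraints are strictly slack and remain satisfied for small $\delta,\delta'$. With these two fixes the argument goes through.
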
 

\begin{proof}
For any $\rob_y \in [\rob_\xi, \rob_\nu]$, $\eq[1, b]$ is always a feasible solution to \ref{Prob A}. This guarantees the existence of a optimal solution $\pi^A$.

We prove by contradiction. Suppose there exists $r \in \{y+1, y+2, \ldots, b\}$ such that $\pi_r^A = 0$. We analyze the following cases. 

\vspace{5pt}

\noindent\textbf{Case I:} \colorbox{gray!20}{$\sum_{i=1}^{y} \pi_i^A = 0$}.
In this case, $\pi_1^A =\cdots=\pi_y^A = \pi_r^A = 0$. When $\rob_y = \rob_\nu$, $\eq[y+1, r]$ is the only solution to \ref{Prob A} if we requires $\pi_1^A =\cdots=\pi_y^A = 0$. If we further require $\pi_r^A = 0$ for some $r \in \{y+1, y+2, \ldots, b\}$, there is no feasible solution to \ref{Prob A} for any $\rob_y \in [\rob_\xi ,\rob_\nu]$.

\vspace{5pt}

\textbf{Case II(a)}: \colorbox{gray!20}{$\sum_{i=1}^{y} \pi_i^A \neq 0$ and $\sum_{i=r+1}^{b} \pi_i^A \neq 0$}.
Let $p$ be the largest index in $[y]$ that has non-zero probability mass, i.e. $p \coloneqq \max\{i \in [y] \mid \pi_i^A \neq 0\}$. Let $q$ be the smallest index in $[b]\setminus[r]$ that has non-zero probability mass, i.e. $$q \coloneqq \min\left\{i \in [b]\setminus[r] \mid \pi_i^A \neq 0\right\}.$$ Since $\sum_{i=1}^{y} \pi_i^A \neq 0$ and $\sum_{i=r+1}^{b} \pi_i^A \neq 0$, such $p$ and $q$ are guaranteed to exist. Based on this, we can always construct another solution
\begin{equation*}
    \pi' = \left(\pi_1^A, \pi_2^A, \ldots, \pi_p^A-\eps_1, \ldots, \pi_{r-1}, \eps_1 + \eps_2, \ldots, \pi_q^A - \eps_2, \ldots, \pi_b^A \right),
\end{equation*}
where $\eps_2 = \min\left\{  \frac{(\sum_{i=1}^r \pi_i^A) b}{(q-r+b-1)(r-1)}, \pi_q^A  \right\} > 0$, $\eps_1 = \min\left\{ (\frac{q-r}{r-p})\eps_2, \pi_p^A \right\} > 0$. 

We first prove that $\pi'$ is still a feasible solution to \ref{Prob A}.

Recall that when $x \leq b$ (see the definition of $\ratio (\pi, x)$ in~\Cref{sec:rsr})
\begin{equation*}
    \ratio (\pi, x) = \frac{\sum_{i=1}^x \pi_i (i-1+b) + \left(\sum_{i=x+1}^b \pi_i\right) x}{x}.
\end{equation*}

\begin{enumerate}[leftmargin=8mm,
 labelsep=3pt, itemindent=5pt,
                  labelwidth= 3em]

\item[\textbf{(1)}]

 Consider $x < p$. $\ratio (\pi', x) = \ratio(\pi^A, x) \leq \rob_y$.

 \item[\textbf{(2)}] 

 Consider $p \leq x < r$. Since $\pi_i'= \pi_i^A$ for all $i \in [x]\setminus\{p\}$, $\pi_p' = \pi_p^A -\eps_1$, and $\left(\sum_{i=x+1}^b \pi_i'\right) = \left(\sum_{i=x+1}^b \pi_i^A\right) + \eps_1$, we have \begin{equation*}
    x\ratio(\pi', x) = x\ratio(\pi^A, x) - \eps_1(p-1+b) + \eps_1 x < x\ratio(\pi^A, x).
\end{equation*}
Therefore, we have 
\begin{equation*}
    \ratio(\pi', x) < \ratio(\pi^A, x) \leq \rob_y.
\end{equation*}

\item[\textbf{(3)}] Consider $x = r$. Note that 
\begin{equation}
\label{2a_1}
    r\ratio(\pi', r) = \sum_{i=1}^r \pi_i' (b+i-1) + \left(1 - \sum_{i=1}^r \pi_i'\right) r,
\end{equation}
\begin{equation}
\label{2a_2}
    r\ratio(\pi^A, r) = \sum_{i=1}^r \pi_i^A (b+i-1) + \left(1 - \sum_{i=1}^r \pi_i^A\right) r.
\end{equation}
Since $\pi_i'= \pi_i^A$ for all $i \in [r]\setminus\{p, r\}$, $\pi_p' = \pi_p^A - \eps_1$, $\pi_r' = \pi_r^A + (\eps_1 + \eps_2)$, and $(\sum_{i=r+1}^{b} \pi_i') = (\sum_{i=r+1}^b \pi_i^A) - \eps_2$, we have 
\begin{equation*}
    r\ratio(\pi', r) = r\ratio(\pi^A, r) - \eps_1(b+p-1)+(\eps_1 + \eps_2) (b+r-1) - \eps_2 r.
\end{equation*}
By arranging terms, we have
\begin{equation}
    \label{2a_3}
    r\ratio(\pi', r) = r\ratio(\pi^A, r) + \eps_1(r - p) + \eps_2 (b-1).
\end{equation}
Note that $r-p > 0$, $b-1 > 0$, and $\eps_1 \leq (\frac{q-r}{r-p})\eps_2$, $\eps_2 \leq \frac{(\sum_{i=1}^r \pi_i^A)b}{(q-r+b-1)(r-1)}$. We have
\begin{equation}
    \label{2a_4}
    \eps_1(r - p) + \eps_2 (b-1) \leq \frac{(\sum_{i=1}^r \pi_i^A)b}{r-1} \leq \frac{\sum_{i=1}^r \pi_i^A(b+i-1)}{r-1}.
\end{equation}
By \eqref{2a_2}, \eqref{2a_3} and \eqref{2a_4}, 
\begin{equation}
    \label{2a_5}
    r\ratio(\pi', r) \leq \frac{r\sum_{i=1}^r \pi_i^A (b+i-1) + r (r-1)(1 - \sum_{i=1}^r \pi_i^A)}{r-1}.
\end{equation}
Since $\pi_r^A = 0$, \eqref{2a_5} implies 
\begin{equation*}
    r\ratio(\pi', r) \leq \frac{r\sum_{i=1}^{r-1} \pi_i^A (b+i-1) + r (r-1)(1 - \sum_{i=1}^{r-1} \pi_i^A)}{r-1}.
\end{equation*}
Therefore,
\begin{equation}
    \label{2a_6}
    \ratio(\pi', r) \leq \frac{\sum_{i=1}^{r-1} \pi_i^A (b+i-1) +  (1 - \sum_{i=1}^{r-1} \pi_i^A)(r-1)}{r-1} = \ratio(\pi^A, r-1) \leq \rob_y.
\end{equation}

\item[\textbf{(4)}] Consider $r < x < q$. Note that $q = \min\{i \in [b]\setminus[r] \mid \pi_i^A \neq 0\}$. Thus $\pi_{r+1}^A = \cdots = \pi_{q-1}^A = 0$. Hence, $\pi_{r+1}' = \cdots = \pi_{q-1}' = 0$. It follows that 
\begin{equation}
    \label{2a_7}
    x \ratio(\pi', x) = r \ratio (\pi', r).
\end{equation}

By \eqref{2a_6} and \eqref{2a_7}, we have
\begin{equation*}
    \ratio(\pi', x) = \frac{r}{x}\cdot\ratio(\pi', r) < \ratio(\pi', r) \leq \rob_y.
\end{equation*}

\item[\textbf{(5)}] Consider $q \leq x \leq b$. Since $\pi_i'= \pi_i^A$ for all $i \in [b]\setminus\{p, r, q\}$, $\pi_p' = \pi_p^A - \eps_1$, $\pi_r' = \pi_r^A + (\eps_1 + \eps_2)$, and $\pi_q' = \pi_q^A - \eps_2$, we have 
\begin{equation*}
    x \ratio(\pi', x) = x\ratio(\pi^A, x) - \eps_1 (b+p-1) + (\eps_1 + \eps_2) (b+r-1) - \eps_2 (b+q-1).
\end{equation*}
By arrange terms, we have
\begin{equation}
    \label{2a_8}
    x \ratio(\pi', x) = x\ratio(\pi^A, x) +\eps_1 (r-p) - \eps_2 (q-r). 
\end{equation}
Note that $\eps_1 \leq (\frac{q-r}{r-p})\eps_2$, and $r-p > 0$, $q-r > 0$. \eqref{2a_8} further indicates that
\begin{equation*}
    x \ratio(\pi', x) \leq x\ratio(\pi^A, x).
\end{equation*}
Therefore, 
\begin{equation*}
    \ratio(\pi', x) \leq \ratio(\pi^A, x) \leq \rob_y.
\end{equation*}
\end{enumerate}

Consequently, $\ratio(\pi',x) \leq \rob_y$ for all $x \in [b]$. Note that $\sum_{i=1}^b \pi_i' = 1$ and $\pi_i' \geq 0 $ for all $i \in [b]$. Therefore, $\pi'$ is a feasible solution to \ref{Prob A}.

We then compare the objective value of \ref{Prob A} at $\pi'$ and $\pi^A$.

Note that $p \leq y < r$. Consequently, as established in \textit{Case II(a)(2)},
\begin{equation*}
    y\ratio(\pi', y) = y \ratio(\pi^A, y) - \eps_1 (b+p-1) + \eps_1 y < y \ratio(\pi^A, y).
\end{equation*}
This leads to
\begin{equation*}
    \ratio(\pi', y) <  \ratio(\pi^A, y).
\end{equation*}

This makes $\pi^A$ impossible to be the optimal solution to \ref{Prob A} under the assumptions of \textit{Case II (a)}, since there is always another feasible solution $\pi'$ that achieves a smaller objective value.

\vspace{5pt}

\textbf{Case II(b)}: \colorbox{gray!20}{$\sum_{i=1}^y \pi_i^A \neq 0 $ and $\sum_{i=r+1}^b \pi_i^A = 0$}.

Let $p$ be the largest index in $[y]$ that has non-zero probability mass, i.e. $p \coloneqq \max \{i \in [y]\mid \pi_i^A \neq 0\}$. We can always construct another solution 
\begin{equation*}
    \pi' = (\pi_1^A, \ldots,\pi_{p-1}^A, \pi_p^A - \epsilon, \ldots,\pi_{r-1}^A, \epsilon,\ldots,\pi_b^A),
\end{equation*}
where $\epsilon = \min\{\pi_p^A , \frac{b}{(r-1)\cdot(r-p)}\} > 0$.

Similarly we first investigate the feasibility of $\pi'$.

%\begin{enumerate}[leftmargin=5mm,
%itemindent=5pt,
%                  labelwidth= 3em]

\begin{enumerate}[leftmargin=8mm,
 labelsep=3pt, itemindent=5pt,
                  labelwidth= 3em]

\item[\textbf{(1)}] Consider $x < p$. It is clear that
\begin{equation*}
    \ratio(\pi', x) = \ratio(\pi^A, x) \leq \rob_y.
\end{equation*}

\item[\textbf{(2)}] Consider $p \leq x < r$. Note that 
\begin{equation*}
    x \ratio(\pi', x) = x \ratio (\pi^A, x) - \eps (b+p-1) + \eps x <  x \ratio (\pi^A, x).
\end{equation*}
Therefore, 
\begin{equation*}
    \ratio(\pi', x) < \ratio(\pi^A, x) \leq \rob_y.
\end{equation*}

\item[\textbf{(3)}] Consider $x = r$. Since $\sum_{i=r+1}^b \pi_i^A = 0$, we have $\pi_i^A = 0, \forall \ i \in \{r+1, \ldots,b\}$. Note that 
\begin{equation*}
    r\ratio(\pi', r) = r \ratio(\pi^A, r) - \eps (b+p-1) + \eps(b+r-1).
\end{equation*}
By arranging terms, 
\begin{equation}
    \label{2b_1}
    r\ratio(\pi', r) = r \ratio(\pi^A, r) + \eps(r-p).
\end{equation}
Given that $\eps \leq \frac{b}{(r-1)(r-p)}$, 
\begin{equation}
    \label{2b_2}
    r \ratio(\pi^A, r) + \eps(r-p) \leq r \ratio(\pi^A, r) + \frac{b}{r-1}.
\end{equation}
Since $\sum_{i=r+1}^b \pi_i^A = 0$, we have  $\sum_{i=1}^r \pi_i^A = 1$. Therefore, 
\begin{equation}
    \label{2b_3}
     r \ratio(\pi^A, r) + \frac{b}{r-1} \leq r \ratio(\pi^A, r) + \frac{\sum_{i=1}^r 
     \pi^A_i (b+i-1)}{r-1}.
\end{equation}
By definition, 
\begin{equation*}
    \ratio(\pi^A, r) = \frac{\sum_{i=1}^r \pi_i^A (b+i-1) + \sum_{i=r+1}^b \pi_i^A r}{r}.
\end{equation*}
Note that $\sum_{i=r+1}^b \pi_i^A  = 0$. We further conclude
\begin{equation}
    \label{2b_4}
     r \ratio(\pi^A, r) + \frac{\sum_{i=1}^r  \pi^A_i (b+i-1)}{r-1} = \frac{r \sum_{i=1}^r \pi_i^A (b+i-1) }{r-1}.
\end{equation}
By \eqref{2b_1}, \eqref{2b_2}, \eqref{2b_3} and \eqref{2b_4},
\begin{equation*}
    r\ratio(\pi', r) \leq  \frac{r \sum_{i=1}^r \pi_i^A (b+i-1) }{r-1}.
\end{equation*}
Note that $\pi_r^A = 0$. We have 
\begin{equation*}
    \ratio(\pi', r) \leq  \frac{ \sum_{i=1}^{r-1} \pi_i^A (b+i-1) }{r-1} = \ratio(\pi^A, r-1) \leq \rob_y.
\end{equation*}

\item[\textbf{(4)}] Consider $r < x \leq b$. Consider $r < x \leq b$. Since $\pi_i' = \pi_i^A = 0$ for all $i \geq r+1$, it is clear that 
\begin{equation*}
    \ratio(\pi', x) < \ratio(\pi', r) \leq \rob_y.
\end{equation*}
\end{enumerate}

Consequently, $\ratio(\pi', x) \leq \rob_y$ for all $x \in [b]$. This implies that $\pi'$ is also a feasible solution to \ref{Prob A}.

However, 
\begin{equation*}
    y \ratio(\pi', y) = y \ratio(\pi^A, y) - \eps (b+p-1) + \eps y < y \ratio(\pi^A, y).
\end{equation*}
This implies
$
    \ratio(\pi', y) < \ratio(\pi^A, y),
$
which contradicts with the optimality of $\pi^A$.

In conclusion, $\pi_i^A \neq 0, \forall \ i \in \{y+1, y+2, \ldots,b\}$.
\end{proof}

\begin{lemma}
\label{lemma:_binding}
    For any $\rob_y \in [\rob_\xi, \rob_\nu]$ in \ref{Prob A}, at the optimal solution $\pi^A$, the $(y+1)$-th through $b$-th constraints are \textbf{binding}, \textit{i.e.}, they are satisfied as equalities.
\end{lemma}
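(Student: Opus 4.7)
The plan is to argue by contradiction: assume the $k$-th robustness constraint is strictly slack at $\pi^A$ for some $k \in \{y+1, \ldots, b\}$, and construct an explicit local perturbation $\pi'$ that remains feasible for \ref{Prob A} but strictly decreases the objective. Two preliminary reductions simplify the analysis. First, when $\rob_y = \rob_\nu$, \Cref{thm:eq_robust} identifies the unique optimum as $\pi^A = \eq[y+1,b]$, whose equalizing property directly makes every constraint at $x \in \{y+1,\ldots,b\}$ bind, so I restrict to $\rob_y < \rob_\nu$. In this regime $\sum_{i=1}^y \pi_i^A > 0$ (otherwise $\pi^A$ would be $1$-consistent and hence, by \Cref{thm:eq_robust}, have robustness at least $\rob_\nu$), so I can choose $p \in [y]$ with $\pi_p^A > 0$; moreover \Cref{lemma:pi_i_neq_0} gives $\pi_i^A > 0$ for every $i \in \{y+1,\ldots,b\}$, providing room to both add and subtract mass at those coordinates.

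The key observation is that the objective rewrites as
\[
\frac{1}{y}\sum_{i=1}^y \pi_i (b+i-1) + \sum_{i=y+1}^b \pi_i,
\]
so any shift of mass $\eps > 0$ from $\pi_p$ with $p \leq y$ into $\{y+1,\ldots,b\}$ strictly decreases the objective by $\eps(b+p-1-y)/y > 0$, while redistributions entirely within $\{y+1,\ldots,b\}$ leave the objective unchanged. The remaining task is to route this shift so that no robustness constraint is violated.

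I split the contradiction step into two cases. In Case A ($k=b$), use the one-step perturbation $\pi_p \mapsto \pi_p - \eps$ and $\pi_b \mapsto \pi_b + \eps$: writing $a_{x,i}$ for the coefficient of $\pi_i$ in $\ratio(\pi,x)$, a direct calculation shows the change at the $x$-th constraint is $0$ for $x<p$, equals $\eps(x-b-p+1)/x < 0$ for $p \leq x \leq b-1$, and equals $\eps(b-p)/b > 0$ at $x = b$; the last is absorbed by the assumed slack once $\eps$ is small. In Case B ($y+1 \leq k \leq b-1$), Case A already forces the $x=b$ constraint to be binding, so any single-coordinate shift into $\{y+1,\ldots,b\}$ would violate it. I therefore use the two-step perturbation
\[
\pi_p \mapsto \pi_p - \eps, \qquad \pi_{k+1} \mapsto \pi_{k+1} + \eps - \delta, \qquad \pi_k \mapsto \pi_k + \delta,
\]
with the coupling $\delta \coloneqq \eps(k-p+1)$. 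A telescoping computation gives $\Delta\mathrm{LHS}_x = [\eps(k-p+1) - \delta]/x = 0$ for every $x \in \{k+1,\ldots,b\}$, so \emph{every} potentially binding constraint above $k$ is preserved exactly. Constraints at $x \in \{y+1,\ldots,k-1\}$ and at $x \leq y$ give $\Delta\mathrm{LHS}_x \leq 0$ by straightforward sign checks, the single tightening at $x=k$ is absorbed by the assumed slack for small $\eps$, and non-negativity $\pi_p^A - \eps \geq 0$ together with $\pi_{k+1}^A - (\delta - \eps) = \pi_{k+1}^A - \eps(k-p) \geq 0$ holds for sufficiently small $\eps$ since $\pi_p^A, \pi_{k+1}^A > 0$. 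The objective strictly decreases by $\eps(b+p-1-y)/y > 0$, contradicting optimality of $\pi^A$.

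The main obstacle I expect is constructing the two-step perturbation in Case B: a naive one-coordinate shift either destroys the binding at $x=b$ or simultaneously tightens several of the binding constraints at $x \in \{k+1,\ldots,b\}$. Routing mass first to $\pi_{k+1}$ and then partially back to $\pi_k$ with the exact coupling $\delta = \eps(k-p+1)$ is what calibrates the changes at every $x \geq k+1$ to collapse telescopically to zero, while the constraint at $x=k$ is the only one that tightens and can be absorbed by its assumed slack. Once this perturbation is identified, the remaining sign checks are routine.
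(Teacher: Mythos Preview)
Your proof is correct and takes a genuinely different route from the paper. The paper proves the lemma by LP duality: it writes the dual of \ref{Prob A}, uses \Cref{lemma:pi_i_neq_0} together with complementary slackness to force the dual constraints indexed by $x\in\{y+1,\dots,b\}$ to hold with equality, then argues (by a cascade of linear relations among the dual equations) that the dual variables $\lambda_{y+1}^B,\dots,\lambda_b^B$ are all nonzero, and finally applies complementary slackness again in the reverse direction to conclude that the corresponding primal constraints bind. Your argument instead stays entirely on the primal side: you exhibit an explicit feasible direction that strictly improves the objective whenever one of those constraints is slack. The crux is your two-step perturbation in Case~B, where the coupling $\delta=\eps(k-p+1)$ makes the change in $\ratio(\pi,x)$ telescope to zero for every $x\in\{k+1,\dots,b\}$, so potentially binding constraints above $k$ are preserved exactly while the single tightening at $x=k$ is absorbed by its slack. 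This is more elementary than the duality route (no Slater condition, no dual feasibility analysis) and arguably more transparent about \emph{why} the constraints must bind; the paper's approach, on the other hand, additionally yields structural information about the dual variables that you do not obtain.
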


\begin{proof}
    The dual problem to \ref{Prob A} (referred to as \ref{Prob B}) can be formulated as 
\begin{align*}
    \tag{Problem B}
    \label{Prob B}
    \max_{\lam_1, \lam_2, \ldots, \lam_{b+1}} \quad & 
    \rob_y \cdot \left(\sum_{i=1}^{b} \lam_i\right) +  \lam_{b+1}\\
    \text{s.t.} \quad \quad \quad 
    & \sum_{i=1}^{x-1} \lambda_i + \sum_{i=x}^{b}\frac{b+x-1}{i} \cdot \lambda_i + \lambda_{b+1}\leq \frac{b+x-1}{y}  , \quad \forall \ x \in \{1, 2, \ldots, y\}, \\
    & \sum_{i=1}^{x-1} \lambda_i + \sum_{i=x}^{b}\frac{b+x-1}{i} \cdot \lambda_i + \lambda_{b+1} \leq 1 , \quad \forall \ x \in \{y+1, \ldots, b\}, \\
    &\lambda_i \leq 0 , \quad \forall \ i \in \{1, 2, \ldots, b\}, \\
    & \lambda_{b+1} \text{ is free}.
\end{align*}
Let  $\lam^B = (\lam_1^B, \lam_2^B, \ldots, \lam_{b+1}^B)$ be the optimal solution to \ref{Prob B}.

By the \textit{complementary slackness}, 
\begin{equation*}
    \left[\sum_{i=1}^{x-1} \lambda_i^B + \sum_{i=x}^{b}\frac{b+x-1}{i} \cdot \lambda_i^B + \lambda_{b+1}^B - 1\right] \cdot \pi_x^A = 0,\quad  \forall\ x \in \{y+1, \ldots, b\}.
\end{equation*}

By \Cref{lemma:pi_i_neq_0}, $\pi_x^A \neq 0$ for all $x \in \{y+1, \ldots, b\}$, we have
\begin{equation}
    \sum_{i=1}^{x-1} \lambda_i^B + \sum_{i=x}^{b}\frac{b+x-1}{i} \cdot \lambda_i^B + \lambda_{b+1}^B = 1, \quad \forall \ x \in \{y+1, .\ldots, b\},
\end{equation}
where we later refer to them as $\mathrm{Eq.}(y+1)$ to $\mathrm{Eq.}(b)$.

We consider the following three cases.

\vspace{5pt}

\noindent\textbf{Case I:} \colorbox{gray!20}{$\rob_y = \rob_\nu$}.
Note that $\eq [y+1, b]$ is a feasible solution to \ref{Prob A} when $\rob_y = \rob_\nu$. Moreover, by \Cref{thm:eq_robust}, it's the only distribution that achieves $\ratio(\pi, y) = 1$. Therefore, $\eq [y+1, b]$ is the optimal solution to \ref{Prob A} when $\rob_y = \rob_\nu$. By~\Cref{def_Equalizing_dis}, $(y+1)$-th to $b$-th constraints in \ref{Prob A} are binding at $\pi^A=\eq [y+1, b]$.

\vspace{5pt}

\noindent\textbf{Case II:} \colorbox{gray!20}{$\rob_y \in (\rob_\xi, \rob_\nu)$}.
Let $\pri$ denote the optimal objective value in \ref{Prob A} (the \textit{primal problem}) and let $\du$ denote the optimal value in \ref{Prob B} (the \textit{dual problem}). It is clear that $\pri > 1$ when $\rob_y \in (\rob_\xi, \rob_\nu)$.

Since \ref{Prob A} is a convex optimization problem, and $\eq [1, b]$ (i.e. Karlin's distribution \cite{Karlin1988}) is always an interior point, by the \textit{Slater's condition}, $\du = \pri >1$.

We then show that in \ref{Prob B}, $\lam^B_i \neq 0$ for all $i \in \{y+1, \ldots,b\}$. We prove this by contradiction, assuming that there exists $ r \in \{y+1, \ldots, b\}$ such that $\lam^B_r = 0$.

\SetLabelAlign{newline}{%
  \begin{minipage}[t]{\dimexpr\linewidth-\labelsep\relax}%
    #1\vspace*{6pt}% The actual label text, followed by vertical space
  \end{minipage}%
}

\vspace{5pt}

\textbf{Case II(a):} \colorbox{gray!20}{$r = y+1$}.
Consider $\mathrm{Eq.}(y+1)$ and $\mathrm{Eq.}(y+2)$:
\[
\begin{cases}
    \lambda_1 + \cdots + \lambda_y + \frac{b+y}{y+1}\cdot \lambda_{y+1} + \ \ \ \ \frac{b+y}{y+2} \cdot \lambda_{y+2} + \cdots + \quad \frac{b+y}{b}\cdot \lambda_b + \lambda_{b+1} = 1 & \mathrm{Eq.}(y+1)\\
    \lambda_1 +  \cdots + \lambda_y + \quad \quad \ \ \lambda_{y+1} + \frac{b+y+1}{y+2} \cdot \lambda_{y+2} + \cdots + \frac{b+y+1}{b}\cdot \lambda_b + \lambda_{b+1} = 1 & \mathrm{Eq.}(y+2)\\
\end{cases}
\]
% {%
% \small % Start a group with \small font size
% \begin{empheq}[left=\empheqlbrace]{align}
%     \lambda_1 + \cdots + \lambda_y + \frac{b+y}{y+1}\cdot \lambda_{y+1} + \frac{b+y}{y+2} \cdot \lambda_{y+2} + \cdots + \frac{b+y}{b}\cdot \lambda_b + \lambda_{b+1} &= 1, \tag{$\mathrm{Eq.}(y+1)$}\\
%     \lambda_1 +  \cdots + \lambda_y + \lambda_{y+1} + \frac{b+y+1}{y+2} \cdot \lambda_{y+2} + \cdots + \frac{b+y+1}{b}\cdot \lambda_b + \lambda_{b+1} & = 1.\tag{$\mathrm{Eq.}(y+2)$}  
% \end{empheq}%
% }
Note that $\frac{b+y}{y+2} < \frac{b+y+1}{y+2}, \ldots,\frac{b+y}{b}< \frac{b+y+1}{b}$. If $\lam^B_{y+1} = 0$, then $(\sum_{i=1}^y \lam^B_i) + \lam^B_{b+1} = 1$.

\vspace{5pt}

\textbf{Case II(b):} \colorbox{gray!20}{$y+1 < r < b$}.
Consider $\mathrm{Eq.}(r)$ and $\mathrm{Eq.}(r+1)$:
% {%
% \scriptsize % Start a group with \small font size
% \begin{empheq}[left=\empheqlbrace]{align}
%      \lambda_1 + \cdots + \lambda_{r-1} + \frac{b+r-1}{r}\cdot \lambda_{r} + \frac{b+r-1}{r+1} \cdot \lambda_{r+1} + \cdots +  \frac{b+r-1}{b}\cdot \lambda_b + \lambda_{b+1} &= 1, \tag{$\mathrm{Eq.}(r)$}\\
%      \lambda_1 + \cdots + \lambda_{r-1} + \quad \quad \quad \ \lambda_{r} + \ \ \ \ \frac{b+r}{r+1} \cdot \lambda_{r+1} + \cdots + \quad  \frac{b+r}{b}\cdot \lambda_b + \lambda_{b+1} = 1.\tag{$\mathrm{Eq.}(r+1)$}  
% \end{empheq}%
% }
\[
\begin{cases}
    \lambda_1 + \cdots + \lambda_{r-1} + \frac{b+r-1}{r}\cdot \lambda_{r} + \frac{b+r-1}{r+1} \cdot \lambda_{r+1} + \cdots +  \frac{b+r-1}{b}\cdot \lambda_b + \lambda_{b+1} = 1 & \mathrm{Eq.}(r)\\
    \lambda_1 + \cdots + \lambda_{r-1} + \quad \quad \quad \ \lambda_{r} + \ \ \ \ \frac{b+r}{r+1} \cdot \lambda_{r+1} + \cdots + \quad  \frac{b+r}{b}\cdot \lambda_b + \lambda_{b+1} = 1 & \mathrm{Eq.}(r+1)\\
\end{cases}
\]

Note that $\frac{b+r-1}{r+1} < \frac{b+r}{r+1}, \ldots, \frac{b+r-1}{b} < \frac{b+r}{b}$. If $\lam^B_r = 0$, then $\left(\sum_{i=1}^{r-1} \lam^B_i\right) + \lam^B_{b+1} = 1$. 

Based on this, consider $\mathrm{Eq.}(r-1)$ and $\mathrm{Eq.}(r)$:
\[
\begin{cases}
    \lambda_1 + \cdots + \lambda_{r-2} + \frac{b+r-2}{r-1}\cdot \lambda_{r-1} + \frac{b+r-2}{r} \cdot \lambda_{r} + \cdots +  \frac{b+r-2}{b}\cdot \lambda_b + \lambda_{b+1} = 1 & \mathrm{Eq.}(r-1)\\
    \lambda_1 + \cdots + \lambda_{r-2} + \quad \quad \quad  \lambda_{r-1} + \frac{b+r-1}{r} \cdot \lambda_{r} + \cdots +  \frac{b+r-1}{b}\cdot \lambda_b + \lambda_{b+1} = 1 & \mathrm{Eq.}(r)\\
\end{cases}
\]
We can deduce $\lam^B_{r-1} = 0$. Thus, $\left(\sum_{i=1}^{r-2} \lam^B_i\right) + \lam^B_{b+1} = 1$.

Repeating this process, we obtain $\left(\sum_{i=1}^y \lam^B_i\right) + \lam^B_{b+1} = 1$.

\textbf{Case II(c):} \colorbox{gray!20}{$r=b$}.
Consider $\mathrm{Eq.}(b-1)$ and $\mathrm{Eq.}(b)$:
\[
\begin{cases}
    \lambda_1 + \cdots + \lambda_{b-2} + \frac{2b-2}{b-1}\cdot \lambda_{b-1} + \frac{2b-2}{b} \cdot \lambda_{b} = 1 & \mathrm{Eq.}(b-1)\\
    \lambda_1 + \cdots + \lambda_{b-2} + \quad \quad \ \ \ \lambda_{b-1} + \frac{2b-1}{b} \cdot \lambda_{b} = 1 & \mathrm{Eq.}(b)\\
\end{cases}
\]
If $\lam^B_b = 0$, then $\lam^B_{b-1} = 0$. This implies that there exists $r' \in (y+1,b)$ such that $\lam^B_{r'} = 0$. Based on the analysis in (b), we have $(\sum_{i=1}^y \lam^B_i) + \lam^B_{b+1} = 1$.

Note that in Problem~\ref{Prob B}, we have $\lambda_i \le 0$ for all $i \in [b]$, $\lambda_{b+1}$ is a free variable, and $\rob_y > 1$. Therefore, the optimal solution to \ref{Prob B} is $\lam_i^B = 0, \forall \ i \in [b]$ and $\lam^B_{b+1}= 1$, achieving an objective value of $\du = 1$. This leads to a contradiction.
It thus follows that $\lam^B_i \neq 0 $ for all $i \in \{y+1, \ldots,b\}$.

Now, by the complementary slackness,
\begin{equation*}
    \lam^B_x \cdot \left[ \frac{\sum_{i=1}^x \pi_i^A (b+i-1) + \sum_{i=y+1}^b \pi_i^A x}{x} - \rob_y \right] = 0, \quad \forall \ x \in \{y+1, \ldots,b\}.
\end{equation*}

Since $\lam^B_i \neq 0 $ for all $i \in \{y+1, \ldots,b\}$, we have
\begin{equation*}
    \frac{\sum_{i=1}^x \pi_i^A (b+i-1) + \sum_{i=y+1}^b \pi_i^A x}{x} = \rob_y, \quad \forall \ x \in \{y+1, \ldots,b\}.
\end{equation*}

Therefore, the $(y+1)$-th to $b$-th constraints in \ref{Prob A} are binding at $\pi^A$.

\vspace{5pt}

\noindent\textbf{Case III:} \colorbox{gray!20}{$\rob_y = \rob_\xi$}.
Note that $\eq [1, b]$ (i.e. Karlin's distribution \cite{Karlin1988}) is the only probability distribution that is $\rob_\xi $-competitive. We can verify that the $(y+1)$-th to $b$-th constraints in \ref{Prob A} are binding at $\pi^A = \eq[1,b]$.
\end{proof}

\begin{lemma}
\label{lemma:prepare}
    Consider $y < b$. Let $\rob \in [\rob_\xi, \rob_\nu]$, where $\rob_\xi = \rob(\eq[1, b]) = \frac{e_b}{e_b -1}$ and $\rob_\nu = \rob(\eq[y+1, b])$. Let the consistency and robustness of $\opB(\eq[y+1, b], \rob)$ (see Algorithm~\ref{alg:op_B}) under prediction y be $\con_y$ and $\rob_y$, respectively. Then, any $\rob_y$-robust algorithm's consistency under prediction $y$ is at least $\con_y$.
\end{lemma}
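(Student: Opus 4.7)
The plan is to show that the output $\pi^{\opB}$ of $\opB(\eq[y+1,b], \rob)$ is an optimal solution to Problem~\ref{Prob A} with robustness bound $\rob_y$. By \Cref{lemma:restrict} applied to the prediction $y$, the minimum prediction-specific consistency over all $\rob_y$-robust distributions is without loss attained by a distribution supported on $[b]$, and the resulting optimization is exactly Problem~\ref{Prob A} with right-hand side $\rob_y$. Since $\pi^{\opB}$ is feasible by construction and achieves consistency $\con_y$, if it optimizes this LP then every $\rob_y$-robust algorithm has consistency at least $\con_y$. The boundary cases $\rob = \rob_\nu$ and $\rob = \rob_\xi$ are immediate from \Cref{thm:eq_robust} and the uniqueness of Karlin's distribution, respectively; I would focus on the interior range $\rob \in (\rob_\xi, \rob_\nu)$.

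\textbf{Structure via binding constraints.} \Cref{lemma:_binding} guarantees that every optimizer $\pi^A$ of Problem~\ref{Prob A} satisfies $\ratio(\pi^A, x) = \rob_y$ for all $x \in \{y+1, \ldots, b\}$. Taking consecutive differences of these equations, along the lines of \Cref{lemma:eq_ratio} and \Cref{thm:eq_form}, yields $\pi^A_b = \rob_y/b$ and the geometric recurrence $\pi^A_{i+1} = \frac{b}{b-1}\pi^A_i$ for $i \in \{y+2, \ldots, b-1\}$. This pins down $\pi^A_{y+2}, \ldots, \pi^A_b$, and hence $\sum_{i=y+2}^b \pi^A_i$, purely in terms of $\rob_y$. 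Substituting into the objective and using the $x = y+1$ binding equation, a short calculation simplifies
\begin{equation*}
y \cdot \con_y(\pi^A) \;=\; \rob_y(y+1) \;-\; b\,\pi^A_{y+1} \;-\; \sum_{i=y+2}^{b}\pi^A_i,
\end{equation*}
so minimizing consistency is equivalent to maximizing $\pi^A_{y+1}$ subject to two linear equalities (normalization and the $x = y+1$ binding equation), non-negativity, and the remaining inequalities $\ratio(\pi^A, x) \le \rob_y$ for $x \in [y]$.

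\textbf{Maximizing $\pi_{y+1}$ and main obstacle.} The output $\pi^{\opB}$ is supported on $\{1, \ldots, r-1\} \cup \{y+1, \ldots, b\}$ with $\ratio(\pi^{\opB}, x) = \rob_y$ on its support (by Step~3 of $\opB$). A direct monotonicity computation using the equalization identity at $x = r-1$ verifies that $\ratio(\pi^{\opB}, x) \le \rob_y$ also on the zero-support range $x \in \{r, \ldots, y\}$, so $\pi^{\opB}$ is feasible for the residual LP. The crux is to show that $\pi^{\opB}$ actually maximizes $\pi_{y+1}$; I plan to do this either (i) by a perturbation argument exploiting the minimality of $r$ in Step~2 of $\opB$, arguing that any competing feasible distribution with strictly larger $\pi_{y+1}$ is forced, through the two linear equalities, to redistribute mass among $\pi_1, \ldots, \pi_y$ in a way that must violate some $\ratio(\pi, x) \le \rob_y$ constraint for $x < r$ (because by construction $r$ is the smallest index for which the equalizing system on $\{1, \ldots, r, y+1, \ldots, b\}$ already fits within robustness $\rob$), or (ii) by constructing explicit dual multipliers $\lambda^B$ for Problem~\ref{Prob B} supported on $\{1, \ldots, r-1, y+1, \ldots, b\}$ that satisfy KKT jointly with $\pi^{\opB}$. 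The principal obstacle is precisely this final optimality step, as it requires coupling the discrete choice of $r$ (from $\opB$'s integer search) to the continuous LP optimality condition, and carefully tracking how the geometric structure of equalizing distributions interacts with the inequality constraints at indices in $[y]$.
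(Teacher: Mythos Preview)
Your reduction is the paper's: restrict to $[b]$ via \Cref{lemma:restrict}, invoke \Cref{lemma:_binding} to bind the constraints on $\{y+1,\ldots,b\}$, and extract the geometric recurrence. Your reformulation to ``maximize $\pi_{y+1}$'' is a clean simplification the paper does not state explicitly, though it is implicit in its case analysis. One small correction: the output $\pi^{\opB}$ has support $\{1,\ldots,k\}\cup\{y+1,\ldots,b\}$ (including the index $k$, which carries the slack mass), with equalization only on $\{1,\ldots,k-1\}\cup\{y+1,\ldots,b\}$; your Step~3 reading drops index $k$, which would make the system over-determined and also changes the feasibility check at $x=k$.

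For the final step the paper executes your plan (i) directly, without dual variables: take any feasible $\pi \neq \pi^{\opB}$ in the reduced problem and let $r$ be the first index where $\pi_r \neq \pi^{\opB}_r$. If $r > y$, the fixed geometric tail forces a contradiction. If $r \leq y$ with $\pi_r > \pi^{\opB}_r$, then $r \geq k$ (both are equalized on $[k-1]$), so $\pi_i \geq \pi^{\opB}_i$ for all $i \in [y]$ with strict inequality at $r$; normalization then gives $\sum_{i \geq y+1}\pi_i < \sum_{i \geq y+1}\pi^{\opB}_i$, hence $\con_y(\pi) > \con_y(\pi^{\opB})$. If $r \leq y$ with $\pi_r < \pi^{\opB}_r$ (so $r \leq k$), let $v > r$ be the next nonzero index of $\pi$ and shift a small mass $\varepsilon$ from $v$ to $r$; this preserves $\rob_y$-robustness and strictly improves consistency, contradicting optimality of $\pi$. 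This is exactly the perturbation you anticipated; the minimality of $r$ in $\opB$ is not used explicitly, only the equalization structure on $[k-1]$ and the vanishing of $\pi^{\opB}$ on $\{k+1,\ldots,y\}$.
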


\begin{proof} Let $\Del$ denote the set of probability distributions on $\mathbb{N}_+$, i.e. $\Del \coloneqq \{\pi \mid \sum_{i=1}^{\infty}\pi_i = 1\}$. Let $\Del_b$ be the set of probability distributions on $[b]$, i.e. $\Del_b \coloneqq \{\pi \mid \sum_{i=1}^b \pi_i  =1\}$. Let $\Del_b'$ be the set of probability distributions on $[b]$ that achieve equalizing ratios on $\{y+1,\ldots,b\}$, i.e. $$\Del_b' \coloneqq \left\{\pi \mid \sum_{i=1}^b \pi_i = 1; \ratio(\pi, x) = \CR(\pi), \forall \ x \in \{y+1, \ldots, b\}\right\}.$$ It is straightforward to see that $\Del_b' \subseteq \Del_b \subseteq \Del$.

Let $\Del_{\rob_y}$ denote the set of probability distributions on $\mathbb{N}_+$ that is $\rob_y$-robust under prediction $y$, i.e. $$\Del_{\rob_y} \coloneqq \left\{\pi \mid \sum_{i=1}^{\infty}\pi_i = 1;\rob_y(\pi) = \rob_y\right\}.$$ It is straightforward to see that $\Del_{\rob_y} \subseteq \Del$.

To prove that any $\rob_y$-robust algorithm's consistency under prediction $y$ is at least $\con_y$, it suffices to show 
\begin{equation*}
    \min_{\pi \in \Del_{\rob_y}} \con_y(\pi) = \con_y.
\end{equation*}
Consider $\pi \in \Del \setminus \Del_b$. Let $\pi'$ be the distribution after transferring all probability mass at $i > b$ to $b$. It is clear that $\con_y(\pi') = \con_y(\pi), \forall \ y< b$. By~\Cref{lemma:restrict}, $\rob_y (\pi') \leq \rob_y (\pi)$. Therefore, 
\begin{equation}
    \label{trans_1}
    \min_{\pi \in \Del_{\rob_y}} \con_y(\pi) = \min_{\pi \in\Del_{\rob_y} \cap \Del_b} \con_y (\pi).
\end{equation}
Note that $\min_{\pi \in\Del_{\rob_y} \cap \Del_b} \con_y (\pi)$ is the optimal objective value of \ref{Prob A}.

By \Cref{lemma:_binding}, \ref{Prob A} reduces to the following problem (referred to as \ref{Prob C}):
\begin{align*}
    \label{Prob C}
    \tag{Problem C}
    \min_{\pi_1, \pi_2, \ldots, \pi_b} \quad & \frac{\sum_{i=1}^{y}\pi_i \cdot (b+i-1) + \sum_{i=y+1}^{b}\pi_i \cdot y}{y}\\
    \text{s.t.} \quad \quad 
    & \frac{\sum_{i=1}^{x} \pi_i \cdot (b + i - 1) + \sum_{i=x+1}^{b} \pi_i  \cdot x}{x} \leq \rob_y, \quad \forall \ x \in \{1, 2, \ldots, y\}, \\
    & \frac{\sum_{i=1}^{x} \pi_i \cdot (b + i - 1) + \sum_{i=x+1}^{b} \pi_i  \cdot x}{x} = \rob_y, \quad \forall \ x \in \{y+1, \ldots, b\}, \\
    & \sum_{i=1}^{b} \pi_i = 1, \\
    & \pi_i \geq 0, \quad \forall \ i \in \{1, 2, \ldots, b\}.
\end{align*}
In other words, we have
\begin{equation}
\label{trans_2}
    \min_{\pi \in\Del_{\rob_y} \cap \Del_b} \con_y (\pi) = \min_{\pi \in\Del_{\rob_y} \cap \Del_b'} \con_y (\pi).
\end{equation}

Let $\pi^*$ denote $\opB(\eq[y+1, b], \rob)$. Suppose $\pi^*$ has positive support over $\{1,\ldots, k, y+1,\ldots,b\}$. From the construction of \opB (see Algorithm~\ref{alg:op_B}), it follows that
\begin{subequations}
    \begin{align}
    \label{opB: ratio1}
        \ratio(\pi^*, x) = \rob, &\quad \forall\ x \in \{1,\ldots, k-1, y+1, \ldots,b\}\\
        \label{opB: ratio2}
        \ratio(\pi^*, x) \leq \rob, &\quad \forall\ x \in \{k, \ldots, y\}
    \end{align}
\end{subequations}
\Cref{opB: ratio1} and \eqref{opB: ratio2} together imply $\rob = \rob_y(\pi^*) = \rob_y$. Since \ref{Prob C} requires $\ratio(\pi, x) = \rob_y$ for all $x \in \{y+1, \ldots,b\}$, by~\Cref{lemma:eq_ratio}, a necessary condition for the optimal solution is 
\begin{equation}
    \label{necessary_condition}
    \pi_{i+1} = \frac{b}{b-1}\cdot \pi_i , \quad \forall \ i \in \{y+2,\ldots,b-1\}.
\end{equation}

Consider $\pi \in (\Del_{\rob_y}\cap \Del_b') \setminus \{\pi^*\}$. Let $r$ denote the minimal index such that $\pi^*_r \neq \pi_r$, i.e. $r = \min \{i \mid \pi_i^* \neq \pi_i\}$. Obviously, $r < b$. We consider the following cases.

\vspace{5pt}

\noindent\textbf{Case I:} \colorbox{gray!20}{$y < r < b$}.
In this case, $\pi_i = \pi^*_i$, for all $i \in [y]$. Since $\pi_{y+1} \neq \pi_{y+1}^*$ leads to $\ratio(\pi, y+1) \neq \ratio(\pi^*, y+1) = \rob_y$, we have $r\neq y+1$. Thus, $\pi_i = \pi^*_i$, for all $i \in [y+1]$. Based on this, we have $\sum_{i=y+2}^{b} \pi_i = \sum_{i=y+2}^{b} \pi_i^*$. If there exists $r \in \{y+2, \ldots,b\}$, such that $\pi_r \neq \pi^*_r$, then $\pi$ violates  Condition~\ref{necessary_condition}. Therefore, $\pi$ is not the optimal solution to \ref{Prob C}.

\vspace{5pt}

\noindent\textbf{Case II:} \colorbox{gray!20}{$r \leq y$}.

\vspace{5pt}

\textbf{Case II(a):} \colorbox{gray!20}{$\pi_r > \pi^*_r$}.
Note that $\ratio(\pi^*, x) = \rob_y$ for all $x \in [k-1]$. This makes $r > k-1$. Therefore, we have 
\begin{align*}
    \pi_i &= \pi_i^*, \quad \forall i \in [k-1]\\
    \pi_k \geq \pi^*_k \ \text{ and }\pi_i \geq \pi^*_i &= 0, \quad \ \ \forall i \in \{k+1, \ldots, y\}
\end{align*}
% $\pi_i = \pi_i^*$ for all $i \in [k-1]$, $\pi_k \geq \pi^*_k$, and $\pi_i \geq \pi^*_i = 0$ for all $i \in \{k+1, \ldots, y\}$ 
with some $r \in \{k, \ldots,y\}$ such that $\pi_r > \pi^*_r$. Thus, $\sum_{i=y+1}^{b} \pi_i < \sum_{i=y+1}^b \pi^*_i$. Combing them together, 
\begin{equation*}
    \con_y (\pi) = \ratio(\pi, y) > \ratio(\pi^*, y) = \con_y(\pi^*).
\end{equation*}
Therefore, $\pi$ cannot be an optimal solution to~\ref{Prob C}.

\vspace{5pt}

\textbf{\textbf{Case II(b):}} \colorbox{gray!20}{$\pi_r < \pi^*_r$}.
Note that $\pi^*_i = 0$ for all $i \in \{k+1, \ldots, y\}$. Therefore, $r \leq k$. Since $\pi_i = \pi^*_i$ for all $i \in [r-1]$ and $\pi_r < \pi^*_r$, we have $\sum_{i= r+1}^y \pi_i \neq 0$; otherwise $\rob_y (\pi) > \rob_y (\pi^*) = \rob_y$. Let $r'$ denote the minimal index in $\{r+1, \ldots,y\}$ such that $\pi_{r'} \neq 0$, i.e. $r' = \min \left\{i \in \{r+1, \ldots, y\} \mid \pi_i \neq 0\right\}$. Consider $\pi' = (\pi_1, \ldots, \pi_r + \eps,\ldots, \pi_{r'} - \eps, \ldots)$, where $\eps = \min\{\pi^*_r - \pi_r, \pi_{r'}\}$. It is straightforward to verify that 
\begin{equation}
\label{contra_1}
    \rob_y(\pi') = \rob_y (\pi), \quad \con_y (\pi') < \con_y (\pi).
\end{equation}
Note that 
\begin{equation*}
    \con_y (\pi') \geq \min_{\pi \in \Del_{\rob_y}} \con_y(\pi) \geq \min_{\pi \in \Del_{\rob_y} \cap \Del_b'} \con_y (\pi).
\end{equation*}
If $\con_y (\pi) = \min_{\pi \in \Del_{\rob_y} \cap \Del_b'} \con_y (\pi)$, then we have 
\begin{equation}
\label{contra_2}
    \con_y (\pi') \geq \con_y (\pi).
\end{equation} 
However, \eqref{contra_2} Contradict with \eqref{contra_1}. Therefore, $\con_y (\pi) \neq \min_{\pi \in \Del_{\rob_y} \cap \Del_b'}$, which makes $\pi$ suboptimal to \ref{Prob C}.

In conclusion, considering \textbf{(a)} and \textbf{(b)}, $\forall \ \pi \in (\Del_{\rob_y}\cap \Del_b') \setminus \{\pi^*\}$, $\pi$ is not an optimal solution to \ref{Prob C}. In other words, $\pi^*$ is the optimal solution to \ref{Prob C} and thus
\begin{equation}
    \label{final}
    \con_y = \con_y(\pi^*) =  \min_{\pi \in\Del_{\rob_y} \cap \Del_b'} \con_y (\pi).
\end{equation}

By \Cref{trans_1}, \Cref{trans_2} and \Cref{final}, we have
\begin{equation*}
    \min_{\pi \in \Del_{\rob_y}} \con_y(\pi) = \con_y.
\end{equation*}
Therefore, any $\rob_y$-robust algorithm's consistency under prediction $y$ is at least $\con_y$.
\end{proof}

\begin{lemma}

\label{lemma:y<b}
    Consider $y < b$. Let $\rob \in [\rob_\xi, \rob_\nu]$, where $\rob_\xi = \rob(\eq[1, b]) = \frac{e_b}{e_b -1}$ and $\rob_\nu = \rob(\eq[y+1, b])$. Let $\con_y(\rob)$ and $\rob_y(\rob)$ denote the consistency and robustness of $\opB(\eq[y+1, b], \rob)$ (see Algorithm~\ref{alg:op_B}) with respect to prediction $y$, respectively. $\con_y(\rob)$ and $\rob_y(\rob)$ are jointly Pareto optimal.
\end{lemma}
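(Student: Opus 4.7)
My plan is to argue by contradiction, using Lemma~\ref{lemma:prepare} to dispatch the ``consistency-improvement'' direction immediately, and then a strict monotonicity claim for \opB to dispatch the ``robustness-improvement'' direction. Write $\pi^* := \opB(\eq[y+1,b], \rob)$, with $\con^* := \con_y(\pi^*)$ and $\rob^* := \rob_y(\pi^*)$; the construction of \opB (Step~2 forces the target ratio to be achieved in Step~3) ensures $\rob^* = \rob$. Suppose for contradiction that some distribution $\pi'$ on $\mathbb{N}_+$ achieves $\con_y(\pi') \leq \con^*$ and $\rob_y(\pi') \leq \rob^*$ with at least one strict inequality.

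If $\con_y(\pi') < \con^*$, then $\pi'$ is $\rob^*$-robust, so Lemma~\ref{lemma:prepare} applied at level $\rob^*$ gives $\con_y(\pi') \geq \con^*$, a contradiction. The remaining case is $\con_y(\pi') = \con^*$ with $\rob' := \rob_y(\pi') < \rob^*$. Noting $\rob' \geq \rob_\xi$ automatically (since $\rob_\xi$ is the unconstrained optimal CR), applying Lemma~\ref{lemma:prepare} at level $\rob'$ yields $\con^* = \con_y(\pi') \geq \con_y(\opB(\eq[y+1,b], \rob'))$. It thus suffices to prove the strict monotonicity claim: $\rob \mapsto \con_y(\opB(\eq[y+1,b], \rob))$ is strictly decreasing on $[\rob_\xi, \rob_\nu]$.

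For strict monotonicity, I take $\rob_1 < \rob_2$ in $[\rob_\xi, \rob_\nu]$ and let $\pi_i := \opB(\eq[y+1,b], \rob_i)$. Since $\pi_1$ is $\rob_1$-robust and hence $\rob_2$-robust, Lemma~\ref{lemma:prepare} at $\rob_2$ already gives $\con_y(\pi_1) \geq \con_y(\pi_2)$; I upgrade this to strict inequality by invoking the uniqueness established in the final part of the proof of Lemma~\ref{lemma:prepare}, namely that $\pi_2$ is the \emph{unique} minimizer of $\con_y$ over $\Del_{\rob_2} \cap \Del_b'$. Step~3 of \opB equalizes $\pi_1$ on $\{1,\ldots,r_1-1, y+1, \ldots, b\}$ at ratio $\rob_1$, and since this ratio coincides with $\CR(\pi_1)$, we obtain $\pi_1 \in \Del_b'$; together with $\pi_1 \in \Del_{\rob_1} \subseteq \Del_{\rob_2}$ this gives $\pi_1 \in \Del_{\rob_2} \cap \Del_b'$. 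Equality $\con_y(\pi_1) = \con_y(\pi_2)$ would force $\pi_1 = \pi_2$ by uniqueness, contradicting $\rob_y(\pi_1) = \rob_1 \neq \rob_2 = \rob_y(\pi_2)$. Hence $\con_y(\pi_1) > \con_y(\pi_2)$, establishing strict monotonicity and closing the outer contradiction.

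The main obstacle I anticipate is the careful verification that $\pi_1 \in \Del_b'$, so that the uniqueness clause of Lemma~\ref{lemma:prepare} can be invoked at the higher level $\rob_2$ rather than only at $\rob_1$; everything else is a clean chain of implications. If, in some corner case, \opB's Step~3 failed to produce a distribution equalizing the ratio across $\{y+1, \ldots, b\}$, one would instead need a direct LP/perturbation argument for monotonicity on each sub-interval of $\rob$ over which the support-size parameter $r$ in \opB remains constant; but by construction of \opB this complication does not arise.
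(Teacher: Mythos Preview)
Your proposal is correct and follows essentially the same approach as the paper: both argue by contradiction, invoking Lemma~\ref{lemma:prepare} to rule out any strict consistency improvement at the given robustness level, and then using strict monotonicity of $\rob \mapsto \con_y(\opB(\eq[y+1,b],\rob))$ together with a second application of Lemma~\ref{lemma:prepare} to rule out any strict robustness improvement. The only difference is that the paper simply asserts this strict monotonicity without proof, whereas you justify it via the uniqueness established inside the proof of Lemma~\ref{lemma:prepare}; this is a valid way to close that gap (with the minor caveat that the inclusion $\Del_{\rob_1}\subseteq\Del_{\rob_2}$ should be read under the intended ``robustness at most $\rob$'' interpretation of $\Del_{\rob}$).
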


\begin{proof} Consider $\opB$ $(\eq[y+1, b], \rob)$ and
let $\con_y(\rob)$ and $\rob_y (\rob)$ denote its prediction-specific consistency and robustness. Note that $\con_y(\rob)$ is a strictly decreasing function of $\rob$, and $\rob_y (\rob) = \rob$. By~\Cref{lemma:prepare}, for any $\rob \in [\rob_\xi,\rob_\nu]$, any $\rob_y(\rob)$-robust algorithm is at least $\con_y(\rob)$-consistent under prediction $y$. Then, the only way to disprove the Pareto optimality of $(\con_y(\rob), \rob_y(\rob))$ is to find another $\con_y(\rob)$-consistent and $\rob_y^{\mathcal{Q}}$-robust algorithm $\mathcal{Q}$, where $\rob_y^{\mathcal{Q}} < \rob_y(\rob)$. 

Assume such algorithm $\mathcal{Q}$ exists. Since $\rob_y^{\mathcal{Q}} < \rob_y(\rob)$ and $\con_y(\rob)$ is a strictly decreasing function of $\rob$, we have 
\begin{equation}
\label{cont_1}
    \con_y(\rob_y^{\mathcal{Q}}) > \con_y(\rob_y(\rob)) = \con_y(\rob).
\end{equation}

Next, we invoke~\Cref{lemma:y<b} once again, any $\rob_y^{\mathcal{Q}}$-robust algorithm is at least $\con_y (\rob_y^{\mathcal{Q}})$-consistent. Note that $\rob_y^{\mathcal{Q}}$-robust algorithm $\mathcal{Q}$ is $\con_y(\rob)$-consistent. Thus 
\begin{equation}
\label{cont_2}
    \con_y(\rob_y^{\mathcal{Q}}) \leq\con_y(\rob).
\end{equation}

\Cref{cont_1} and \Cref{cont_2} lead to a Contradiction.
Therefore, for any $\rob \in [\rob_\xi,\rob_\nu]$, $\con_y(\rob)$ and $\rob_y(\rob)$ are jointly Pareto optimal.   
\end{proof}

Consider $\opA(\eq[1,n], y)$ and denote its consistency and robustness under prediction $y$ by $\con_y$ and $\rob_y$, respectively.
For the case when $y \geq b$, the following result holds.

\begin{lemma}
\label{lemma:y>=b}
    Suppose $y \geq b$ and $1 < n \leq b$. The consistency and robustness of $\opA(\eq[1,n], y)$ (see Algorithm~\ref{alg:op_A}) under prediction $y$, denoted by $\con_y$ and $\rob_y$ are jointly Pareto optimal.
\end{lemma}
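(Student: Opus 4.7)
The plan is to adapt the primal--dual strategy of \Cref{lemma:pi_i_neq_0}--\Cref{lemma:y<b} (developed for $y<b$) to the present case $y\geq b$. By \Cref{lemma:restrict} we may restrict attention without loss of generality to distributions supported on $[b]\cup\{y+1\}$. Let $\pi^\star=\opA(\eq[1,n],y)$, and write $\bar\rob=\rob_y(\pi^\star)$, $\bar\con=\con_y(\pi^\star)$. The proof reduces to showing that $\pi^\star$ solves the linear program
\[
\min_{\pi}\; \con_y(\pi)\;=\;\tfrac{1}{b}\Bigl[\textstyle\sum_{i=1}^{b}\pi_i(b+i-1)+y\,\pi_{y+1}\Bigr]
\quad\text{s.t.}\quad \ratio(\pi,x)\leq \bar\rob,\;\forall x\in[b]\cup\{y+1\},\;\sum_i\pi_i=1,\;\pi\geq 0;
\]
once this is in hand, the contradiction argument used in the proof of \Cref{lemma:y<b} (applied with the roles of ``robustness bound'' and ``consistency objective'' interchanged) establishes that $(\bar\con,\bar\rob)$ is jointly Pareto optimal.

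I would then carry out three steps. First, I would extract the structural form of $\pi^\star$ directly from \Cref{alg:op_A}: depending on which termination branch fires, $\pi^\star$ is supported on $\{1,\dots,r\}\cup\{y+1\}$ for some $r\in(y+1-b,\,n]$ with $\ratio(\pi^\star,i)=\bar\rob$ for $i\in\{1,\dots,r\}$ and $\ratio(\pi^\star,y+1)=\bar\rob$ (the ``Step~3'' case), or else is supported on $\{1,\dots,y+1-b\}\cup\{y+1\}$ with equalizing ratios on the lower support and $\ratio(\pi^\star,y+1)<\bar\rob$ (the ``$r\leq y+1-b$'' case). Second, I would prove the support-identification lemma analogous to \Cref{lemma:pi_i_neq_0}: at any LP optimum $\pi^A$ the support coincides with that of $\pi^\star$. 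The argument is again a contradiction via local mass shifts that strictly decrease consistency while respecting all robustness constraints, relying on the computation that shifting mass $\delta$ from $\pi_r$ to $\pi_{y+1}$ (i) changes $\con_y$ by $\tfrac{\delta}{b}(y-(b+r-1))$, (ii) leaves $\ratio(\pi,x)$ unchanged for $x<r$, (iii) strictly decreases $\ratio(\pi,x)$ for $r\leq x\leq b$, and (iv) increases $\ratio(\pi,y+1)$ by $\tfrac{\delta}{b}(y-r+1)$. Third, I would form the LP dual and apply complementary slackness exactly as in \Cref{lemma:_binding} to pin down the binding primal constraints at $\pi^A$. With both the support and the binding constraints matching those of $\pi^\star$, the remaining linear system has a unique solution, forcing $\pi^A=\pi^\star$.

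The main obstacle is the second step, establishing the support of the LP optimum. Here $\pi_{y+1}$ enters the objective with coefficient $y/b$, the $x=y+1$ robustness constraint with coefficient $(b+y)/b$, and each $x\in[b]$ constraint with coefficient $x/b$; this three-way asymmetry makes the perturbation construction substantially more delicate than in the $y<b$ case, since perturbations that lower $\con_y$ must be chosen to simultaneously avoid pushing $\ratio(\pi,y+1)$ above $\bar\rob$. In particular, when $\ratio(\pi^A,y+1)=\bar\rob$ is already tight, any mass shift into $\pi_{y+1}$ must be compensated by an appropriate shift among the lower indices, and the feasibility of such paired shifts must be verified uniformly across both termination modes of \opA, a case split with no analog in the analysis of \opB.
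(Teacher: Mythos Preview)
Your plan is a genuine alternative to the paper's argument, but the paper does \emph{not} use LP duality for this lemma. Instead it gives a direct perturbation comparison: for any competitor $\pi\neq\pi^\star$, it lets $r=\min\{i:\pi_i\neq\pi^\star_i\}$ and shows that either $\rob_y(\pi)>\rob_y$ or $\con_y(\pi)\geq\con_y$. The key structural fact it exploits is simply that $\ratio(\pi^\star,x)=\rob_y$ for all $x\in[k-1]$ (where $k=\max\{i\leq b:\pi^\star_i\neq 0\}$), together with the dichotomy that either $\ratio(\pi^\star,y+1)=\rob_y$ or $k=y+1-b$. If $\pi_r>\pi^\star_r$ with $r\leq k-1$, robustness at $x=r$ immediately exceeds $\rob_y$; if $r\geq k$, a short calculation shows $\con_y$ cannot improve. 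If $\pi_r<\pi^\star_r$, the paper builds the single mass shift $\pi_r\!\to\!\pi_r+\varepsilon$, $\pi_v\!\to\!\pi_v-\varepsilon$ (with $v$ the next nonzero index) and shows it weakly lowers $\con_y$ without violating $\rob_y$, so $\pi$ cannot be the most consistent $\rob_y$-robust distribution.

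Two remarks on your plan. First, your structural claim is slightly off: after Step~3, $\ratio(\pi^\star,i)=\bar\rob$ holds for $i\in\{1,\dots,r-1\}$ but typically \emph{not} at $i=r$ itself, since the partial un-shift of $\pi_r$ strictly lowers $\ratio(\cdot,r)$; this is why the paper indexes its equalizing set by $[k-1]$. Second, and more substantively, your Step~3 asserts the linear system has a unique solution and hence $\pi^A=\pi^\star$. The paper's own analysis shows this can fail in the boundary case $k=y+1-b$: there exist $\pi\neq\pi^\star$ with $\con_y(\pi)=\con_y(\pi^\star)$ and $\rob_y(\pi)=\rob_y(\pi^\star)$, so the LP optimum is not unique. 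Your reduction to the contradiction argument of \Cref{lemma:y<b} then needs strict monotonicity of the LP value in $\bar\rob$, which you have not established for this setting. In effect, the ``main obstacle'' you identify in Step~2 is precisely the whole of the paper's proof; the LP/duality scaffolding does not let you bypass it, and the paper's direct route is both shorter and handles the degenerate-support boundary cleanly.
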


\begin{proof}
Let $\pi^*$ denote $\opA(\eq[1, n], y)$. Our proof is divided into two cases.   

\vspace{5pt}

\noindent\textbf{Case I:} \colorbox{gray!20}{$\pi^*_2 = 0$}.
Note that $n > 1$, from the construction of Algorithm~\ref{alg:op_A}, $\pi^*_2 = 0$ could only happen when $y = b$. In this case, we have $\pi_1^* = 1/b$, $\pi_{b+1}^* = (b-1) / b$, and $\pi^*_i = 0$ for all $i \in \mathbb{N}_+ \setminus \{1, b+1\}$. Note that its prediction-specific consistency and robustness are $1$ and $2 - (1/b)$, respectively, i.e. $\con_y = \con_y (\pi^*) = 1$ and $\rob_y = \rob_y (\pi^*) = 2 - (1/b)$.

Note that all $1$-consistent algorithm under $y = b$ can only have probability mass on $\{1, b+1\}$. Let $\pi_1$ and $1 - \pi_1$ represent the probability mass on $1$ and $b+1$, respectively. then 
\begin{equation*}
    \rob_y (\pi) = \max_{\pi_1 \in [0,1]} \{ b \pi_1 + (1-\pi_1), \pi_1 + 2 (1-\pi_1) \} \geq 2 - (1/b).
\end{equation*}

Therefore, $\con_y$ and $\rob_y$ are jointly Pareto optimal.

\vspace{5pt}

\noindent\textbf{Case II:} \colorbox{gray!20}{$\pi^*_2 \neq 0$}.
Let $k \coloneqq \max\{i \leq b \mid \pi^*_i \neq 0\}$. Since $\pi^*_2 \neq 0 $, we have $k\geq 2$. According to the structure of Algorithm~\ref{alg:op_A}, we have that at least one of the following holds:
\begin{align}
    \label{Condition_1}
    \ratio(\pi^*, y+1) &= \rob_y,\\
    \label{Condition_2}
    y-b + 1 &= k.
\end{align}

Consider $\pi \neq \pi^*$. Let $r$ be the minimal index such that $\pi_i \neq \pi^*_i$, i.e. $ r \coloneqq \min \{i \mid \pi_i \neq \pi^*_i\}$.

\vspace{5pt}

\textbf{Case II(a):} \colorbox{gray!20}{$\pi_r > \pi_r^*$}.
In this case, we must have $r < y+1$. 
If $r \leq k-1$, based on the design of $\eq[1,n]$ and \opA (see Algorithm~\ref{alg:op_A}), we have $\ratio(\pi^*, x) = \rob_y$ for all $x \in [k-1]$, therefore, $\pi_r > \pi_r^*$ will lead to $\rob_y (\pi) > \rob_y (\pi^*) = \rob_y$. 

If $k \leq r < y+1$, then we must have $\pi^*_{y+1} \neq 0$, and thus $k-1 + b \geq y$. It's clear that $\con_y (\pi^*) \leq \con_y(\pi)$. Note that $\rob_y(\pi) \geq \ratio(\pi, 1) = \ratio(\pi^*, 1) = \rob_y(\pi^*) = \rob_y$. Therefore, $\pi$ cannot achieve strictly better consistency or robustness with respect to $y$.

\vspace{5pt}

\textbf{Case II(b):} \colorbox{gray!20}{$\pi_r < \pi_r^*$}.
A necessary condition for $\pi$ to dominate $\pi^*$ is that $\pi$ must be $\rob_y$-robust. In the subsequent analysis, we proceed under this assumption. Note that it suffices to restrict the support of the distribution to $[y+1]$, since assigning probability mass beyond y+1 yields no improvement in consistency while potentially worsening robustness. Then, we have $r < y+1$. Since $\pi^*_i = 0$ for all $k < i < y+1$, we further have $r \leq k$.

Let $v \coloneqq \min \{i > r \mid \pi_i \neq 0\}$. Consider 
\begin{equation*}
    \pi' = \{\pi_1, \ldots,\pi_{r-1}, \pi_r + \eps, \ldots,\pi_v - \eps,\ldots\},
\end{equation*}
where $\eps = \{\pi_v, \pi^*_r - \pi_r\}$. It is straightforward to verify $\rob_y (\pi') \leq \rob_y$.

If $b+k-1 < y$, by construction of \opA (see Algorithm~\ref{alg:op_A}), this can only happen when $b+n-1 < y$ and the operation terminates upon encountering the first occurrence of \texttt{Step 0}. Since $r < v \leq k < y$, it's clear that $\con_y (\pi') < \con_y(\pi)$.

If $b+k-1 = y$, since $r \leq k$, we have $r-1 + b \leq y$, thus
\begin{equation*}
    b \con_y (\pi') = b \con_y (\pi) + \eps \cdot [\mathds{1}\{v \leq y\}(r-v) + \mathds{1}\{v > y\}(r-1+b-y)] \leq b\con_y (\pi).
\end{equation*}
Equivalently,
\begin{equation*}
    \con_y (\pi') \leq  \con_y (\pi).
\end{equation*}
Attaining equality requires $r-1+b = y$ and $r = k$. 

If $b+k-1 > y$, then Condition~\eqref{Condition_2} does not hold; thus, Condition~\eqref{Condition_1} must hold, i.e., we have $\ratio(\pi^*, y+1) = \rob_y$. If $v = y+1$, then $\pi_i = \pi^*_i$ for all $i \in [r-1]$, $\pi_r < \pi_r^*$, $\pi_i = 0$ for all $r<i<y+1$. It's clear that $\rob(\pi) > \rob_y$. This makes $v < y+1$. Therefore,
\begin{equation*}
    b\con_y (\pi') = b \con_y (\pi) + \eps (r-v) < b \con_y (\pi).
\end{equation*}
Equivalently,
\begin{equation*}
    \con_y (\pi') < \con_y (\pi).
\end{equation*}

Therefore, for all $\pi \neq \pi^*$, $\pi$ is not the most consistent $\rob_y$-robust distribution except that $r -1+b = y$ and $r = k$ hold simultaneously.

When $r = k \geq 2$ and $r-1+b=y$ hold simultaneously, we have $\pi_1 = \pi_1^*$. Thus, $\rob_y (\pi) \geq \ratio (\pi, 1) = \ratio(\pi^*, 1) = \rob_y(\pi^*) = \rob_y$. Since $\pi$ is $\rob_y$-robust, $\rob_y (\pi) \leq \rob_y$. Thus, $\rob_y (\pi) = \rob_y$. Note that
\begin{equation*}
    \con_y (\pi) = \ratio(\pi, y) = \frac{\sum_{i=1}^y (i-1+b)\pi_i + \pi_{y+1} y}{b},
\end{equation*}
and $i-1+b \geq r-1+b = y$ for all $k \leq i \leq y$. This makes $\con_y (\pi) \geq \con_y (\pi^*)$. 

Therefore, any $\pi \neq \pi^*$ makes no Pareto improvement over $\pi^*$. 

In conclusion, $\con_y$ and $\rob_y$ are jointly Pareto optimal.
\end{proof}

With those foundational lemmas, we prove Theorem~\ref{thm:rsr}.
\begin{proof}[Proof of Theorem~\ref{thm:rsr}]

Assume that the user-specified parameter is $\overline{\rob}$. Let $n = \lceil \log_{b/(b-1) } 1 + 1 / (\overline{\rob} + 1)\rceil$. Let $\rob' = \rob(\eq[1, n]) = 1 + [(\frac{b}{b-1})^n-1]^{-1}$. Let $\pi$ denote the final distribution obtained after applying either \opA or \opB.

\vspace{5pt}

\noindent\textbf{Case I:} \colorbox{gray!20}{$y \geq b$}.
Let $\pi^\mathcal{A}$ denote $\opA(\eq[1, n], y)$. 

\vspace{5pt}

\textbf{Case I(a):} \colorbox{gray!20}{$\pi^\mathcal{A}$ is a two-point distribution}.
We have $\con_y(\pi^\mA) = 1$ and $\rob_y(\pi^\mA) = \frac{2b-1}{b}$. By design, $\opA$ never increases the robustness of the distribution at any point in time. Therefore, the a condition for forming a two-point should be $\rob' \geq \rob_y(\pi^\mA)$. In this case, it's easy to verify that 

\begin{equation}
\label{weak_1a}
    \con_y(\pi^\mathcal{A}) \leq \rob' \log\left(1 + \frac{1}{\rob'  -1}\right).
\end{equation}

\vspace{5pt}

\textbf{Case I(b):} \colorbox{gray!20}{$\pi^\mathcal{A}$ is not a two-point distribution}.
In this case, we have
\begin{equation*}
    \pi_1^\mathcal{A} = \eq_1[1,n] = [(b-1)((\frac{b}{b-1})^n-1)]^{-1}.
\end{equation*}
Thus, we have
\begin{equation*}
     \rob_y(\pi^\mathcal{A}) = \ratio(\pi^\mathcal{A}, 1) = \pi^\mathcal{A}_1 \cdot b + (1-\pi_1^\mathcal{A}) = 1 + [(\frac{b}{b-1})^n-1]^{-1} = \rob'.
\end{equation*}

Note that as $b \to \infty$, Kumar's algorithm is $\frac{\lam}{1 - e^{-\lam}}$-consistent and $\frac{1}{1-e^{-\lam}}$-robust. (see~\cite{Kumar2018}) Take $\lam = -\log(1-\rob'^{-1})$. In this case, Kumar's algorithm is $\rob' \log(1 + \frac{1}{\rob' -1})$-consistent and $\rob'$-robust. Since $\overline{\rob} < b-2$ and $n = \lceil \log_{b/(b-1) } 1 + 1 / (\overline{\rob} + 1)\rceil$, we have $n>1$. By~\Cref{lemma:y>=b}, $(\con_y(\pi^\mathcal{A}), \rob_y (\pi^\mathcal{A}))$ is Pareto optimal. Therefore, we have
\begin{equation*}
    \con_y(\pi^\mathcal{A}) \leq \rob_y (\pi^\mathcal{A}) \log\left(1 + \frac{1}{\rob_y(\pi^{\mathcal{A}}) -1}\right).
\end{equation*}

Since $\rob' = \rob_y(\pi^\mathcal{A})$, we have for any $y \geq b$,
\begin{equation}
\label{weak_1b}
    \con_y(\pi^\mathcal{A}) \leq \rob' \log\left(1 + \frac{1}{\rob'  -1}\right).
\end{equation}

\vspace{5pt}

\noindent\textbf{Case II:} \colorbox{gray!20}{$y < b$}.
Let $\rob'' = \min \{\rob_\nu, \rob'\}$. Recall that $\rob_\nu = \rob (\eq[y+1, b])$. Let $\pi^\mathcal{B}$ denote $\opB(\eq[y+1, b], \rob'')$. Note that \begin{equation*}
\rob_\nu = \rob(\eq[y+1, b]) > \rob(\eq[1, b]) = \rob_\xi,
\end{equation*}
and
 \begin{equation*}
1 + [(\frac{b}{b-1})^n-1]^{-1} = \rob(\eq[1, n]) \geq \rob(\eq[1, b]) =  \rob_\xi.  
\end{equation*}
Therefore, $\rob'' = \min \{\rob_\nu, \rob'\} \geq \rob_\xi$, and $\rob'' = \min \{\rob_\nu, \rob'\} \leq \rob_\nu$. Thus, we have 
\begin{equation*}
    \rob_\xi \leq \rob'' \leq \rob_\nu.
\end{equation*}
Therefore, $\rob_y (\pi^\mathcal{B}) = \rob''$. We consider the following two cases.

\vspace{5pt}

\textbf{Case II(a):} \colorbox{gray!20}{$\rob'' = \rob'$}.
In this case, we consider $\lam = -\log(1-\rob_y^{-1})$, Kumar's algorithm is $\rob_y \log(1 + \frac{1}{\rob_y -1})$-consistent and $\rob_y$-robust. Since $\rob_\xi \leq \rob'' \leq \rob_\nu$, by~\Cref{lemma:prepare},
\begin{equation*}
    \con_y (\pi^\mathcal{B}) \leq \rob_y(\pi^\mathcal{B}) \log\left(1 + \frac{1}{\rob_y (\pi^\mathcal{B}) -1}\right).
\end{equation*}

Since $\rob' = \rob'' = \rob_y (\pi^\mathcal{B})$, we have for any $y < b$,
\begin{equation}
\label{weak_2a}
    \con_y (\pi^\mathcal{B}) \leq \rob' \log\left(1 + \frac{1}{\rob' -1}\right).
\end{equation}

\vspace{5pt}

\textbf{Case II(b):} \colorbox{gray!20}{$\rob'' = \rob_\nu < \rob'$}.
In this case, $\pi^\mathcal{B} = \eq[y+1, b]$, and $\con_y(\pi^\mathcal{B}) = 1$. We can verify that
\begin{equation}
\label{weak_2b}
    \con_y (\pi^\mathcal{B}) \leq \rob' \log\left(1 + \frac{1}{\rob' -1}\right).
\end{equation}

By \eqref{weak_1a}, \eqref{weak_1b}, \eqref{weak_2a} and \eqref{weak_2b}, we conclude that for any $y \in \mathbb{N}_+$, 
\begin{equation*}
    \con_y (\pi^\mathcal{B}) \leq \rob' \log\left(1 + \frac{1}{\rob' -1}\right).
\end{equation*}

Therefore, by \Cref{def:cl_con_rob} and \Cref{def:ps_con_rob},
\begin{equation*}
    \con(\pi) = \sup_{y \in \mathbb{N}_+} \con_y (\pi) = \max\left\{\sup_{y \geq b} \con_y (\pi^\mA), \sup_{y < b}(\con_y(\pi^\mathcal{B}))\right\} \leq \rob' \log \left(1 + \frac{1}{\rob' -1}\right).
\end{equation*}

Across all of the cases discussed above, it holds that $\rob_y (\pi) \leq \rob'$. Similarly, by \Cref{def:cl_con_rob} and \Cref{def:ps_con_rob}, $\rob(\pi) =  \sup_{y \in \mathbb{N}_+} \rob_y (\pi) \leq \rob'$. 

Since $F(x) \coloneqq x \log (1 + \frac{1}{x-1})$ is decreasing on $x \in (1,+\infty)$, we further have
\begin{equation*}
    \con(\pi)  \leq \rob(\pi) \log \left(1 + \frac{1}{\rob(\pi) -1}\right).
\end{equation*}

Wei's lower bound~\cite{Wei2020} and~\Cref{def_weak}, \PRSR is \weak-optimal.

By~\Cref{lemma:y<b} and \Cref{lemma:y>=b}, \PRSR's prediction-specific consistency and robustness are Pareto optimal.
According to~\Cref{def_strong}, $\PRSR$ is \strong-optimal.

\end{proof}

\newpage
\section{Proofs for \Cref{sec:one-max_search} \label{appendix:oms}}

In this section, we prove \Cref{thm:sun} and \ref{thm:oms}.

\subsection{Proof of \Cref{thm:sun} \label{appendix:oms_1}}
\begin{proof}[Proof of \Cref{thm:sun}]
Let $\con_y^\mS$ and $\rob_y^\mS$ denote the prediction-specific consistency and robustness of Sun's algorithm with respect to prediction $y$. Let $\con^\mS$ and $\rob^\mS$ denote the consistency and robustness of Sun's algorithm.
    
Consider $y \in [L, L \con^\mS)$. In this case, Sun's algorithm sets thresholds to $\Phi = L\con^\mS$. To analyze the prediction-specific consistency, we assume that the maximum price is exactly $y$. Since $y < L\con^\mS $, we have $\alg = L$ and $\opt = y$. By~\Cref{def:ps_con_rob}, $\con_y^\mS = y/L$. To obtain the robustness under prediction $y$, we consider incorrect predictions. Observe that in the worst case, $\alg = L\con^\mS$, $\opt = U$. Therefore, $\rob_y^S = U / ({L\con^{\mS}}) = \theta / \con^\mS$. Consider the canonical competitive algorithm that adopts a fixed threshold policy with $\Phi = \sqrt{LU}$, which has prediction-specific consistency $\con_y^\mathcal{C} = y / L$ and robustness $\rob_y^\mathcal{C} =\sqrt{\theta}$. Note that $\forall \ \lam \in [0,1)$, $\rob^\mS > \sqrt{\theta}$, which makes
\begin{equation*}
    \con_y^\mathcal{S} = \con_y^\mathcal{C}, \quad \rob_y^\mathcal{S} > \rob_y^\mathcal{C}.
\end{equation*}
By~\Cref{def_strong}, Sun's algorithm is not \strongly.
\end{proof}

\subsection{Proof of \Cref{thm:oms} \label{appendix:oms_2}}
\begin{proof}[Proof of \Cref{thm:oms}] We first prove the prediction-specific consistency and robustness of \OMS. We consider the following three cases.

\vspace{5pt}

\noindent\textbf{Case I:} \colorbox{gray!20}{$y \in [L, \lam L + (1-\lam) \sqrt{LU})$}. In this case, \OMS sets the threshold at $\Phi =  \sqrt{LU}$. Since $\lam L + (1-\lam) \sqrt{LU} \leq \sqrt{LU}$, we have $ y < \sqrt{LU}$, therefore $\con_y = y/L$. Since $\Phi = \sqrt{LU}$, $\rob_y = \max\{\Phi / L, U/\Phi\} = \sqrt{\theta}$.

\vspace{5pt}

\noindent\textbf{Case II:} \colorbox{gray!20}{$y \in [\lam L + (1-\lam) \sqrt{LU}, \sqrt{LU}]$}. In this case, \OMS sets the threshold at $\Phi = y$. Obviously, $\con_y = 1$. Since $y \leq \sqrt{LU}$, we have $\rob_y = \max \{ \Phi / L, U/ \Phi \} = U/y $.

\vspace{5pt}

\noindent\textbf{Case III:} \colorbox{gray!20}{$y \in (\sqrt{LU}, U]$}. In this case, \OMS sets the threshold at $\Phi = \mu \sqrt{LU} + (1-\mu)y$, where $$\mu = \frac{(1-\lam) \sqrt{\theta}}{(1-\lam)\sqrt{\theta} + \lam}.$$ Since $\sqrt{LU} < y$, $\Phi = \mu \sqrt{LU} + (1-\mu)y \leq y$. Under worst-case construction, we conclude
\begin{align*}
    \con_y &= \frac{y}{\Phi} = \frac{y}{\mu \sqrt{LU} + (1-\mu) y} = \frac{(1-\lam)\sqrt{\theta}y + \lam y}{(1-\lam)U+\lam y},\\
    \rob_y &= \max\{\Phi / L, U / \Phi\} = \frac{\Phi}{L} = \frac{(1-\lam)U + \lam y}{(1-\lam)\sqrt{LU} + \lam L}.
\end{align*}

We then prove the \str optimality of \OMS. As in worst-case instances any deterministic algorithm performs equivalently to a threshold algorithm, it is not restrictive to only consider \texttt{OTA}s.

By considering the worst-case predictions for both prediction-specific consistency and robustness, we can conclude that \OMS is $\lam + (1-\lam)\sqrt{\theta}$-consistent and $\frac{\theta}{\lam + (1-\lam) \sqrt{\theta}}$-robust. Building on the lower bound established by Sun et al.~\cite{Sun2021}, \OMS is \weakly.

Consider $\Phi' \neq \Phi$. Let $(\con_y', \rob_y')$ denote the consistency and robustness of \texttt{OTA} that uses threshold $\Phi'$ with respect to $y$. 

Assume $y \in [L, \lam L + (1-\lam) \sqrt{LU})$. Since $\Phi' \neq \Phi = \sqrt{LU}$, $\rob_y' = \max\{\Phi'/L, U/\Phi'\} > \sqrt{LU} = \rob_y$. Therefore, $(\con_y, \rob_y)$ is Pareto optimal. 

Assume $y \in [\lam L + (1-\lam) \sqrt{LU}, \sqrt{LU}]$. Since $\Phi' \neq \Phi =  y$, $\con_y > 1 = \con_y$. Thus, $(\con_y, \rob_y)$ is Pareto optimal. 

Assume $y \in (\sqrt{LU}, U]$. If $\Phi' < \Phi$, then $\con_y' = y / \Phi' > y /\Phi = \con_y$. If $\Phi' > \Phi$, since $\Phi' > \Phi \geq \sqrt{LU}$, we have $\rob_y = \Phi / \sqrt{LU}$ and $\rob_y' = \Phi' / \sqrt{LU}$, thus $\rob_y' > \rob_y$. Therefore, $(\con_y, \rob_y)$ is Pareto optimal.

By~\Cref{def_strong}, \OMS is \strongly.
\end{proof}

\newpage

\section{Proofs for \Cref{sec:brittle}\label{appendix:brittle}}
In this section, we prove \Cref{thm:tol_1}, \ref{thm:tol_2} and \ref{thm:tol_3}.

\subsection{Proof of \Cref{thm:tol_1}}
\begin{proof}[Proof of \Cref{thm:tol_1}]
    We consider the following five cases to investigate the prediction-specific $\ep$-consistency and robustness. 

\vspace{5pt}
    
\noindent\textbf{Case I:} \colorbox{gray!20}{$y \in [L, M-2\ep]$}. 
In this case, $\ep$-Tolerant \OMS decides to set the threshold to $\Phi = \sqrt{LU}$ and the price range relevant to $\ep$-consistency is restricted to $[\max\{L, y-\ep\}, y + \ep]$. To determine the $\ep$-consistency, we assume that the highest price $x \in [\max\{L, y-\ep\}, y+\ep]$. Note that $y + \ep \leq (M-2\ep) + \ep < \sqrt{LU} = \Phi$. We have $\con_y^\ep = (y+\ep)/L$. To obtain the robustness, we consider incorrect predictions. $\rob_y = \max \{\Phi/L, U/ \Phi\} = \sqrt{\theta}$.

\vspace{5pt}

\noindent\textbf{Case II:} \colorbox{gray!20}{$y \in (M-2\ep, M)$}. 
In this case, $\ep$-Tolerant \OMS decides to set the threshold to $\Phi = M-\ep$ and the price range relevant to $\ep$-consistency is restricted to $[ y-\ep, y + \ep]$. For $\ep$-consistency, we observe the best attack can potentially occur at $x = M-\ep-\delta$ or $x = y+\epsilon$, where $\delta$ is the infinitesimal quantity. The corresponding ratios are $r_1 (M) \coloneqq (M-\ep)/L$ and $r_2 (M) \coloneqq (M+\ep)/(M-\ep)$. Note that $r_1 (M)$ is an increasing function of $M$ and $r_2 (M)$ is a decreasing function of $M$. Since $$r_1(L+3\ep) = \frac{L+2\ep}{L} \geq \frac{L+4\ep}{L+2\ep} = r_2 (L+3\ep),$$ we can conclude $r_1 (M) \geq r_2 (M)$, for all $M \in [L+3\ep, \sqrt{LU} - \ep]$.
thus $\con_y^\ep = (M-\ep)/L$. To obtain the robustness, we consider incorrect predictions. Since $M-\ep < \sqrt{LU}$, $\rob_y = U/(M-\ep)$.

\vspace{5pt}

\noindent\textbf{Case III:} \colorbox{gray!20}{$y \in [M, \sqrt{LU}+\ep]$}. 
In this case, $\ep$-Tolerant \OMS decides to set the threshold to $\Phi = y-\ep$ and the price range relevant to $\ep$-consistency is restricted to $[y-\ep, y + \ep]$. To determine the $\ep$-consistency, we assume that $x \in [y-\ep, y+\ep]$. Observe that the worst case occurs at $x = y+\ep$, we have $\con_y^\ep = (y+\ep)/(y-\ep)$. To obtain the robustness, we consider incorrect predictions. Note that $$y - \ep \leq (\sqrt{LU} + \ep) - \ep = \sqrt{LU}.$$ We have $\rob_y = U/\Phi = U / (y-\ep)$.

\vspace{5pt}

\noindent\textbf{Case IV:} \colorbox{gray!20}{$y \in (\sqrt{LU}+\ep, U - \ep)$}. 
In this case, $\ep$-Tolerant \OMS decides to set the threshold to $\Phi = \mu\sqrt{LU} + (1-\mu)(y-\ep)$ and the price range relevant to $\ep$-consistency is restricted to $[y-\ep, y+\ep]$. Since $\mu = \frac{(U-2\ep) - LU/(M-\ep)}{(U-2\ep) - \sqrt{LU}}$, by noting that
\begin{align*}
    (U-2\ep) - \frac{LU}{(L+3\ep - \ep )} &> 0,\\
     (U-2\ep) - \frac{LU}{(\sqrt{LU}-\ep - \ep)} &< (U-2\ep) - \sqrt{LU},
\end{align*}
we have $\mu \in (0, 1)$. To determine the $\ep$-consistency, we assume that $x \in [y-\ep,  y + \ep]$. Note that $\Phi \leq y-\ep$. We have $$\con_y^\ep = \frac{y+\ep}{\Phi} = \frac{y+\ep}{\mu\sqrt{LU} + (1-\mu)(y-\ep)}.$$ To obtain the robustness, we consider incorrect predictions. Note that $\Phi \geq \sqrt{LU}$. We have $\rob_y = \Phi/L =[\mu\sqrt{LU} + (1-\mu)(y-\ep)]/L$.

\vspace{5pt}

\noindent\textbf{Case V:}  \colorbox{gray!20}{$y \in [ U - \ep, U]$}. 
In this case, $\ep$-Tolerant \OMS decides to set the threshold to $\Phi = LU/(M-\ep)$ and the price range relevant to $\ep$-consistency is restricted to $[y-\ep, U]$. Since $\ep \leq (\sqrt{LU} - L)/4 < (U-L)/2$, we have 
\begin{align*}
    \Phi = \frac{LU}{M-\ep} \leq \frac{LU}{L+2\ep} \leq U-2\ep \leq y - \ep, 
\end{align*}
thus, $\con_y^\ep = \frac{U}{\Phi} =  \frac{M-\ep}{L}$. Furthermore, $\rob_y = \frac{\Phi}{L} = \frac{U}{M-\ep}$.

\vspace{2pt}

Then, we focus on the $\ep$-consistency and robustness of $\ep$-Tolerant \OMS.

\vspace{2pt}

In \textit{Case III}, we have
\[\con_y^\ep = \frac{(y+\ep)}{(y-\ep)} < \frac{L+2\ep}{L} \leq \frac{M-\ep}{L}.\]

\vspace{2pt}

In \textit{Case IV}, since $\mu = \frac{(U-2\ep) - LU/(M-\ep)}{(U-2\ep) - \sqrt{LU}}$, we get
\begin{align*}
    \con_y^\ep \leq \frac{U}{\mu \sqrt{LU} + (1-\mu)(U-2\ep)} = \frac{M-\ep}{L}.
\end{align*}
Similarly, we obtain
\begin{align*}
\rob_y = \frac{\mu\sqrt{LU} + (1-\mu)(y-\ep)}{L} \leq \frac{\mu\sqrt{LU} + (1-\mu)(U-2\ep)}{L} = \frac{U}{M-\ep}.
\end{align*}

By considering the worst-case $y$ over $\mY=[L,U]$, we conclude that $\ep$-Tolerant \OMS 's $\ep$-consistency and robustness is $\con^\ep = (M-\ep)/L$ and $\rob = U/(M-\eps)$, respectively.
\end{proof}

\subsection{Proof of \Cref{thm:tol_2}}
\begin{proof}[Proof of \Cref{thm:tol_2}]
By the lower bound provided by Sun et al.~\cite{Sun2021}, any $\rob$-robust algorithm must be at least $(\theta/\rob)$-consistent. By \Cref{def:cl_con_rob} and the definition of $\ep$-consistency, we have $\con^\ep \geq \con$ for any $\ep > 0$. Therefore, any $\rob$-robust algorithm has at least $(\theta/\rob)$ $\ep$-consistency.

Similarly, given $\ep > 0$, any algorithm that achieves $\con^\ep$ $\ep$-consistency is $\con^\ep$-consistent. Based on the lower bound by Sun et al.~\cite{Sun2021}, it must be at least $(\theta/ \con^\ep)$-robust.
\end{proof}

\subsection{Proof of \Cref{thm:tol_3}}

\begin{proof}[Proof of \Cref{thm:tol_3}]

By \Cref{thm:tol_1} and \Cref{thm:tol_2}, we conclude that $\ep$-Tolerant \PST's $\ep$-consistency and robustness are jointly Pareto optimal. %\sizhe{remind myself: not rigorous} 

To prove the Pareto optimality of prediction-specific $\ep$-consistency and robustness, we consider $\Phi' \neq \Phi$, which achieves $\ep$-consistency ${\con_y^\ep}'$ and robustness ${\rob_y}'$ with respect to $y$.

\vspace{5pt}
    
\noindent\textbf{Case I:} \colorbox{gray!20}{$y \in [L, M-2\ep]$}. 
In this case, $\rob_y = \sqrt{\theta}$. Note that $\Phi = \sqrt{LU}$ is the only threshold that achieves robustness $\sqrt{\theta}$. We can conclude that $(\con_y^\ep, \rob_y)$ is Pareto optimal.

\vspace{5pt}

\noindent\textbf{Case II:} \colorbox{gray!20}{$y \in (M-2\ep, M)$}. 
In this case, $\ep$-Tolerant \OMS sets the threshold to $\Phi = M-\ep$, achieving $\con_y^\ep = (M-\ep)/L$ and $\rob_y = U/(M-\ep)$. If $\Phi' < \Phi$, then $\rob_y' > \rob_y$. If $\Phi' > \Phi$, then consider the attack $x = \min \{\Phi' - \delta, y+\ep\}  $ for threshold $\Phi'$, where $\delta$ is the infinitesimal quantity. This makes $${\con_y^\ep}' \geq \min \{\Phi', y+\ep\} / L > (M-\ep)/L = \con_y^\ep.$$ Therefore, $(\con_y^\ep, \rob_y)$ is Pareto optimal.

\vspace{5pt}

\noindent\textbf{Case III:} \colorbox{gray!20}{$y \in [M, \sqrt{LU}+\ep]$}. 
In this case, $\ep$-Tolerant \OMS sets the threshold to $\Phi = y-\ep$, achieving $\con_y^\ep = (y+\ep)/(y-\ep)$ and $\rob_y = U/(y-\ep)$. Note that $\Phi = y-\ep < \sqrt{LU}$. If $\Phi' < \Phi$, then $\rob_y' > \rob_y$. If $\Phi' > \Phi$, consider $x = \min \{\Phi' - \delta, y+\ep\} $ for threshold $\Phi'$, where $\delta$ is the infinitesimal quantity. This guarantees the following inequality: $${\con_y^\ep}' \geq \min\{\Phi', y+\ep\}/L > (y-\ep)/L.$$ Note that $r'(y) \coloneqq (y-\ep)/L$ is an increasing function of $y$, and $r(y) \coloneqq (y+\ep)/(y-\ep)$ is a decreasing function of $y$. Since $y \geq M \geq L + 3\ep$ and $$r(L+3\ep) = (L+4\ep)/(L+2\ep) < (L+2\ep)/ L = r'(L+3\ep),$$ we have $r(y) < r'(y), \forall y \in [M, \sqrt{LU} + \ep]$. This gives $\con_y^\ep < {\con_y^\ep}'$. Therefore, $(\con_y^\ep, \rob_y)$ is Pareto optimal.

\vspace{5pt}

\noindent\textbf{Case IV:} \colorbox{gray!20}{$y \in (\sqrt{LU}+\ep, U - \ep)$}. 
In this case, $\ep$-Tolerant \OMS sets the threshold to $\Phi =  \mu\sqrt{LU} + (1-\mu)(y-\ep)$, achieving $\con_y^\ep = (y+\ep)/\Phi$ and $\rob_y = \Phi/L$. If $\Phi' < \Phi$, it follows that $${\con_y^\ep}' \geq (y+\ep)/\Phi' >(y+\ep)/\Phi = \con_y^\ep.$$ If $\Phi' > \Phi$, since $\Phi > \sqrt{LU} + \ep > \sqrt{LU}$, $\rob_y' > \rob_y$. Therefore, $(\con_y^\ep, \rob_y)$ is Pareto optimal.

\vspace{5pt}

\noindent\textbf{Case V:}  \colorbox{gray!20}{$y \in [ U - \ep, U]$}. 
In this case, $\ep$-Tolerant \OMS sets the threshold to $\Phi =  LU/(M-\ep)$, achieving $\con_y^\ep = (M-\ep)/L$ and $\rob_y = U/(M-\ep)$. If $\Phi' > \Phi$, then $\rob_y' > \rob_y$. If $\Phi' < \Phi$, consider highest price $x = U$ with $|x-y|\leq \ep$, ${\con_y^\ep}' \geq U/\Phi' > U/\Phi = \con_y^\ep$. Therefore, $(\con_y^\ep, \rob_y)$ is Pareto optimal.
\end{proof}

\end{document}